\theoremstyle{definition}
\newtheorem{theorem}{\textbf{Theorem}}
\newtheorem{lemma}{\textbf{Lemma}}
\newtheorem{proposition}{ \textbf{Proposition}}
\newtheorem{claim}{Claim}[section]
\newcommand{\1}{\vspace{.1in}}
\newcommand{\bc}{\begin{center}}
	\newcommand{\ec}{\end{center}}
\newcommand\DOI[2]{DOI: \href{#1#2}{#2}}
\newcommand{\ts}{\,}
\newenvironment{Published}{\begin{list}{}{\leftmargin=25pt}
    \item[]\small\noindent{Published as}}%
    {\end{list}}
    \newenvironment{Funding}{\begin{list}{}{\leftmargin=25pt}
        \item[]\small\noindent{The research}}%
        {\end{list}}
\newenvironment{Abstract}{\begin{list}{}{\leftmargin=25pt}
    \item[]\footnotesize\noindent{\sc \bf Abstract:}}%
{\end{list}}
\newenvironment{keywords}{\begin{list}{}{\leftmargin=25pt}
    \item[]\footnotesize\noindent{\sc \bf Keywords:}}%
{\end{list}}
\begin{document}
\begin{center}
\vspace{0.3cm}
{\large Andrzej Indrzejczak, 
Yaroslav Petrukhin\vspace{0.3cm}\\
\textbf{Uniform Cut-free Bisequent Calculi for Three-valued Logics}}

\ \\
\begin{footnotesize}
\noindent Department of Logic, University of Lodz, Poland,  \url{andrzej.indrzejczak@filhist.uni.lodz.pl},\\
\noindent Center for Philosophy of Nature, University of Lodz, Poland,  \url{yaroslav.petrukhin@gmail.com}\\
\end{footnotesize}
\end{center}
\begin{Published}
 Indrzejczak, A., Petrukhin, Y. (2024)
Uniform Cut-Free Bisequent Calculi for Three-Valued Logics. Logic and Logical Philosophy. Vol. 33, no. 3, pp. 463--506. 
\DOI{http://dx.doi.org/}{10.12775/LLP.2024.019}
\bigskip
\end{Published}
\begin{Funding}
in this paper was funded by the European Union
 (ERC, ExtenDD, project number: 101054714). Views and opinions expressed
 are however those of the author(s) only and do not necessarily reflect those of
 the European Union or the European Research Council. Neither the European
 Union nor the granting authority can be held responsible for them.
\end{Funding}
 \begin{Abstract}
We present a uniform characterisation of three-valued logics by means of the bisequent calculus (BSC). It is a generalised form of a sequent calculus (SC) where rules operate on the ordered pairs of ordinary sequents. BSC may be treated as the weakest kind of system in the rich family of generalised SC operating on items being some collections of ordinary sequents, like hypersequent and nested sequent calculi. It seems that for many non-classical logics, including some many-valued, paraconsistent and modal logics, the reasonably modest generalisation of standard SC offered by BSC is sufficient. In this paper we examine a variety of three-valued logics and show how they can be formalised in the framework of BSC. We present a constructive syntactic proof provided that these systems are cut-free, satisfy the subformula property, and allow one to prove the interpolation theorem in many cases.
 \end{Abstract}
 \begin{keywords}
Bisequent Calculus, Cut elimination, Many-valued Logic, Three-valued logic, Interpolation Theorem
 \end{keywords}
\paragraph{Preliminary note.} This paper is an extended version of the conference paper \cite{IndrzejczakPetrukhinCADE}. Here, we present the results in more detail, providing several examples of proofs in sections \ref{StrongKleene} and \ref{OtherLogics} which were missing in the conference paper. Moreover, two new sections have been added: section \ref{Comments} explains the nature of uniformity of BSC, section \ref{Syntactic} presents a constructive syntactical proof of admissibility of structural rules, including cut. Sections \ref{Intro}, \ref{Logics}, \ref{Interpolation} and \ref{Conclusion} are more or less the same as in \cite{IndrzejczakPetrukhinCADE}.

\section{Introduction}\label{Intro}
The aim of this paper is to provide a uniform characterisation of a variety of three-valued logics by means of a  simple cut-free generalised sequent calculus (SC) called the bisequent calculus (BSC).  It is the weakest kind of system in the rich family of generalised sequent calculi operating
on collections of ordinary sequents \cite{Indrzejczak1}. If we restrict our interest to structures built of two sequents only, we obtain a limiting case of either hypersequent or nested sequent calculi (see \cite{LellmannPoggiolesi,Poggiolesi} for more information about nested sequent calculi); it is what we call a \textit{bisequent calculus}. 

 Is such a restricted calculus of any use? Hypersequent calculi already may be seen as quite restrictive form of generalised SC, yet they have been shown to be useful in many fields (see, e.g., \cite{Indrzejczak2} for a survey of applications of hypersequent calculi in modal logic, and \cite{MetcalfeOlivettiGabbay} for their use in fuzzy logic). BSC is even more restrictive but preliminary work on its application is promising.
It have already been succesfully applied by Indrzejczak to first-order modal logic {\bf S5} \cite{Indrzejczak1} and to the class of four-valued quasi-relevant logics \cite{Indrzejczak3}\footnote{A similar sequent system, but with a different semantic interpretation, was recently applied also by Fjellstad \cite{Fjellstad1,Fjellstad2}.}. In what follows we will focus on another application of such a minimal framework -- to three-valued logics.

Several proof systems of different kinds have been proposed so far for many-valued logics (see e.g. \cite{hah:tab01} or \cite{baaz01} for a survey).
The most direct and popular approach to the construction of many-valued sequent or tableau systems is based on the idea of the syntactic
representation of $n$ values either by means of $n$-sided sequents or by $n$ labels attached to formulae or sets of formulae. This solution has been
presented by many authors and despite its popularity has many drawbacks (see \cite{Indrzejczak2}  for discussion). A significant improvement in the construction of efficient SC or tableau systems for many-valued logic was proposed independently by Doherty \cite{doh:con91} and H\"ahnle \cite{hah:aut94}, where labels
correspond not to single values but to their sets (set-as-signs approach). Although BSC is a strictly syntactical calculus its semantic interpretation makes it more similar to this latter approach; a fuller discussion of this issue is provided in \cite{Indrzejczak3}.

BSC is uniform in the sense that all three-valued logics are characterised by the same set of axiomatic sequents, and in the case of logics having the same set of connectives (i.e. defined in the same way) the rules are identical even if the set of designated values or the consequence relation is defined in a different way. In this sense BSC is more uniform than several other approaches where either the set of axioms must be changed or rules for connectives must be different (even if described by means of the same table). In particular, BSC is superior in this respect to the generalised calculus for three-valued logics presented in \cite{Indrzejczak2}.

Section \ref{Logics} has an encyclopaedic character and is devoted to a description of a representative selection of three-valued logics. Section \ref{StrongKleene} contains a case study of BSC for {\bf K$_3$} and {\bf LP}. In Section \ref{OtherLogics}, we provide rules for connectives of all logics introduced in section \ref{Logics}. Section \ref{Comments} illustrates the process of encoding arbitrary operations as rules of BSC. Section \ref{Syntactic} is devoted to the syntactic proof of admissibility of structural rules. Section \ref{Interpolation} shows how BSC can be applied to prove interpolation for some three-valued paraconsistent and paracomplete logics. Section \ref{Conclusion} makes concluding remarks.

\section{Logics}\label{Logics}
We will examine several three-valued propositional logics determined by three element matrices with classical-like connectives (negation, disjunction, conjunction, and implication, plus the usual three-valued modal-style connectives). We are not going to consider other types of connectives except Pa\l{}asi\'{n}ska's connectives considered in section \ref{Comments}. 
The languages of these logics are freely generated algebras similar to three element algebras of values. Logics are interpreted by homomorphisms from languages to algebras such that $h(c^n(\varphi_1, \ldots, \varphi_n))= \underline{c}(h(\varphi_1), \ldots, h(\varphi_n))$ for every $n-$argument connective $c$ and the corresponding operation $\underline{c}$.  

Let us consider as the starting point two three element Kleene's algebras of the form: $\mathfrak{A}_3 = \langle A, O\rangle$ where $A=\{0, u, 1\}$ and $O$ contains an unary
operation $\underline{\neg}:A\longrightarrow A$ and binary operations $\odot: A\times A\longrightarrow A $,
where $\odot \in \{\underline{\wedge} , \underline{\vee} , \underline{\rightarrow} \}$. 
The operations are defined by
the following truth tables in the strong and weak Kleene algebra; the latter is also examined by Bochvar \cite{Bochvar} (negation is the same in both):

\begin{center}
	\begin{tabular}{|c|ccc|}
		\hline
		$\underline{\wedge} $ & 1 & u & 0 \\
		\hline
		1 & 1 & u & 0 \\
		u & u & u & 0 \\
		0 & 0 & 0 & 0 \\
		\hline
	\end{tabular}
	\begin{tabular}{|c|ccc|}
		\hline$\underline{\vee} $ & 1 & u & 0 \\\hline
		1 & 1 & 1 & 1 \\
		u & 1 & u & u \\
		0 & 1 & u & 0 \\\hline
	\end{tabular}
	\begin{tabular}{|c|ccc|}
		\hline
		$\underline{\rightarrow} $ & 1 & u & 0 \\
		\hline
		1 & 1 & u & 0 \\
		u & 1 & u & u \\
		0 & 1 & 1 & 1 \\
		\hline
	\end{tabular}
	\begin{tabular}{|c|c|}
		\hline  & $\underline{\neg} $\\\hline
		1 & 0\\
		u & u \\
		0 & 1 \\\hline
	\end{tabular}
\end{center} 
\begin{center} 	
\begin{tabular}{|c|ccc|}
	\hline
	$\underline{\wedge_w} $ & 1 & u & 0 \\
	\hline
	1 & 1 & u & 0 \\
	u & u & u & u \\
	0 & 0 & u & 0 \\
	\hline
\end{tabular}
\begin{tabular}{|c|ccc|}
	\hline$\underline{\vee_w} $ & 1 & u & 0 \\\hline
	1 & 1 & u & 1 \\
	u & u & u & u \\
	0 & 1 & u & 0 \\\hline
\end{tabular}
\begin{tabular}{|c|ccc|}
	\hline
	$\underline{\rightarrow_w} $ & 1 & u & 0 \\
	\hline
	1 & 1 & u & 0 \\
	u & u & u & u \\
	0 & 1 & u & 1 \\
	\hline
\end{tabular}
\end{center}

We obtain four matrices by specifying a set of designated values $D$ either as $\{ 1 \}$ or $\{1, u \}$. 
These are called $\mathfrak{SM}_1^3$, $\mathfrak{SM}_2^3$, $\mathfrak{WM}_1^3$ and $\mathfrak{WM}_2^3$ (where $\mathfrak{S}$ stands for strong, $\mathfrak{W}$ for weak, 1 and 2 indicate the amount of designated values). In general we will call matrices with $D = \{ 1 \}$ 1-matrices, and with $D= \{1, u \}$ -- 2-matrices. Accordingly we will also call logics determined by 1-matrices and 2-matrices, 1- and 2-logics respectively.
For any matrix we define a relation of matrix consequence in the following way:

\1

$\Gamma \models_M\varphi $ iff for any homomorphism $h$:
if $h(\Gamma)\subseteq D$, then $h(\varphi)\in D$. 

\1

Logics are identified with their matrix consequences. In particular,
logics determined by these matrices are {\bf K$_3$} (strong Kleene 1-logic) \cite{Kleene}, {\bf LP} -- the logic of paradox of Asenjo and Priest (corresponding 2-logic) \cite{Asenjo,Priest}, {\bf K}$_3^{\bf w}$ (weak Kleene 1-logic) \cite{Kleene}, {\bf PWK} (paraconsistent weak Kleene 2-logic) of Halld\'{e}n \cite{Hallden}. 

Let us consider a few modifications of strong and weak Kleene logics. Here is McCarthy's logic {\bf K$_3^\rightarrow$} \cite{McCarty} (also called Kleene's sequential and studied by Fitting \cite{Fitting94})  
and its interesting modification presented by Ko\-men\-dant\-skaya \cite{Komendantskaya} under the name  {\bf K$_3^\leftarrow$} by means of the following truth tables (again, negation is unchanged):

\begin{center}
	\begin{tabular}{|c|ccc|}
		\hline
		$\underline{\wedge_{mC}} $ & 1 & u & 0 \\
		\hline
		1 & 1 & u & 0 \\
		u & u & u & u \\
		0 & 0 & 0 & 0 \\
		\hline
	\end{tabular}
	\begin{tabular}{|c|ccc|}
		\hline$\underline{\vee_{mC}} $ & 1 & u & 0 \\\hline
		1 & 1 & 1 & 1 \\
		u & u & u & u \\
		0 & 1 & u & 0 \\\hline
	\end{tabular}
	\begin{tabular}{|c|ccc|}
		\hline
		$\underline{\rightarrow_{mC}} $ & 1 & u & 0 \\
		\hline
		1 & 1 & u & 0 \\
		u & u & u & u \\
		0 & 1 & 1 & 1 \\
		\hline
	\end{tabular}
	\end{center} 
	\begin{center}
	\begin{tabular}{|c|ccc|}
		\hline
		$\underline{\wedge_K} $ & 1 & u & 0 \\
		\hline
		1 & 1 & u & 0 \\
		u & u & u & 0 \\
		0 & 0 & u & 0 \\
		\hline
	\end{tabular}
	\begin{tabular}{|c|ccc|}
		\hline$\underline{\vee_K} $ & 1 & u & 0 \\\hline
		1 & 1 & u & 1 \\
		u & 1 & u & u \\
		0 & 1 & u & 0 \\\hline
	\end{tabular}
	\begin{tabular}{|c|ccc|}
		\hline
		$\underline{\rightarrow_K} $ & 1 & u & 0 \\
		\hline
		1 & 1 & u & 0 \\
		u & 1 & u & u \\
		0 & 1 & u & 1 \\
		\hline
	\end{tabular}
\end{center} 

Both {\bf K$_3^\rightarrow$} and {\bf K$_3^\leftarrow$} are logics determined by 1-matrices. An important property of {\bf K$_3$}, {\bf K}$_3^{\bf w}$, {\bf K$_3^\rightarrow$}, and {\bf K$_3^\leftarrow$} is that they are the only three-valued logics with one designated value which produce partial recursive predicates (see \cite{Kleene,Komendantskaya} for more details).

Several other important logics are obtained by changing the definitions of $\rightarrow$ and $ \neg $. Consider \L ukasiewicz's \cite{Lukasiewicz}, S\l upecki's \cite{Slupecki67}, Heyting's \cite{Heyting} implications as well as Heyting's \cite{Heyting}, Bochvar's \cite{Bochvar}, Post's and dual Post's \cite{Post21} negations. Let us also consider yet another pair of additive conjunction and disjunction, arising in \L ukasiewicz's logic:

\begin{center}
	\begin{tabular}{|c|ccc|}
		\hline
		$\underline{\rightarrow_L} $ & 1 & u & 0 \\
		\hline
		1 & 1 & u & 0 \\
		u & 1 & 1 & u \\
		0 & 1 & 1 & 1 \\
		\hline
	\end{tabular}
	\begin{tabular}{|c|ccc|}
		\hline $\underline{\rightarrow_{Sl}} $ & 1 & u & 0\\\hline
		1 & 1 & u & 0 \\
		u & 1 & 1 & 1 \\
		0 & 1 & 1 & 1 \\	\hline
	\end{tabular}
	\begin{tabular}{|c|ccc|}
		\hline $\underline{\rightarrow_H} $ & 1 & u & 0\\\hline
		1 & 1 & u & 0 \\
		u & 1 & 1 & 0 \\
		0 & 1 & 1 & 1 \\	\hline
	\end{tabular}
	\begin{tabular}{|c|c|}
		\hline  & $\underline{\neg_H} $\\\hline
		1 & 0\\
		u & 0 \\
		0 & 1 \\\hline
	\end{tabular}
	\end{center} 
	\begin{center}
	\begin{tabular}{|c|c|}
		\hline  & $\underline{\neg_B} $\\\hline
		1 & 0\\
		u & 1 \\
		0 & 1 \\\hline
	\end{tabular}
	\begin{tabular}{|c|c|}
		\hline  & $\underline{\neg_P} $\\\hline
		1 & u\\
		u & 0 \\
		0 & 1 \\\hline
	\end{tabular}
	\begin{tabular}{|c|c|}
		\hline  & $\underline{\neg_{\mathit{DP}}} $\\\hline
		1 & 0\\
		u & 1 \\
		0 & u \\\hline
	\end{tabular}
	\begin{tabular}{|c|ccc|}
		\hline
		$\underline{\wedge_L} $ & 1 & u & 0 \\
		\hline
		1 & 1 & u & 0 \\
		u & u & 0 & 0 \\
		0 & 0 & 0 & 0 \\
		\hline
	\end{tabular}
	\begin{tabular}{|c|ccc|}
		\hline
		$\underline{\vee_L} $ & 1 & u & 0 \\
		\hline
		1 & 1 & 1 & 1 \\
		u & 1 & 1 & u \\
		0 & 1 & u & 0 \\
		\hline
	\end{tabular}
\end{center}

$\mathfrak{SM}_1^3$ with \L ukasiewicz's implication (instead of Kleene's one) yields \L ukasiewicz's famous {\bf \L$_3$}, the first many-valued logic. In {\bf \L$_3$} we may deal with two pairs of conjunction and disjunction.  
We have: $ \varphi\vee\psi=(\varphi\rightarrow_L\psi)\rightarrow_L\psi $ and $\varphi\wedge\psi=\neg(\neg\varphi\vee\neg\psi)  $, but $  \varphi\wedge_L\psi=\neg(\varphi\rightarrow_L\neg \psi)$ and $ \varphi\vee_L\psi=\neg \varphi\rightarrow_L \psi$. $\mathfrak{SM}_1^3$ with S\l upecki's implication is an alternative to {\bf \L$_3$} having the deduction theorem. It was studied by S\l{}upecki, Bryll, and Prucnal \cite{Slupecki67} as well as Avron  \cite{Avron}, under the name {\bf GM$_3$}. If we change negation and implication of $\mathfrak{SM}_1^3$ to Heyting's ones, then we get Heyting's \cite{Heyting} logic {\bf G$_3$}, a close relative of intuitionistic logic (named after G\"{o}del who also studied this logic \cite{Godel}; this logic was investigated by Ja\'{s}kowski as well \cite{Jaskowski}). 
The disjunction of $\mathfrak{SM}_1^3$ and Post's cyclic negation form Post's logic {\bf P$_3$} \cite{Post21} which is known for being functionally complete in the three-valued setting. In \cite{PetrukhinP}, a dual cyclic negation $ \neg_{\mathit{DP}} $ was suggested (it reverses the direction of cyclicality of Post's negation). $\mathfrak{SM}_2^3$ with Heyting's implication and Bochvar's negation was investigated by Osorio and Carballido \cite{OsorioCarballido08} under the name {\bf G}$_3^\prime$. 

In the case of $\mathfrak{SM}_2^3$ the following connectives are interesting as well: Soboci\'nski's \cite{Sobocinski} conjunction, disjunction and two implications, as well as D'Ottaviano/DaCosta/Ja\'{s}kowski/S\l{}upecki's \cite{DOttavianodaCosta,Jaskowski48,Slupecki39} implication $\rightarrow_J$:

\begin{center}
	\begin{tabular}{|c|ccc|}
		\hline
		$\underline{\wedge_S} $ & 1 & u & 0 \\
		\hline
		1 & 1 & 1 & 0 \\
		u & 1 & u & 0 \\
		0 & 0 & 0 & 0 \\
		\hline
	\end{tabular}
	\begin{tabular}{|c|ccc|}
		\hline$\underline{\vee_S} $ & 1 & u & 0 \\\hline
		1 & 1 & 1 & 1 \\
		u & 1 & u & 0 \\
		0 & 1 & 0 & 0 \\\hline
	\end{tabular}
	\begin{tabular}{|c|ccc|}
		\hline
		$\underline{\rightarrow_S} $ & 1 & u & 0 \\
		\hline
		1 & 1 & 0 & 0 \\
		u & 1 & u & 0 \\
		0 & 1 & 1 & 1 \\
		\hline
	\end{tabular}
	\end{center} 
	\begin{center}
	\begin{tabular}{|c|ccc|}
		\hline
		$\underline{\rightarrow_S^\prime} $ & 1 & u & 0 \\
		\hline
		1 & 1 & u & 0 \\
		u & 1 & 1 & 0 \\
		0 & 1 & u & 1 \\
		\hline
	\end{tabular}
	\begin{tabular}{|c|ccc|}
		\hline$\underline{\rightarrow_J} $ & 1 & u & 0 \\\hline
		1 & 1 & u & 0 \\
		u & 1 & u & 0 \\
		0 & 1 & 1 & 1 \\\hline
	\end{tabular}
\end{center}

Soboci\'nski's logic {\bf S$_3$} is obtained from $\mathfrak{SM}_2^3$ by the replacement of all binary connectives of this matrix with Soboci\'nski's original ones. This logic may be treated as a relevant logic. However, a more popular three-valued relevant logic is Anderson and Belnap's {\bf RM$_3$} \cite{EntailmentI} which is obtained from $\mathfrak{SM}_2^3$ only by the replacement of its implication with Soboci\'nski's one. Note that earlier Soboci\'nski \cite{Sobocinski1} considered yet another implication $ \rightarrow_S^\prime $ defined above.

$\mathfrak{SM}_2^3$ with D'Ottaviano/DaCosta/Ja\'{s}kowski/S\l{}upecki's implication (first mentioned by S\l{}upecki \cite{Slupecki39}) instead of Kleene's one was independently studied by several authors: D'Ottaviano and da Costa themselves \cite{DOttavianodaCosta,daCosta}, Asenjo and Tamburino \cite{AsenjoTamburino}, Batens \cite{Batens} (under the name PI$^s$), Avron \cite{Avron86} (under the name {\bf RM$_3^\supset$}), and Rozonoer \cite{Rozonoer} (under the name \textbf{PCont}). An important extension of this logic is {\bf J$_3$} by D'Ottaviano and da Costa \cite{DOttavianodaCosta}. It has an additional connective which is \L ukasiewicz's tabular possibility operator (see below; we also present \L ukasiewicz's tabular necessity operator). 

\begin{center}
	\begin{tabular}{|c|ccc|}
		\hline
		$\underline{\wedge_C} $ & 1 & u & 0 \\
		\hline
		1 & 1 & 0 & 0 \\
		u & 0 & 0 & 0 \\
		0 & 0 & 0 & 0 \\
		\hline
	\end{tabular}
	\begin{tabular}{|c|ccc|}
		\hline$\underline{\vee_C} $ & 1 & u & 0 \\\hline
		1 & 1 & 1 & 1 \\
		u & 1 & 0 & 0 \\
		0 & 1 & 0 & 0 \\\hline
	\end{tabular}
	\begin{tabular}{|c|ccc|}
		\hline
		$\underline{\rightarrow_C} $ & 1 & u & 0 \\
		\hline
		1 & 1 & 0 & 0 \\
		u & 1 & 1 & 1 \\
		0 & 1 & 1 & 1 \\
		\hline
	\end{tabular}
\begin{tabular}{|c|c|}
	\hline & $\underline{\Diamond}$\\\hline
	1 & 1\\
	u & 1\\
	0 & 0\\\hline
\end{tabular}
\begin{tabular}{|c|c|}
	\hline & $\underline{\Box}$\\\hline
	1 & 1\\
	u & 0\\
	0 & 0\\\hline
\end{tabular}	
\end{center}

As is easy to guess, since {\bf RM$_3$} may be viewed as a relevant logic, it should be paraconsistent as well. Moreover, {\bf J$_3$}, {\bf S$_3$}, {\bf LP}, and many other three-valued logics with two designated values are paraconsistent (in contrast, three-valued logics with one designated value are paracomplete). One of the most famous three-valued paraconsistent logics is Sette's logic {\bf P$ ^1 $} \cite{Sette}. It has Bochvar's negation and the above presented binary connectives $\wedge_C, \vee_C$  (both 1 and u are designated). There is a version of {\bf P$ ^1 $} with Kleene's negation introduced by Carnielli and Marcos \cite{CarnielliMarcos,Marcos} and called {\bf P$ ^2 $}.

A paracomplete companion of {\bf P$ ^1 $}, the logic {\bf I$ ^1 $}, was presented by Sette and Carnielli \cite{SetteCarnieli}: it has Heyting's negation and the binary connectives presented below (the implication was first introduced by Bochvar \cite{Bochvar}). Its version with Kleene's negation is {\bf I$ ^2 $} due to Marcos \cite{Marcos}. Both {\bf I$ ^1 $} and {\bf I$ ^2 $} have one designated value.

	\begin{center}
		\begin{tabular}{|c|ccc|}
			\hline
			$\underline{\wedge_{Se}} $ & 1 & u & 0 \\
			\hline
			1 & 1 & 1 & 0 \\
			u & 1 & 1 & 0 \\
			0 & 0 & 0 & 0 \\
			\hline
		\end{tabular}
		\begin{tabular}{|c|ccc|}
			\hline$\underline{\vee_{Se}} $ & 1 & u & 0 \\\hline
			1 & 1 & 1 & 1 \\
			u & 1 & 1 & 1 \\
			0 & 1 & 1 & 0 \\\hline
		\end{tabular}
		\begin{tabular}{|c|ccc|}
			\hline
			$\underline{\rightarrow_{Se}} $ & 1 & u & 0 \\
			\hline
			1 & 1 & 1 & 0 \\
			u & 1 & 1 & 0 \\
			0 & 1 & 1 & 1 \\
			\hline
		\end{tabular}
\end{center} 
\begin{center}		
	\begin{tabular}{|c|ccc|}
		\hline
		$\underline{\rightarrow_R} $ & 1 & u & 0 \\
		\hline
		1 & 1 & 0 & 0 \\
		u & 1 & 1 & 0 \\
		0 & 1 & 1 & 1 \\
		\hline
	\end{tabular}
	\begin{tabular}{|c|ccc|}
		\hline
		$\underline{\rightarrow_T} $ & 1 & u & 0 \\
		\hline
		1 & 1 & 0 & 0 \\
		u & 1 & 1 & u \\
		0 & 1 & 1 & 1 \\
		\hline
	\end{tabular}
	
\end{center}

Last but not least, let us mention Rescher's \cite{Rescher} and Tomova's \cite{Tomova12} implications (given above). These implications can be added to $\mathfrak{SM}_1^3$. Tomova \cite{Tomova12} introduced the concept of natural implication. In the three-valued case with one designated value there are only 6 natural implications: \L ukasiewicz's, S\l{}upecki's, Heyting's, Bochvar's, Rescher's, and Tomova's. In the case with two designated values there are 24 natural implications, including Heyting's and Rescher's as well as both Soboci\'{n}ski's implications, D'Ottaviano/DaCosta/Ja\'{s}kowski's/S\l{}upecki's, and Sette's implications.

\section{Bisequent Calculus for K$_3$ (and LP)}\label{StrongKleene}

Bisequents in BSC are ordered pairs of sequents $\Gamma\Rightarrow\Delta \mid \Pi\Rightarrow\Sigma$, where $\Gamma, \Delta, \Pi, \Sigma$ are
finite (possibly empty) multisets of formulae. We will call the elements of a bisequent a 1- and 2-sequent respectively. Bisequents with all elements being atomic will be also called atomic. In what follows $ B $ stands for
arbitrary bisequents and $ S $ for sequents.

Let us define the calculus BSC-K$_3$ which provides an adequate formalization of {\bf K$_3$}. 
A bisequent $\Gamma\Rightarrow\Delta \mid \Pi\Rightarrow\Sigma$ is axiomatic iff it has nonempty $\Gamma\cap\Sigma$ or $\Gamma\cap\Delta$ or $\Pi\cap\Sigma$. 
In fact this set of axioms is fixed for all calculi considered in the next section. Also if constants $\top, \bot, U$ (the last for fixed undefined proposition) are added we must add axioms of the form: $\Gamma\Rightarrow\Delta, \top \mid \Pi\Rightarrow\Sigma$; $\Gamma\Rightarrow\Delta \mid \Pi\Rightarrow\Sigma, \top$; $\bot, \Gamma\Rightarrow\Delta \mid \Pi\Rightarrow\Sigma$; $\Gamma\Rightarrow\Delta \mid \bot, \Pi\Rightarrow\Sigma$; $U, \Gamma\Rightarrow\Delta \mid \Pi\Rightarrow\Sigma$ and $\Gamma\Rightarrow\Delta \mid \Pi\Rightarrow\Sigma, U$.

The set of rules characterising the operations of the strong Kleene algebra consists of the following schemata:

\1

\noindent\begin{tabular}{ll}

	$(\mathord{\neg}\mathord{\Rightarrow\mid})$ \ $\dfrac{\Gamma
		\Rightarrow \Delta \mid \Pi\Rightarrow\Sigma, \varphi}{\neg\varphi, \Gamma \Rightarrow \Delta \mid \Pi\Rightarrow\Sigma}$ &
	$(\mathord{\Rightarrow}\mathord{\neg\mid})$ \ $\dfrac{\Gamma
		\Rightarrow \Delta \mid \varphi, \Pi\Rightarrow\Sigma}{\Gamma
		\Rightarrow \Delta, \neg\varphi \mid \Pi\Rightarrow\Sigma}$\\[16pt]
	
	$(\mathord{\mid\neg}\mathord{\Rightarrow})$ \ $\dfrac{\Gamma
		\Rightarrow \Delta, \varphi \mid \Pi\Rightarrow\Sigma}{\Gamma \Rightarrow \Delta \mid \neg\varphi, \Pi\Rightarrow\Sigma}$ &
	$(\mathord{\mid\Rightarrow}\mathord{\neg})$ \ $\dfrac{\varphi, \Gamma
		\Rightarrow \Delta \mid \Pi\Rightarrow\Sigma}{\Gamma
		\Rightarrow \Delta \mid \Pi\Rightarrow\Sigma, \neg\varphi}$\\[16pt]
\end{tabular}
	
\noindent\begin{tabular}{ll}	
	$(\mathord{\wedge}\mathord{\Rightarrow\mid})$ \ $\dfrac{\varphi, \psi,
		\Gamma \Rightarrow \Delta \mid S}{\varphi\wedge\psi, \Gamma
		\Rightarrow \Delta \mid S}$ &
	 $(\mathord{\Rightarrow}\mathord{\wedge\mid})$ \ $\dfrac{\Gamma
		\Rightarrow \Delta, \varphi \mid S \qquad
		\Gamma\Rightarrow \Delta,
		\psi \mid S}{\Gamma \Rightarrow \Delta,
		\varphi\wedge\psi \mid S}$\\[16pt]
	
 $(\mathord{\mid\wedge}\mathord{\Rightarrow})$ \ $\dfrac{S \mid \varphi, \psi,
		\Gamma \Rightarrow \Delta}{S \mid \varphi\wedge\psi, \Gamma
		\Rightarrow \Delta}$ &
 $(\mathord{\mid\Rightarrow}\mathord{\wedge})$ \ $\dfrac{S\mid \Gamma
		\Rightarrow \Delta, \varphi \qquad
		S\mid \Gamma\Rightarrow \Delta,
		\psi}{S\mid\Gamma \Rightarrow \Delta,
		\varphi\wedge\psi }$\\[16pt]
\end{tabular}

\noindent\begin{tabular}{ll}	
	
 $(\mathord{\Rightarrow}\mathord{\vee}\mid)$ \ $\dfrac{\Gamma
		\Rightarrow \Delta, \varphi,\psi \mid S} {\Gamma\Rightarrow \Delta,
		\varphi\vee\psi \mid S}$ &
	 $(\mathord{\vee}\mathord{\Rightarrow}\mid)$ \ $\dfrac{\varphi, \Gamma
		\Rightarrow \Delta \mid S \qquad \psi, \Gamma
		\Rightarrow \Delta \mid S}{\varphi\vee\psi,
		\Gamma\Rightarrow \Delta \mid S}$\\[16pt]

	 $(\mathord{\mid\Rightarrow}\mathord{\vee})$ \ $\dfrac{S\mid\Gamma
		\Rightarrow \Delta, \varphi,\psi } {S\mid\Gamma\Rightarrow \Delta,
		\varphi\vee\psi}$ &
 $(\mathord{\mid\vee}\mathord{\Rightarrow})$ \ $\dfrac{S\mid\varphi, \Gamma
		\Rightarrow \Delta \qquad S\mid \psi, \Gamma
		\Rightarrow \Delta }{S\mid\varphi\vee\psi,
		\Gamma \Rightarrow \Delta}$\\[16pt]	
\end{tabular}
\begin{flushleft}
	$(\mathord{\Rightarrow}\mathord{\rightarrow}\mid)$ \, $\dfrac{\Gamma
		\Rightarrow \Delta, \psi \mid \varphi, \Pi\Rightarrow\Sigma}{\Gamma\Rightarrow\Delta, \varphi\rightarrow\psi \mid \Pi\Rightarrow\Sigma}$\qquad
	$(\mathord{\mid\Rightarrow}\mathord{\rightarrow})$ \, $\dfrac{\varphi, \Gamma
		\Rightarrow \Delta \mid\Pi\Rightarrow\Sigma, \psi}{\Gamma\Rightarrow\Delta \mid \Pi\Rightarrow\Sigma, \varphi\rightarrow\psi}$	
\end{flushleft}
\begin{flushleft}
	$(\mathord{\rightarrow}\mathord{\Rightarrow}\mid)$ \, $\dfrac{\Gamma
		\Rightarrow \Delta \mid \Pi\Rightarrow\Sigma, \varphi \quad\psi, \Gamma\Rightarrow\Delta \mid
		\Pi \Rightarrow \Sigma } {\varphi\rightarrow\psi, \Gamma
		\Rightarrow \Delta \mid \Pi\Rightarrow\Sigma}$
\end{flushleft}
\begin{flushleft}
$(\mathord{\mid\rightarrow}\mathord{\Rightarrow})$ \, $\dfrac{\Gamma
	\Rightarrow \Delta, \varphi \mid \Pi\Rightarrow\Sigma \quad \Gamma\Rightarrow\Delta \mid \psi,
	\Pi \Rightarrow \Sigma } {\Gamma
	\Rightarrow \Delta \mid \varphi\rightarrow\psi, \Pi\Rightarrow\Sigma}$	
\end{flushleft}

Note that all rules satisfy the subformula property and other desirable properties of well-behaved SC, like explicitness, separation, symmetry (see \cite{Poggiolesi,Wansing, Indrzejczak2} for a discussion). In particular, they are context independent in the sense that no side conditions are constrained and as a consequence all rules are permutable. This feature will be of special importance for the proof of the Interpolation Theorem in section \ref{Interpolation}. One may easily observe that in case of rules for strong $\wedge, \vee$ we have just standard G3 rules but repeated in both components. Rules for negation and implication have different character since side and principal formula are in different sequents in all cases.

Bisequents as such do not directly correspond to standard consequence relations in suitable matrices. Hence 
before we define the notion of a proof in BSC-K$_3$ (or any other logic) it is better to start with a more general concept. A proof-search tree for a bisequent $B$ in BSC-L, where L is any logic, is a tree of bisequents with $B$ as the root and nodes generated by rules of BSC-L. A proof-search tree is complete iff every leaf is atomic, and it is axiomatic iff all leaves are axiomatic. The height of a proof-search tree is defined as the length of the maximal branches. A simple consequence of the subformula property of rules is:

\begin{proposition}
	Every proof-search tree may be extended to a complete proof-search tree
\end{proposition}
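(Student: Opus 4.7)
The plan is to proceed by induction on a well-founded complexity measure attached to bisequents, repeatedly extending any non-atomic leaf by applying a rule whose conclusion matches it. Concretely, I would assign to each formula $\varphi$ its usual syntactic complexity $c(\varphi)$ (number of connective occurrences), and to a bisequent $\Gamma\Rightarrow\Delta\mid\Pi\Rightarrow\Sigma$ the finite multiset of complexities of its non-atomic formulae taken over all four components. A bisequent is atomic exactly when this multiset is empty, and multisets of natural numbers are well-ordered under the Dershowitz--Manna ordering, so an induction on this measure is legitimate.

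The key step is the reduction lemma: if a leaf $B$ of a proof-search tree is not atomic, then some rule of BSC-L can be applied with $B$ as its conclusion, and each premise has strictly smaller complexity measure than $B$. To verify this I would simply inspect the rule list: for every connective ($\neg$, $\wedge$, $\vee$, $\rightarrow$) and every one of the four possible positions (antecedent or succedent of the 1- or 2-sequent) exactly one rule is provided whose principal formula sits in that slot, and since every such rule is context-independent it applies unconditionally. The subformula property then guarantees that every premise replaces the principal formula by some of its immediate subformulae, so the corresponding complexity multiset is strictly smaller than that of $B$.

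Given the reduction lemma, the proof of the proposition itself is a straightforward recursion. Starting from the tree at hand, pick any non-atomic leaf, apply the rule supplied by the reduction lemma, and iterate; since the multiset-complexity of leaves strictly decreases along every branch, König-style reasoning together with the well-foundedness of the ordering shows that the process terminates after finitely many steps on every branch, leaving a tree all of whose leaves are atomic, i.e.\ a complete proof-search tree extending the original one.

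The main obstacle is essentially bookkeeping rather than conceptual: one has to check that the rule schemata really do cover every combination of connective and position, so that the reduction lemma has no gap, and that the multiset ordering is used correctly when a rule (such as $(\mathord{\Rightarrow}\mathord{\wedge\mid})$ or $(\mathord{\rightarrow}\mathord{\Rightarrow\mid})$) has two premises, each of which must individually have smaller measure. Neither causes real trouble because all rules are principal-formula-eliminating and introduce only proper subformulae, but this is the point where the argument needs to be spelled out carefully.
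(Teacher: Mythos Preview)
Your proposal is correct and is essentially a fleshed-out version of what the paper leaves implicit: the paper gives no detailed argument at all, merely introducing the proposition as ``a simple consequence of the subformula property of rules.'' Your reduction lemma plus the Dershowitz--Manna multiset ordering is exactly the kind of termination argument that makes this remark precise, and the coverage check (every connective in every one of the four positions has a matching context-independent rule) is the right place to spend the bookkeeping effort.

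One small remark on the choice of measure. For BSC-K$_3$ itself the plain sum of complexities over all formula occurrences already strictly decreases in every premise of every rule, so the multiset ordering is heavier than necessary in that case. However, your choice pays off if you want the same argument to go through uniformly for the other calculi in the paper: some of the three-premise rules for weak Kleene connectives (e.g.\ the second premise of $(\mathord{\mid\wedge_w}\mathord{\Rightarrow})$, which duplicates $\varphi$) can increase the plain sum while still decreasing in the multiset order. So your formulation is not only correct but also more robust than the simplest alternative.
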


The notion of a proof in BSC-K$_3$ is introduced not only by restricting the class of proof-search trees in BSC-K$_3$ to axiomatic ones but also  by restricting the class of admissible roots. In general the rationale for bisequents is that a 1-sequent corresponds to a consequence relation in 1-matrices and a 2-sequent to a consequence relation in 2-matrices. Since {\bf K$_3$} is characterised by 1-matrix we have:

\1

BSC-K$_3  \vdash B$ iff there is an axiomatic proof-search tree for $B := \Gamma\Rightarrow\varphi \mid \Rightarrow$.

\1

We define the L-validity (L-satisfiability) of bisequents in the following way:

\1

{\bf L} $\models\Gamma\Rightarrow\Delta \mid \Pi\Rightarrow\Sigma$ iff 
every homomorphism $h$ satisfies $\Gamma\Rightarrow\Delta \mid \Pi\Rightarrow\Sigma$. The latter holds for $h$ iff for some $\varphi$: either $\varphi\in \Gamma$ and $h(\varphi)\neq 1$ or $\varphi\in \Delta$ and $h(\varphi)=1$ or $\varphi\in \Pi$ and $h(\varphi)=0$ or $\varphi\in \Sigma$ and $h(\varphi)\neq 0$.

\1

Clearly {\bf L} $\not \models\Gamma\Rightarrow\Delta \mid \Pi\Rightarrow\Sigma$ iff for some $h$, all elements of $\Gamma$ are true, all elements of $\Delta$ are either false or undefined, all elements of $\Pi$ are either true or undefined and all elements of $\Sigma$ are false. In this case we say that $h$ falsifies this sequent.

\1

Obviously, all axiomatic bisequents are valid for any logic L. As for the rules they are not only sound (i.e. validity-preserving) but also invertible; namely the following holds:

\begin{theorem}\label{Theorem1}
	For all rules of BSC-K$_3$, all premisses are {\bf K$_3$}-valid iff the conclusion is {\bf K$_3$}-valid.
\end{theorem}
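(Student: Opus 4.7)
The plan is to verify a stronger pointwise claim for each rule: the set of homomorphisms that falsify the conclusion coincides with the union of those that falsify at least one premise. Both directions of the stated biconditional then follow by contraposition from this equivalence. The conceptual backbone of the verification is to read off, directly from the definition of falsification given just before the theorem, the semantic content of each of the four positions of a bisequent: a formula in $\Gamma$ forces $h(\varphi)=1$, in $\Delta$ it forces $h(\varphi)\in\{0,u\}$, in $\Pi$ it forces $h(\varphi)\in\{1,u\}$, and in $\Sigma$ it forces $h(\varphi)=0$. The four positions therefore encode the set-as-signs $\{1\}$, $\{0,u\}$, $\{1,u\}$, $\{0\}$, and each rule becomes a direct reading of the strong Kleene truth table for its principal connective through this encoding.

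To illustrate, consider $(\neg\mathord{\Rightarrow}\mid)$. A homomorphism $h$ falsifies the conclusion iff $h(\neg\varphi)=1$ together with the side-context conditions; by the negation table this is equivalent to $h(\varphi)=0$ with the same side conditions, which is precisely falsification of the premise, in which $\varphi$ sits in the $\Sigma$-position. The remaining three negation rules shift the principal formula across the bar and between left and right in exactly the way needed to translate a constraint $h(\neg\varphi)\in X$ into the equivalent constraint on $\varphi$. For the implication rules the verification rests on the two equivalences $h(\varphi\rightarrow\psi)=1$ iff $h(\varphi)=0$ or $h(\psi)=1$, and $h(\varphi\rightarrow\psi)=0$ iff $h(\varphi)=1$ and $h(\psi)=0$ (together with the negations of both), each read off the implication table. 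The two-premise splits in $(\mathord{\rightarrow}\mathord{\Rightarrow}\mid)$ and $(\mid\mathord{\rightarrow}\mathord{\Rightarrow})$ match the disjunctive form of the first equivalence, whereas the single-premise $(\mathord{\Rightarrow}\mathord{\rightarrow}\mid)$ and $(\mid\mathord{\Rightarrow}\mathord{\rightarrow})$ match the conjunctive form of the second.

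The rules for $\wedge$ and $\vee$ are simply the familiar G3 rules duplicated in each of the two components, so their verification is identical to the classical case once the appropriate set-as-signs reading is applied in each position; the multi-premise rules (e.g.\ $(\mathord{\Rightarrow}\mathord{\wedge}\mid)$, $(\mathord{\vee}\mathord{\Rightarrow}\mid)$) again match the disjunctive shape of the corresponding truth-table condition. The main obstacle is therefore not mathematical depth but combinatorial bookkeeping: one must go through all the rule schemata, and in each case keep straight which of the four positions contains each side and principal formula and apply the correct set-as-signs translation. Once the semantic encoding is fixed, no individual case presents any genuine difficulty.
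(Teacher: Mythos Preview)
Your proposal is correct and follows essentially the same approach as the paper: both verify the claim rule by rule via the set-as-signs reading of the four positions, and your pointwise falsification equivalence is exactly what the paper's illustrative case for $(\mathord{\mid\rightarrow}\mathord{\Rightarrow})$ establishes in its two directions (and what the paper then records explicitly as a consequence of the theorem). The only difference is organizational: you state the pointwise claim first and derive the validity biconditional, whereas the paper argues the validity biconditional directly by contradiction and extracts the pointwise version afterwards.
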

\begin{proof}
Straightforward proof by tedious checking. For illustration we demonstrate the case of $(\mathord{\mid\rightarrow}\mathord{\Rightarrow})$. Suppose that ${\bf K_3}\models\Gamma\Rightarrow\Delta,\varphi \mid \Pi\Rightarrow\Sigma$ and ${\bf K_3}\models\Gamma\Rightarrow\Delta \mid \psi,\Pi\Rightarrow\Sigma$, while ${\bf K_3}\not\models\Gamma\Rightarrow\Delta \mid \varphi\rightarrow\psi,\Pi\Rightarrow\Sigma$. Then (1) for each $ h $ there is a $ \chi $ such that ($ \chi\in\Gamma $ and $ h(\chi)\not=1 $) or ($ \chi\in\Delta $ and $ h(\chi)=1 $) or ($ \chi\in\Pi $ and $ h(\chi)=0$) or ($ \chi\in\Sigma $ and $ h(\chi)\not=0 $) or ($ \chi=\varphi $ and $ h(\varphi)=1 $); (2) for each $ h $ there is an $ \omega $ such that ($ \omega\in\Gamma $ and $ h(\omega)\not=1 $) or ($ \omega\in\Delta $ and $ h(\omega)=1 $) or ($ \omega\in\Pi $ and $ h(\omega)=0$) or ($ \omega\in\Sigma $ and $ h(\omega)\not=0 $) or ($ \omega=\psi $ and $ h(\psi)=0 $); (3) for some $ h $, it holds that, for each $ \gamma\in\Gamma $, $h(\gamma)=1$, for each $ \delta\in\Delta $, $ h(\delta)\not=1 $, for each $ \pi\in\Pi $, $ h(\pi)\not=0 $, for each $ \sigma\in\Sigma $, $ h(\sigma)=0 $, and $ h(\varphi\rightarrow\psi)\not=0 $. From (1) and (3) we get (4) $ h(\varphi)=1 $. From (2) and (3) we get (5) $ h(\psi)=0 $. From (4) and (5) we obtain (6) $ h(\varphi\rightarrow\psi)=0 $. However, from (3) follows that $ h(\varphi\rightarrow\psi)\not=0 $. Contradiction. Hence, ${\bf K_3}\models\Gamma\Rightarrow\Delta \mid \varphi\rightarrow\psi,\Pi\Rightarrow\Sigma$.

Suppose that ${\bf K_3}\models\Gamma\Rightarrow\Delta \mid \varphi\rightarrow\psi,\Pi\Rightarrow\Sigma$, while ${\bf K_3}\not\models\Gamma\Rightarrow\Delta,\varphi \mid \Pi\Rightarrow\Sigma$ or ${\bf K_3}\not\models\Gamma\Rightarrow\Delta \mid \psi,\Pi\Rightarrow\Sigma$. Then (1) for each $ h $ there is a $ \chi $ such that ($ \chi\in\Gamma $ and $ h(\chi)\not=1 $) or ($ \chi\in\Delta $ and $ h(\chi)=1 $) or ($ \chi\in\Pi $ and $ h(\chi)=0$) or ($ \chi\in\Sigma $ and $ h(\chi)\not=0 $) or ($ \chi=\varphi\rightarrow\psi $ and $ h(\varphi\rightarrow\psi)=0 $); (2a) for some $ h $, it holds that, for each $ \gamma\in\Gamma $, $h(\gamma)=1$, for each $ \delta\in\Delta $, $ h(\delta)\not=1 $, for each $ \pi\in\Pi $, $ h(\pi)\not=0 $, for each $ \sigma\in\Sigma $, $ h(\sigma)=0 $, and $ h(\varphi)\not=1 $ or (2b) for some $ h $, it holds that, for each $ \gamma\in\Gamma $, $h(\gamma)=1$, for each $ \delta\in\Delta $, $ h(\delta)\not=1 $, for each $ \pi\in\Pi $, $ h(\pi)\not=0 $, for each $ \sigma\in\Sigma $, $ h(\sigma)=0 $, and $ h(\psi)\not=0 $. From (1) we obtain that $ h(\varphi\rightarrow\psi)=0 $, and hence (3a) $ h(\varphi)=1 $ and (3b) $ h(\psi)=0 $. Suppose that (2a) holds. Then (4a) $ h(\varphi)\not=1 $. Contradiction. Suppose that (2b) holds. Then (4b) $ h(\psi)\not=0 $. Contradiction. Therefore, ${\bf K_3}\models\Gamma\Rightarrow\Delta,\varphi \mid \Pi\Rightarrow\Sigma$ and ${\bf K_3}\models\Gamma\Rightarrow\Delta \mid \psi,\Pi\Rightarrow\Sigma$.
\end{proof}

A simple consequence of this theorem is that for every rule the conclusion is falsified by some $h$ iff at least one premiss is falsified by the same $h$.

\begin{theorem}[Soundness]
	If BSC-K$_3 \vdash \Gamma\Rightarrow\varphi\mid\,\Rightarrow$, then $\Gamma\models_{K_3}\varphi$
\end{theorem}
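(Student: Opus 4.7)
The plan is to combine the soundness direction of Theorem \ref{Theorem1} (premisses valid implies conclusion valid) with an induction on the height of an axiomatic proof-search tree, and finally to translate the $\mathbf{K_3}$-validity of the specific bisequent $\Gamma\Rightarrow\varphi\mid\,\Rightarrow$ into the matrix consequence $\Gamma\models_{K_3}\varphi$.

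First I would establish, as a lemma, that every $\mathbf{K_3}$-provable bisequent is $\mathbf{K_3}$-valid. By definition, BSC-K$_3 \vdash B$ means there is a complete axiomatic proof-search tree with $B$ at its root. I argue by induction on the height $n$ of this tree. For $n=0$, the root is itself axiomatic, and one checks directly from the definition of satisfaction that every bisequent with $\Gamma\cap\Sigma$, $\Gamma\cap\Delta$, or $\Pi\cap\Sigma$ nonempty is satisfied by every homomorphism $h$ (any formula in the intersection witnesses satisfaction on one of its two occurrences, regardless of the value $h$ assigns it). For the inductive step, the root is obtained from one or two premisses by some rule of BSC-K$_3$. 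By induction hypothesis the premiss subtrees have $\mathbf{K_3}$-valid roots, and by the left-to-right direction of Theorem \ref{Theorem1} the conclusion is $\mathbf{K_3}$-valid as well.

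Next I would specialise this to the root $\Gamma\Rightarrow\varphi\mid\,\Rightarrow$ and unfold the definition of satisfaction. Since the 2-sequent is $\,\Rightarrow\,$, the only surviving disjuncts in the satisfaction clause concern the 1-sequent $\Gamma\Rightarrow\varphi$: $h$ satisfies this bisequent iff either some $\gamma\in\Gamma$ has $h(\gamma)\neq 1$, or $h(\varphi)=1$. Hence $\mathbf{K_3}$-validity of $\Gamma\Rightarrow\varphi\mid\,\Rightarrow$ reads, equivalently: for every $h$, if $h(\gamma)=1$ for all $\gamma\in\Gamma$ (i.e.\ $h(\Gamma)\subseteq D=\{1\}$), then $h(\varphi)=1$ (i.e.\ $h(\varphi)\in D$). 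This is exactly $\Gamma\models_{K_3}\varphi$.

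The only non-routine step is the rule-by-rule soundness check, but it is already provided by Theorem \ref{Theorem1}; the remaining work is purely bookkeeping. The mild subtlety, and perhaps the main point one must be careful about, is the translation in the last paragraph: one has to notice that emptying the 2-sequent effectively discards the 2-sequent's contribution to satisfaction, so that the bisequent collapses to the usual consequence clause for the 1-matrix $\mathfrak{SM}_1^3$. For {\bf LP} and other 2-logics the analogous soundness statement will instead be read off from bisequents of the form $\,\Rightarrow\,\mid\Gamma\Rightarrow\varphi$, with the $D=\{1,u\}$ clause then extracted from the 2-sequent component.
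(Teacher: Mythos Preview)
Your proposal is correct and follows essentially the same approach as the paper's one-line proof (``By induction on the height of the proof, use Theorem~\ref{Theorem1}''), simply spelling out the base case for axiomatic bisequents and the final translation from validity of $\Gamma\Rightarrow\varphi\mid\,\Rightarrow$ to $\Gamma\models_{K_3}\varphi$ that the paper leaves implicit. One small imprecision: the paper's definition of $\vdash$ only requires an \emph{axiomatic} proof-search tree, not necessarily a \emph{complete} one, but your induction works equally well in that setting.
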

\begin{proof}
By induction on the height of the proof, use Theorem \ref{Theorem1}.
\end{proof}

The invertibility of all rules, which will be shown also syntactically in Section \ref{Syntactic}, implies that proof search process is confluent, i.e. that the order of applications of rules does not affect the result. In particular, $B$ is provable iff every proof-search tree may be extended to obtain a proof.

\begin{theorem}[Completeness]
	If $\Gamma\models_{K_3}\varphi$, then
	BSC-K$_3 \vdash \Gamma\Rightarrow\varphi\mid\,\Rightarrow$.
\end{theorem}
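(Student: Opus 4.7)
The plan is to argue by contraposition: assume $\text{BSC-}\mathbf{K_3} \not\vdash \Gamma\Rightarrow\varphi\mid\,\Rightarrow$, and construct a homomorphism $h$ witnessing $\Gamma\not\models_{K_3}\varphi$, i.e., $h(\gamma)=1$ for every $\gamma\in\Gamma$ while $h(\varphi)\neq 1$. By Proposition 1, any proof-search tree rooted at $\Gamma\Rightarrow\varphi\mid\,\Rightarrow$ can be extended to a complete one; since the root is not provable, at least one leaf must be a non-axiomatic atomic bisequent $\Gamma'\Rightarrow\Delta'\mid\Pi'\Rightarrow\Sigma'$, for which $\Gamma'\cap\Delta'=\Gamma'\cap\Sigma'=\Pi'\cap\Sigma'=\emptyset$ and every formula in it is propositional.

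The first key step is to build $h$ from such a leaf. I would define $h$ on atoms by cases that respect the falsification clauses listed after the definition of $\models$: put $h(p)=1$ when $p\in\Gamma'$, put $h(p)=0$ when $p\in\Sigma'$, put $h(p)=u$ when $p\in\Delta'\cap\Pi'$ (and $p\notin\Gamma'\cup\Sigma'$), put $h(p)=0$ when $p$ lies only in $\Delta'$, put $h(p)=1$ when $p$ lies only in $\Pi'$, and assign values arbitrarily to the remaining atoms. The three non-axiomaticity constraints make the cases pairwise disjoint, so $h$ is well-defined; extend it homomorphically to all formulas using the strong Kleene operations. By construction, $h$ falsifies the atomic leaf in the sense given in the paper.

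Next I would propagate falsification downwards through the complete proof-search tree. By the consequence of Theorem~\ref{Theorem1} stated immediately after its proof, for every rule of BSC-$\mathbf{K_3}$ the conclusion is falsified by some $h$ iff at least one premiss is falsified by the \emph{same} $h$. Thus, by induction on the length of the path from the chosen leaf to the root, the same $h$ falsifies every bisequent along the path, and in particular falsifies $\Gamma\Rightarrow\varphi\mid\,\Rightarrow$. Unwinding the definition of falsification for this specific bisequent (the 2-sequent is empty) yields $h(\gamma)=1$ for all $\gamma\in\Gamma$ and $h(\varphi)\neq 1$, which is exactly $\Gamma\not\models_{K_3}\varphi$.

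I expect the only delicate point to be the construction of $h$ from the non-axiomatic leaf; the rest is bookkeeping that rides on Proposition~1 and Theorem~\ref{Theorem1}. Specifically, I need to verify that no atom is simultaneously forced to take two incompatible values (handled exactly by the three disjointness conditions in the definition of axiomatic bisequent) and that extending $h$ by the Kleene operations preserves the falsification of atomic formulas when reinterpreted as the vacuously satisfied conjunction of atomic clauses at the leaf. Once these are in place, the descent along the tree is immediate and the theorem follows.
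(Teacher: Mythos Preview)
Your proposal is correct and follows essentially the same route as the paper: contraposition, extend to a complete proof-search tree, pick a non-axiomatic atomic leaf, observe it is falsified by some homomorphism, and then use the invertibility result (Theorem~\ref{Theorem1}) to push that falsifier down to the root. The paper's proof is extremely terse and simply asserts that a non-axiomatic atomic leaf is ``falsified by some $h$'' without spelling out the construction; you supply exactly that missing detail, and your case analysis for defining $h$ on atoms is sound given the three disjointness conditions encoded in the definition of axiomatic bisequents.
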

\begin{proof}
	Assume that $\Gamma\models_{K_3}\varphi$ but BSC-K$_3 \nvdash \Gamma\Rightarrow\varphi \mid\,\Rightarrow$. Hence in every complete proof-search tree for $\Gamma\Rightarrow\varphi \mid\,\Rightarrow$ there is at least one branch starting with nonaxiomatic atomic bisequent falsified by some $h$. Since all rules inherit this valuation, then the root is also falsified contrary to our assumption.
\end{proof}

As a simple consequence we obtain also a decision procedure for {\bf K$_3$} (and for other logics L with complete BSC-L).

Another by-product of our proof is that the following cut rules are admissible in BSC-K$_3$ (and other logics):

\begin{flushleft}\label{Cut}
	$(Cut\mid)$ \ $\dfrac{\Gamma 
	\Rightarrow  \Delta, \varphi \mid \Lambda\Rightarrow\Theta \qquad
	\varphi, \Pi  \Rightarrow  \Sigma \mid \Xi\Rightarrow\Omega} {\Gamma, \Pi  \Rightarrow  \Delta, \Sigma \mid \Lambda, \Xi\Rightarrow\Theta, \Omega}$
\end{flushleft}

\begin{flushleft}
	$(\mid Cut)$ \ $\dfrac{\Gamma 
	\Rightarrow  \Delta \mid \Lambda\Rightarrow\Theta, \varphi \qquad
	\Pi  \Rightarrow  \Sigma \mid \varphi, \Xi\Rightarrow\Omega} {\Gamma, \Pi  \Rightarrow  \Delta, \Sigma \mid \Lambda, \Xi\Rightarrow\Theta, \Omega}$
\end{flushleft}

Moreover, we can constructively prove that these cut rules are admissible. We do it in Section \ref{Syntactic}.

Note that the rules stated above provide BSC not only for {\bf K$_3$} but also for {\bf LP}. The only difference is that in {\bf LP} we consider as provable bisequents of the form $\Rightarrow \mid \Gamma\Rightarrow\varphi$ which is a consequence of the fact that it is determined by a 2-matrix.
All the results established for BSC-K$_3$ hold for BSC-LP.

\section{Bisequent Calculi for Other Logics}\label{OtherLogics}

We provide sets of rules adequate for all logics described in Sect. \ref{Logics}. Every operation
will be characterised by four rules of introduction to antecedents and consequents
of 1- and 2-sequents. In every case it holds that either

\1

$\Gamma\models_L \varphi$ iff BSC-L $\vdash \Gamma\Rightarrow\varphi \mid\,\Rightarrow$ \ \ or \ $\Gamma\models_L \varphi$ iff BSC-L $\vdash \ \Rightarrow \mid \Gamma\Rightarrow\varphi$

\1

depending on whether $\models_L$ denotes consequence relation for logics characterised by 1-matrices or by 2-matrices. The adequacy of BSC-L for all concrete logics is proved in the same way as for BSC-K$_3$. Therefore we limit our presentation to the systematic characterisation of rules from which the BSC for suitable logics can be composed.

We start with rules for respective unary operations (including \L u\-ka\-sie\-wicz's modalities):

\1

\noindent\begin{tabular}{ll}

	$(\mathord{\mid\neg_H}\mathord{\Rightarrow})$ \ $\dfrac{S \mid \Pi\Rightarrow\Sigma, \varphi}{S \mid \neg\varphi, \Pi\Rightarrow\Sigma}$ &
	$(\mathord{\mid\Rightarrow}\mathord{\neg_H})$ \ $\dfrac{S \mid \varphi, \Pi\Rightarrow\Sigma}{S \mid \Pi\Rightarrow\Sigma, \neg\varphi}$\\[16pt]

	$(\mathord{\neg_B}\mathord{\Rightarrow}\mid)$ \ $\dfrac{\Gamma
		\Rightarrow \Delta, \varphi \mid S}{\neg \varphi, \Gamma \Rightarrow \Delta \mid S}$ &
	$(\mathord{\Rightarrow}\mathord{\neg_B}\mid)$ \ $\dfrac{\varphi, \Gamma
		\Rightarrow \Delta \mid S}{\Gamma
		\Rightarrow \Delta, \neg\varphi\mid S}$\\[16pt]
	
\end{tabular}	

$(\mathord{\mid\neg_P}\mathord{\Rightarrow})$ \ $\dfrac{\Gamma
	\Rightarrow \Delta \mid \Pi\Rightarrow\Sigma, \varphi \qquad \varphi, \Gamma\Rightarrow\Delta \mid \Pi\Rightarrow\Sigma}{\Gamma \Rightarrow \Delta \mid \neg\varphi, \Pi\Rightarrow\Sigma}$ 

\1

\noindent\begin{tabular}{ll}

	$(\mathord{\mid\Rightarrow}\mathord{\neg_P})$ \ $\dfrac{\Gamma
		\Rightarrow \Delta, \varphi \mid \varphi, \Pi\Rightarrow\Sigma}{\Gamma
		\Rightarrow \Delta \mid \Pi\Rightarrow\Sigma, \neg\varphi}$ &
	$(\mathord{\neg_{DP}}\mathord{\Rightarrow\mid})$ \ $\dfrac{\Gamma
		\Rightarrow \Delta, \varphi \mid \varphi, \Pi\Rightarrow\Sigma}{\neg\varphi, \Gamma
		\Rightarrow \Delta \mid \Pi\Rightarrow\Sigma}$
	
\end{tabular}	

\1

$(\mathord{\Rightarrow\neg_{DP}\mid})$ \ $\dfrac{\Gamma
	\Rightarrow \Delta \mid \Pi\Rightarrow\Sigma, \varphi \qquad \varphi, \Gamma\Rightarrow\Delta \mid \Pi\Rightarrow\Sigma}{\Gamma \Rightarrow \Delta, \neg\varphi \mid \Pi\Rightarrow\Sigma}$

\1

The remaining rules in each case (namely $(\mathord{\neg_H}\mathord{\Rightarrow}\!\mid)$, $(\mathord{\Rightarrow}\mathord{\neg_H}\!\mid)$, $(\mathord{\mid\!\neg_B}\mathord{\Rightarrow})$, $(\mathord{\mid\Rightarrow}\mathord{\neg_B})$ $(\mathord{\neg_P}\mathord{\Rightarrow}\mid)$, $(\mathord{\Rightarrow}\mathord{\neg_P}\mid)$, $(\mathord{\mid\neg_{DP}}\mathord{\Rightarrow})$ and $(\mathord{\mid\Rightarrow}\mathord{\neg_{DP}})$) are like the respective rules of
BSC-K$_3$. Consider the premisses of $(\mathord{\mid\Rightarrow}\mathord{\neg_P})$ and $(\mathord{\neg_{DP}}\mathord{\Rightarrow\mid})$ displaying two occurrences of the same side formula. In semantic terms it gives the effect of evaluating $\varphi$ as undefined.

\1

\noindent\begin{tabular}{ll}

	$(\mathord{\Diamond}\mathord{\Rightarrow}\mid)$ \ $\dfrac{\Gamma
		\Rightarrow \Delta \mid \varphi, \Pi\Rightarrow\Sigma}{\Diamond \varphi, \Gamma \Rightarrow \Delta \mid \Pi\Rightarrow\Sigma}$ &
	$(\mathord{\Rightarrow}\mathord{\Diamond}\mid)$ \ $\dfrac{\Gamma
		\Rightarrow \Delta \mid \Pi\Rightarrow\Sigma, \varphi}{\Gamma
		\Rightarrow \Delta, \Diamond\varphi\mid \Pi\Rightarrow\Sigma}$\\[16pt]

	$(\mathord{\mid\Diamond}\mathord{\Rightarrow})$ \ $\dfrac{\Gamma
		\Rightarrow \Delta \mid \varphi, \Pi\Rightarrow\Sigma}{\Gamma
		\Rightarrow \Delta \mid \Diamond\varphi, \Pi\Rightarrow\Sigma}$ &
	$(\mathord{\mid\Rightarrow}\mathord{\Diamond})$ \ $\dfrac{\Gamma
		\Rightarrow \Delta \mid \Pi\Rightarrow\Sigma, \varphi}{\Gamma
		\Rightarrow \Delta \mid \Pi\Rightarrow\Sigma, \Diamond\varphi}$\\[16pt]
	
	$(\mathord{\Box}\mathord{\Rightarrow}\mid)$ \ $\dfrac{\varphi, \Gamma
		\Rightarrow \Delta \mid \Pi\Rightarrow\Sigma}{\Box\varphi, \Gamma \Rightarrow \Delta \mid \Pi\Rightarrow\Sigma}$ &
	$(\mathord{\Rightarrow}\mathord{\Box}\mid)$ \ $\dfrac{\Gamma
		\Rightarrow \Delta, \varphi \mid \Pi\Rightarrow\Sigma}{\Gamma
		\Rightarrow \Delta, \Box\varphi\mid \Pi\Rightarrow\Sigma}$\\[16pt]

	$(\mathord{\mid\Box}\mathord{\Rightarrow})$ \ $\dfrac{\varphi, \Gamma
		\Rightarrow \Delta \mid \Pi\Rightarrow\Sigma}{\Gamma
		\Rightarrow \Delta \mid \Box\varphi, \Pi\Rightarrow\Sigma}$ &
	$(\mathord{\mid\Rightarrow}\mathord{\Box})$ \ $\dfrac{\Gamma
		\Rightarrow \Delta, \varphi \mid \Pi\Rightarrow\Sigma}{\Gamma
		\Rightarrow \Delta \mid \Pi\Rightarrow\Sigma, \Box\varphi}$\\[16pt]

\end{tabular}

Not surprisingly rules introducing modal formula to antecedents or to succedents of 1- and 2-sequents have the same premisses which is a consequence of the fact that such a formula is never undefined. The same remark applies to rules for $\neg_H$ and $\neg_B$.

\1

The set of rules for weak $\wedge, \vee, \rightarrow$ is also partly identical with those for BSC-K$_3$. The identical rules are $(\mathord{\wedge_w}\mathord{\Rightarrow\mid})$, $(\mathord{\Rightarrow}\mathord{\wedge_w\mid})$, $(\mathord{\mid\Rightarrow}\mathord{\vee_w})$, $(\mathord{\mid\vee_w}\mathord{\Rightarrow})$, $(\mathord{\mid\Rightarrow}\mathord{\rightarrow_w})$ and $(\mathord{\mid\rightarrow_w}\mathord{\Rightarrow})$. In the remaining cases we have three premiss rules:

\1

\noindent {\footnotesize $(\mathord{\mid\wedge_w}\mathord{\Rightarrow})$ \ $\dfrac{\Gamma
		\Rightarrow \Delta \mid \varphi, \psi, \Pi\Rightarrow\Sigma \qquad \Gamma\Rightarrow\Delta, \varphi \mid \varphi,
		\Pi \Rightarrow \Sigma \qquad \Gamma\Rightarrow\Delta, \psi \mid \psi, \Pi\Rightarrow\Sigma} {\Gamma
		\Rightarrow \Delta \mid \varphi\wedge\psi, \Pi\Rightarrow\Sigma}$}

\1

\noindent {\footnotesize $(\mathord{\mid\Rightarrow}\mathord{\wedge_w})$\ $\dfrac{\Gamma
		\Rightarrow \Delta \mid \Pi\Rightarrow\Sigma, \varphi, \psi \qquad \varphi, \Gamma\Rightarrow\Delta \mid \Pi \Rightarrow \Sigma, \psi \qquad \psi, \Gamma\Rightarrow\Delta \mid \Pi\Rightarrow\Sigma, \varphi} {\Gamma
		\Rightarrow \Delta \mid \Pi\Rightarrow\Sigma, \varphi\wedge\psi}$}

\1

\noindent {\footnotesize $(\mathord{\Rightarrow}\mathord{\vee_w}\mid)$ \ $\dfrac{\Gamma
		\Rightarrow \Delta, \varphi, \psi \mid \Pi\Rightarrow\Sigma \qquad \Gamma\Rightarrow\Delta, \varphi \mid \varphi,
		\Pi \Rightarrow \Sigma \qquad \Gamma\Rightarrow\Delta, \psi \mid \psi, \Pi\Rightarrow\Sigma} {\Gamma
		\Rightarrow \Delta, \varphi\vee\psi \mid \Pi\Rightarrow\Sigma}$}

\1

\noindent {\footnotesize $(\mathord{\vee_w}\mathord{\Rightarrow}\mid)$ \ $\dfrac{\varphi, \psi, \Gamma
		\Rightarrow \Delta \mid \Pi\Rightarrow\Sigma \qquad \varphi, \Gamma\Rightarrow\Delta \mid \Pi \Rightarrow \Sigma, \psi \qquad \psi, \Gamma\Rightarrow\Delta \mid \Pi\Rightarrow\Sigma, \varphi} {\varphi\vee\psi, \Gamma
		\Rightarrow \Delta \mid \Pi\Rightarrow\Sigma}$}

\1

\noindent {\footnotesize $(\mathord{\Rightarrow}\mathord{\rightarrow_w}\mid)$ \ $\dfrac{\Gamma
		\Rightarrow \Delta, \psi \mid \varphi, \Pi\Rightarrow\Sigma \qquad \Gamma\Rightarrow\Delta, \varphi \mid \varphi,
		\Pi \Rightarrow \Sigma \qquad \Gamma\Rightarrow\Delta, \psi \mid \psi, \Pi\Rightarrow\Sigma} {\Gamma
		\Rightarrow \Delta, \varphi\rightarrow\psi \mid \Pi\Rightarrow\Sigma}$}

\1

\noindent {\footnotesize $(\mathord{\rightarrow_w}\mathord{\Rightarrow}\mid)$ \ $\dfrac{\varphi, \psi, \Gamma
		\Rightarrow \Delta \mid \Pi\Rightarrow\Sigma \qquad \Gamma\Rightarrow\Delta \mid \Pi \Rightarrow \Sigma, \varphi, \psi \qquad \psi, \Gamma\Rightarrow\Delta \mid \Pi\Rightarrow\Sigma, \varphi} {\varphi\rightarrow\psi, \Gamma
		\Rightarrow \Delta \mid \Pi\Rightarrow\Sigma}$}

\1

In the case of {\bf K$_3^\rightarrow$} and {\bf K$_3^\leftarrow$} the specific rules are:

\1

\noindent$(\mid \mathord{\Rightarrow}\mathord{\wedge_{mC}})$ \ $\dfrac{\Gamma
	\Rightarrow \Delta \mid \Pi\Rightarrow\Sigma, \varphi \qquad \varphi, \Gamma\Rightarrow\Delta \mid \Pi\Rightarrow\Sigma, \psi}{\Gamma\Rightarrow\Delta \mid \Pi\Rightarrow\Sigma, \varphi\wedge\psi}$ 

\1

\noindent$(\mid \mathord{\wedge}\mathord{\Rightarrow_{mC}})$ \ $\dfrac{\Gamma\Rightarrow \Delta \mid \varphi, \psi, \Pi\Rightarrow\Sigma \qquad \Gamma\Rightarrow\Delta, \varphi \mid \varphi, \Pi\Rightarrow\Sigma}{\Gamma\Rightarrow\Delta \mid \varphi\wedge\psi, \Pi\Rightarrow\Sigma}$ 

\1

\noindent$(\mathord{\vee}\mathord{\Rightarrow_{mC}}\mid)$ \ $\dfrac{\varphi, \Gamma
	\Rightarrow \Delta \mid \Pi\Rightarrow\Sigma \qquad \psi, \Gamma\Rightarrow\Delta \mid \Pi\Rightarrow\Sigma, \varphi}{\varphi\vee\psi, \Gamma\Rightarrow\Delta \mid \Pi\Rightarrow\Sigma}$ 

\1

\noindent$(\mathord{\Rightarrow}\mathord{\vee_{mC}}\mid)$ \ $\dfrac{\Gamma\Rightarrow \Delta, \varphi, \psi \mid \Pi\Rightarrow\Sigma \qquad \Gamma\Rightarrow\Delta, \varphi \mid \varphi, \Pi\Rightarrow\Sigma}{\Gamma\Rightarrow\Delta, \varphi\vee\psi \mid \Pi\Rightarrow\Sigma}$ 

\1

\noindent$(\mathord{\Rightarrow}\mathord{\rightarrow_{mC}}\mid)$ \ $\dfrac{\Gamma
	\Rightarrow \Delta, \psi \mid \varphi, \Pi\Rightarrow\Sigma \qquad \Gamma\Rightarrow\Delta, \varphi \mid \varphi, \Pi\Rightarrow\Sigma}{\Gamma\Rightarrow\Delta, \varphi\rightarrow\psi \mid \Pi\Rightarrow\Sigma}$ 

\1

\noindent$(\mathord{\rightarrow}\mathord{\Rightarrow_{mC}}\mid)$ \ $\dfrac{\varphi, \psi, \Gamma\Rightarrow \Delta \mid \Pi\Rightarrow\Sigma \qquad \Gamma\Rightarrow\Delta \mid \Pi\Rightarrow\Sigma, \varphi}{\varphi\rightarrow\psi, \Gamma\Rightarrow\Delta \mid \Pi\Rightarrow\Sigma}$ 

\1

\noindent$(\mid \mathord{\Rightarrow}\mathord{\wedge_K})$ \ $\dfrac{\Gamma
	\Rightarrow \Delta \mid \Pi\Rightarrow\Sigma, \psi \qquad \psi, \Gamma\Rightarrow\Delta \mid \Pi\Rightarrow\Sigma, \varphi}{\Gamma\Rightarrow\Delta \mid \Pi\Rightarrow\Sigma, \varphi\wedge\psi}$ 

\1

\noindent$(\mid \mathord{\wedge}\mathord{\Rightarrow_K})$ \ $\dfrac{\Gamma\Rightarrow \Delta \mid \varphi, \psi, \Pi\Rightarrow\Sigma \qquad \Gamma\Rightarrow\Delta, \psi \mid \psi, \Pi\Rightarrow\Sigma}{\Gamma\Rightarrow\Delta \mid \varphi\wedge\psi, \Pi\Rightarrow\Sigma}$ 

\1

\noindent$(\mathord{\vee}\mathord{\Rightarrow_K}\mid)$ \ $\dfrac{\varphi, \Gamma
	\Rightarrow \Delta \mid \Pi\Rightarrow\Sigma, \psi \qquad \psi, \Gamma\Rightarrow\Delta \mid \Pi\Rightarrow\Sigma}{\varphi\vee\psi, \Gamma\Rightarrow\Delta \mid \Pi\Rightarrow\Sigma}$ 

\1

\noindent$(\mathord{\Rightarrow}\mathord{\vee_K}\mid)$ \ $\dfrac{\Gamma\Rightarrow \Delta, \varphi, \psi \mid \Pi\Rightarrow\Sigma \qquad \Gamma\Rightarrow\Delta, \psi \mid \psi, \Pi\Rightarrow\Sigma}{\Gamma\Rightarrow\Delta, \varphi\vee\psi \mid \Pi\Rightarrow\Sigma}$ 

\1

\noindent$(\mathord{\Rightarrow}\mathord{\rightarrow_K}\mid)$ \ $\dfrac{\Gamma
	\Rightarrow \Delta, \psi \mid \varphi, \Pi\Rightarrow\Sigma \qquad \Gamma\Rightarrow\Delta, \psi \mid \psi, \Pi\Rightarrow\Sigma}{\Gamma\Rightarrow\Delta, \varphi\rightarrow\psi \mid \Pi\Rightarrow\Sigma}$ 

\1

\noindent$(\mathord{\rightarrow}\mathord{\Rightarrow_K}\mid)$ \ $\dfrac{\psi, \Gamma\Rightarrow \Delta \mid \Pi\Rightarrow\Sigma \qquad \Gamma\Rightarrow\Delta \mid \Pi\Rightarrow\Sigma, \varphi, \psi}{\varphi\rightarrow\psi, \Gamma\Rightarrow\Delta \mid \Pi\Rightarrow\Sigma}$ 

\1

The remaining rules in both cases are identical with  $(\mathord{\wedge}\mathord{\Rightarrow\mid})$, $(\mathord{\Rightarrow}\mathord{\wedge\mid})$, $(\mathord{\mid\Rightarrow}\mathord{\vee})$, $(\mathord{\mid\vee}\mathord{\Rightarrow})$, $(\mid\mathord{\Rightarrow}\mathord{\rightarrow})$ and $(\mid\mathord{\rightarrow}\mathord{\Rightarrow})$ from BSC-K$_3$.

\1

The implications of \L ukasiewicz \cite{Lukasiewicz} and his specific additive $\wedge_L$ and $\vee_L$ are characterised by the following rules:

\begin{flushleft}
	$(\mid\mathord{\wedge_L\Rightarrow})$ \ $\dfrac{\varphi, \Gamma
		\Rightarrow \Delta \mid \psi, \Pi\Rightarrow\Sigma \qquad \psi, \Gamma\Rightarrow\Delta \mid \varphi, \Pi\Rightarrow\Sigma}{ \Gamma\Rightarrow\Delta \mid \varphi\wedge\psi,\Pi\Rightarrow\Sigma}$
\end{flushleft} 

\begin{flushleft}
	{\footnotesize $(\mid\mathord{\Rightarrow\wedge_L})$ \ $ \dfrac{\Gamma
			\Rightarrow \Delta, \varphi, \psi \mid \Pi\Rightarrow\Sigma \qquad\varphi, \Gamma\Rightarrow\Delta \mid
			\Pi \Rightarrow \Sigma, \psi \qquad \psi, \Gamma\Rightarrow\Delta \mid \Pi\Rightarrow\Sigma, \varphi } {\Gamma
			\Rightarrow \Delta\mid \Pi\Rightarrow\Sigma, \varphi\wedge\psi }$}
\end{flushleft}

\begin{flushleft}
	$(\mathord{\mid\vee_L\Rightarrow})$ \ $\dfrac{\varphi, \Gamma
		\Rightarrow \Delta \mid \psi, \Pi\Rightarrow\Sigma \qquad \psi, \Gamma\Rightarrow\Delta\mid \varphi, \Pi\Rightarrow\Sigma}{\Gamma\Rightarrow\Delta \mid \varphi\vee\psi, \Pi\Rightarrow\Sigma}$
\end{flushleft} 

\begin{flushleft}
	{\footnotesize $(\mathord{\mid\Rightarrow\vee_L})$ \ $\dfrac{\Gamma
			\Rightarrow \Delta, \varphi, \psi \mid \Pi\Rightarrow\Sigma \qquad\varphi,  \Gamma\Rightarrow\Delta \mid
			\Pi \Rightarrow \Sigma, \psi \qquad \psi, \Gamma\Rightarrow\Delta \mid \Pi\Rightarrow\Sigma, \varphi } {\Gamma
			\Rightarrow \Delta \mid \Pi\Rightarrow\Sigma, \varphi\vee\psi}$}
\end{flushleft}

\begin{flushleft}
	$(\mathord{\Rightarrow}\mathord{\rightarrow_L}\mid)$ \ $\dfrac{\varphi, \Gamma
		\Rightarrow \Delta, \psi \mid \Pi\Rightarrow\Sigma \qquad \Gamma\Rightarrow\Delta \mid \varphi, \Pi\Rightarrow\Sigma, \psi}{\Gamma\Rightarrow\Delta, \varphi\rightarrow\psi \mid \Pi\Rightarrow\Sigma}$ 
\end{flushleft}

\begin{flushleft}
	{\small $(\mathord{\rightarrow_L}\mathord{\Rightarrow}\mid)$ \ $\dfrac{\Gamma
			\Rightarrow \Delta, \varphi \mid \psi, \Pi\Rightarrow\Sigma \qquad \psi, \Gamma\Rightarrow\Delta \mid
			\Pi \Rightarrow \Sigma \qquad \Gamma\Rightarrow\Delta \mid \Pi\Rightarrow\Sigma, \varphi } {\varphi\rightarrow\psi, \Gamma
			\Rightarrow \Delta \mid \Pi\Rightarrow\Sigma}$}
\end{flushleft}

The remaining  rules are identical with  $(\mathord{\wedge}\mathord{\Rightarrow\mid})$, $(\mathord{\Rightarrow}\mathord{\wedge\mid})$, $(\mathord{\Rightarrow}\mathord{\vee\mid})$, $(\mathord{\vee}\mathord{\Rightarrow\mid})$, $(\mid\mathord{\Rightarrow}\mathord{\rightarrow})$ and $(\mid\mathord{\rightarrow}\mathord{\Rightarrow})$ from BSC-K$_3$.

\1

For Soboci\'nski's connectives we have:

\begin{flushleft}
	$(\mathord{\wedge_S\Rightarrow}\mid)$ \ $\dfrac{\varphi, \Gamma
		\Rightarrow \Delta \mid \psi, \Pi\Rightarrow\Sigma \qquad \psi, \Gamma\Rightarrow\Delta \mid \varphi, \Pi\Rightarrow\Sigma}{\varphi\wedge\psi, \Gamma\Rightarrow\Delta \mid \Pi\Rightarrow\Sigma}$
\end{flushleft} 

\begin{flushleft}
	{\footnotesize $(\mathord{\Rightarrow\wedge_S}\mid)$ \ $\dfrac{\Gamma
			\Rightarrow \Delta, \varphi, \psi \mid \Pi\Rightarrow\Sigma \qquad\varphi, \Gamma\Rightarrow\Delta \mid
			\Pi \Rightarrow \Sigma, \psi \qquad \psi, \Gamma\Rightarrow\Delta \mid \Pi\Rightarrow\Sigma, \varphi } {\Gamma
			\Rightarrow \Delta, \varphi\wedge\psi \mid \Pi\Rightarrow\Sigma}$}
	
\end{flushleft}
\begin{flushleft}
	$(\mathord{\mid\Rightarrow}\mathord{\vee_S})$ \ $\dfrac{\Gamma
		\Rightarrow \Delta, \varphi \mid \Pi\Rightarrow\Sigma, \psi \qquad \Gamma\Rightarrow\Delta, \psi\mid \Pi\Rightarrow\Sigma, \varphi}{\Gamma\Rightarrow\Delta \mid \Pi\Rightarrow\Sigma, \varphi\vee\psi}$
\end{flushleft} 

\begin{flushleft}
	{\footnotesize $(\mathord{\mid \vee_S}\mathord{\Rightarrow})$ \ $\dfrac{\Gamma
			\Rightarrow \Delta \mid \varphi, \psi, \Pi\Rightarrow\Sigma \qquad\varphi,  \Gamma\Rightarrow\Delta \mid
			\Pi \Rightarrow \Sigma, \psi \qquad \psi, \Gamma\Rightarrow\Delta \mid \Pi\Rightarrow\Sigma, \varphi } {\Gamma
			\Rightarrow \Delta \mid \varphi\vee\psi, \Pi\Rightarrow\Sigma}$}
\end{flushleft}

\begin{flushleft}
	$(\mathord{\mid\Rightarrow}\mathord{\rightarrow_S})$ \ $\dfrac{\varphi, \Gamma
		\Rightarrow \Delta, \psi \mid \Pi\Rightarrow\Sigma \qquad \Gamma\Rightarrow\Delta \mid \varphi, \Pi\Rightarrow\Sigma, \psi}{\Gamma\Rightarrow\Delta \mid \Pi\Rightarrow\Sigma, \varphi\rightarrow\psi}$
\end{flushleft} 

\begin{flushleft}
	{\small $(\mathord{\mid \rightarrow_S}\mathord{\Rightarrow})$ \ $\dfrac{\Gamma
			\Rightarrow \Delta, \varphi \mid \psi, \Pi\Rightarrow\Sigma \qquad\psi, \Gamma\Rightarrow\Delta \mid
			\Pi \Rightarrow \Sigma \qquad \Gamma\Rightarrow\Delta \mid \Pi\Rightarrow\Sigma, \varphi} {\Gamma
			\Rightarrow \Delta \mid \varphi\rightarrow\psi, \Pi\Rightarrow\Sigma}$}
\end{flushleft}

The remaining rules are again like those in BSC-K$_3$. 

\1

Sette's connectives are characterised by the following rules:

\begin{flushleft}
		$(\mathord{\wedge_{\mathit{Se}}}\mathord{\Rightarrow\mid})$  $\dfrac{
			\Gamma \Rightarrow \Delta \mid \varphi, \psi, \Pi\Rightarrow\Sigma}{\varphi\wedge\psi, \Gamma
			\Rightarrow \Delta \mid \Pi\Rightarrow\Sigma}$
\end{flushleft}	\begin{flushleft}
$(\mathord{\Rightarrow}\mathord{\wedge_{\mathit{Se}}\mid})$  $\dfrac{\Gamma
			\Rightarrow \Delta \mid \Pi\Rightarrow\Sigma, \varphi \;\quad
			\Gamma\Rightarrow \Delta \mid \Pi\Rightarrow\Sigma, \psi}{\Gamma \Rightarrow \Delta,
			\varphi\wedge\psi \mid \Pi\Rightarrow\Sigma}$
\end{flushleft}
	
\begin{flushleft}
	 $(\mathord{\Rightarrow}\mathord{\vee_{\mathit{Se}}\mid})$ $\dfrac{\Gamma
			\Rightarrow \Delta \mid \Pi\Rightarrow\Sigma, \varphi,\psi } {\Gamma\Rightarrow \Delta,
			\varphi\vee\psi \mid \Pi\Rightarrow\Sigma}$
\end{flushleft}
		
\begin{flushleft}
			 $(\mathord{\vee_{\mathit{Se}}}\mathord{\Rightarrow\mid})$  $\dfrac{\Gamma
			\Rightarrow \Delta \mid \varphi, \Pi\Rightarrow\Sigma \;\quad \Gamma
			\Rightarrow \Delta \mid \psi, \Pi\Rightarrow\Sigma}{\varphi\vee\psi,
			\Gamma\Rightarrow \Delta \mid\Pi\Rightarrow\Sigma}$
\end{flushleft}
	
\begin{flushleft}
	 $(\mathord{\Rightarrow}\mathord{\rightarrow_{\mathit{Se}}\mid})$ $\dfrac{\Gamma
			\Rightarrow \Delta \mid \varphi, \Pi\Rightarrow\Sigma, \psi}{\Gamma\Rightarrow\Delta, \varphi\rightarrow\psi \mid \Pi\Rightarrow\Sigma}$
\end{flushleft} \begin{flushleft}
$(\mathord{\rightarrow_{\mathit{Se}}}\mathord{\Rightarrow\mid})$ $\dfrac{\Gamma
			\Rightarrow \Delta \mid \Pi\Rightarrow\Sigma, \varphi \;\quad \Gamma\Rightarrow\Delta \mid \psi,
			\Pi \Rightarrow \Sigma } {\varphi\rightarrow\psi, \Gamma
			\Rightarrow \Delta \mid \Pi\Rightarrow\Sigma}$
\end{flushleft}
	
\begin{flushleft}
	 $(\mathord{\mid\Rightarrow}\mathord{\rightarrow_{Se}})$ $\dfrac{S \mid \varphi, \Pi\Rightarrow\Sigma, \psi}{S \mid \Pi\Rightarrow\Sigma, \varphi\rightarrow\psi}$
\end{flushleft} \begin{flushleft}
$(\mathord{\mid\rightarrow_{Se}}\mathord{\Rightarrow})$ $\dfrac{S \mid \Pi\Rightarrow\Sigma, \varphi \qquad S \mid \psi,
		\Pi \Rightarrow \Sigma } {S \mid \varphi\rightarrow\psi, \Pi\Rightarrow\Sigma}$
\end{flushleft}

$(\mathord{\mid\wedge_{Se}}\mathord{\Rightarrow})$,
$(\mathord{\mid\Rightarrow}\mathord{\wedge_{Se}})$, $(\mathord{\mid\Rightarrow}\mathord{\vee_{Se}})$,
$(\mathord{\mid\vee_{Se}}\mathord{\Rightarrow})$ like in BSC-K$_3$.

\1

Finally, Carnielli's and Sette's connectives characterising ${\bf I^1}$ and ${\bf I^2}$:

\begin{flushleft}
	 $(\mathord{{\mid}\wedge_C}\mathord{\Rightarrow})$  $\dfrac{\varphi, \psi,
			\Gamma \Rightarrow \Delta \mid \Pi\Rightarrow\Sigma}{ \Gamma
			\Rightarrow \Delta \mid \varphi\wedge\psi, \Pi\Rightarrow\Sigma}$
\end{flushleft} \begin{flushleft}
$(\mathord{\mid\Rightarrow}\mathord{\wedge_C})$  $\dfrac{\Gamma
			\Rightarrow \Delta, \varphi \mid \Pi\Rightarrow\Sigma \quad
			\Gamma\Rightarrow \Delta, \psi \mid \Pi\Rightarrow\Sigma}{\Gamma \Rightarrow \Delta \mid \Pi\Rightarrow\Sigma, \varphi\wedge\psi}$
\end{flushleft}
	
\begin{flushleft}
	 $(\mathord{\mid\Rightarrow}\mathord{\vee_C})$	  $\dfrac{\Gamma
			\Rightarrow \Delta, \varphi, \psi \mid \Pi\Rightarrow\Sigma }{		\Gamma\Rightarrow \Delta \mid\Pi\Rightarrow\Sigma, \varphi\vee\psi}$
\end{flushleft} \begin{flushleft}
$(\mathord{{\mid}\vee_C}\mathord{\Rightarrow})$	  $\dfrac{\varphi, \Gamma
			\Rightarrow \Delta \mid \Pi\Rightarrow\Sigma \quad \psi, \Gamma
			\Rightarrow \Delta \mid \Pi\Rightarrow\Sigma}
		{\Gamma\Rightarrow \Delta \mid
			\varphi\vee\psi, \Pi\Rightarrow\Sigma}$
\end{flushleft}
\begin{flushleft}
	 $(\mathord{\mid\Rightarrow}\mathord{\rightarrow_C})$ $\dfrac{\varphi, \Gamma
			\Rightarrow \Delta, \psi \mid \Pi\Rightarrow\Sigma}{\Gamma\Rightarrow\Delta \mid \Pi\Rightarrow\Sigma, \varphi\rightarrow\psi}$
\end{flushleft} \begin{flushleft}
$(\mathord{\mid\rightarrow_C}\mathord{\Rightarrow})$ $\dfrac{\Gamma
			\Rightarrow \Delta, \varphi \mid \Pi\Rightarrow\Sigma \quad \psi, \Gamma\Rightarrow\Delta \mid \Pi \Rightarrow \Sigma } {\Gamma
			\Rightarrow \Delta \mid \varphi\rightarrow\psi, \Pi\Rightarrow\Sigma}$
\end{flushleft}

\begin{flushleft}
		$(\mathord{\Rightarrow}\mathord{\rightarrow_C\mid})$ \ $\dfrac{\varphi, \Gamma
		\Rightarrow \Delta, \psi \mid S}{\Gamma\Rightarrow\Delta, \varphi\rightarrow\psi \mid S}$
\end{flushleft} 
\begin{flushleft}
		$(\mathord{\rightarrow_C}\mathord{\Rightarrow\mid})$ \ $\dfrac{\Gamma
		\Rightarrow \Delta, \varphi \mid S \qquad\psi,
		\Gamma \Rightarrow \Delta \mid S } {\varphi\rightarrow\psi, \Gamma
		\Rightarrow \Delta \mid S}$
\end{flushleft}

$(\mathord{\wedge_C}\mathord{\Rightarrow}\mid)$,
$(\mathord{\Rightarrow}\mathord{\wedge_C}\mid)$, $(\mathord{\Rightarrow}\mathord{\vee_C}\mid)$,
$(\mathord{\vee_C}\mathord{\Rightarrow}\mid)$ like in BSC-K$_3$. 

\1

We finish with the characterisation of the remaining implications introduced in section 2. In most cases they are obtained by combining rules which were previously introduced. In particular:

S\l upecki's \cite{Slupecki67} implication is characterised by means of:
$(\mathord{\mid\Rightarrow}\mathord{\rightarrow})$ and $(\mathord{\mid \rightarrow}\mathord{\Rightarrow})$ from BSC-K$_3$ as well as $(\mathord{\Rightarrow}\mathord{\rightarrow_C}\mid)$ and $(\mathord{\rightarrow_C}\mathord{\Rightarrow}\mid)$.

Heyting's implication \cite{Heyting} is characterised by means of:
$(\mathord{\rightarrow_L}\mathord{\Rightarrow}\mid)$, $(\mathord{\Rightarrow}\mathord{\rightarrow_L}\mid)$, $(\mathord{\mid\Rightarrow}\mathord{\rightarrow_{Se}})$,$(\mathord{\mid \rightarrow_{Se}}\mathord{\Rightarrow})$.

D'Ottaviano's/da Costa's/Ja\'{s}kowski's/S\l{}upecki's implication \cite{DOttavianodaCosta,Jaskowski48,Slupecki39} is characterised by means of:
$(\mathord{\rightarrow}\mathord{\Rightarrow}\mid)$, $(\mathord{\Rightarrow}\mathord{\rightarrow}\mid)$, $(\mathord{\mid\Rightarrow}\mathord{\rightarrow_{Se}})$,$(\mathord{\mid \rightarrow_{Se}}\mathord{\Rightarrow})$.

Rescher's implication \cite{Rescher} is characterised by means of:
$(\mathord{\rightarrow_L}\mathord{\Rightarrow}\mid)$, $(\mathord{\Rightarrow}\mathord{\rightarrow_L}\mid)$, $(\mathord{\mid\Rightarrow}\mathord{\rightarrow_{S}})$,$(\mathord{\mid \rightarrow_{S}}\mathord{\Rightarrow})$.

Tomova's implication \cite{Heyting} is characterised by means of:
$(\mathord{\rightarrow_L}\mathord{\Rightarrow}\mid)$, $(\mathord{\Rightarrow}\mathord{\rightarrow_L}\mid)$, $(\mathord{\mid\Rightarrow}\mathord{\rightarrow_{C}})$,$(\mathord{\mid \rightarrow_{C}}\mathord{\Rightarrow})$.

Only in the case of Soboci\'nski's implication $\rightarrow_S'$ do we have a pair of new rules:

\begin{flushleft}
	{\footnotesize $(\mathord{\rightarrow_S'}\mathord{\Rightarrow\mid})$ \ $\dfrac{\psi, \Gamma
			\Rightarrow \Delta \mid \Pi\Rightarrow\Sigma \qquad \Gamma\Rightarrow\Delta \mid \Pi \Rightarrow \Sigma, \varphi, \psi \qquad \Gamma\Rightarrow\Delta, \varphi\mid \varphi, \psi, \Pi\Rightarrow\Sigma} {\varphi\rightarrow\psi, \Gamma
			\Rightarrow \Delta \mid \Pi\Rightarrow\Sigma}$}
\end{flushleft}

\begin{flushleft}
	{\footnotesize $(\mathord{\Rightarrow}\mathord{\rightarrow_S'\mid})$ \ $\dfrac{\varphi, \Gamma
			\Rightarrow \Delta, \psi \mid \Pi\Rightarrow\Sigma \;\quad \Gamma\Rightarrow\Delta \mid\varphi,
			\Pi \Rightarrow \Sigma, \psi \;\quad \Gamma\Rightarrow\Delta, \psi \mid \psi, \Pi\Rightarrow\Sigma, \varphi } {\Gamma
			\Rightarrow \Delta, \varphi\rightarrow\psi \mid \Pi\Rightarrow\Sigma}$}
\end{flushleft}

The remaining two rules are: $(\mathord{\mid\Rightarrow}\mathord{\rightarrow_{Se}})$ and $(\mathord{\mid \rightarrow_{Se}}\mathord{\Rightarrow})$.

\1

Now, consider an arbitrary connective $c$ of the logic {\bf L}, the corresponding operation $\underline{c}$ as characterised by suitable matrix determining {\bf L} in section 2, and the four rules for $c$. The following holds:

\begin{theorem} 	
	For all presented rules characterising arbitrary $c$ of any {\bf L}: all premisses are {\bf L}-valid iff the conclusion is {\bf L}-valid.
\end{theorem}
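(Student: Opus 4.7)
The proof plan is to extend Theorem~\ref{Theorem1} from BSC-K$_3$ to every rule introduced in this section, treating each rule schema individually. For each rule I would establish the contrapositive formulation, exactly as in the sample case of Theorem~\ref{Theorem1}: the conclusion is falsified by a homomorphism $h$ iff at least one premiss is falsified by the same $h$. No global trick is available beyond the four-slot semantic reading, so the argument is a rule-by-rule enumeration organised by connective, following the order of Section~\ref{OtherLogics}.

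The observation that makes each individual check immediate is that, under falsification, the four slots of a bisequent correspond to four designated sets of truth values: a formula in $\Gamma$ must evaluate to $1$, in $\Delta$ to an element of $\{0,u\}$, in $\Pi$ to an element of $\{1,u\}$, and in $\Sigma$ to $0$. Every rule was designed by reading directly from the truth table of the introduced operation $\underline{c}$: the principal formula occupies a slot that fixes a target value class for $\underline{c}(h(\varphi_1),\ldots,h(\varphi_n))$, and the side formulas are distributed across the premises so that the disjunction of premiss-falsifying constraints on $(h(\varphi_1),\ldots,h(\varphi_n))$ captures exactly the rows of the table producing an output in the target class. Given a falsifying $h$ for the conclusion, one locates the row $(h(\varphi_1),\ldots,h(\varphi_n))$ in the table for $\underline{c}$ and matches it to some premiss whose side-formula conditions are met; conversely, a falsifying $h$ for any premiss forces $\underline{c}(h(\varphi_1),\ldots,h(\varphi_n))$ into the target class, falsifying the conclusion.

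As a representative check, in $(\mathord{\Rightarrow}\mathord{\rightarrow_S'\mid})$ the principal formula sits in $\Delta$, demanding $h(\varphi\rightarrow_S'\psi)\in\{0,u\}$. The table for $\underline{\rightarrow_S^\prime}$ yields such outputs on the four rows $(1,0),(1,u),(u,0),(0,u)$. The three premises restrict $(h(\varphi),h(\psi))$ respectively to $\{(1,0),(1,u)\}$, to $\{(1,0),(u,0)\}$, and to $\{(0,u)\}$, the last pattern arising because $\psi$ appears simultaneously in the 1-succedent and in the 2-antecedent and is therefore forced to $u$. Together these cover exactly the four target rows.

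The main obstacle is sheer volume rather than conceptual difficulty. The delicate cases are the three-premiss rules for the weak Kleene connectives, the \L ukasiewicz additive connectives $\wedge_L,\vee_L$, the Soboci\'nski connectives, and $\rightarrow_S'$, where one must verify that the three side-formula patterns collectively exhaust the target table rows without leaving any row uncovered. The unary cases (including the Post and dual Post negations with their two-premiss rules, and the modalities, whose rules have identical premises across the two slots because the output is never $u$) are easier and checked in the same style. Once the theorem is established, soundness, invertibility, completeness, decidability, and admissibility of the cut rules for each BSC-L follow exactly as in Section~\ref{StrongKleene}.
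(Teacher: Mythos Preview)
Your proposal is correct and essentially identical to the paper's own proof: both proceed by rule-by-rule verification, acknowledge that the argument is tedious case-checking rather than a single global trick, and illustrate with the very same example $(\mathord{\Rightarrow}\mathord{\rightarrow_S'\mid})$. Your contrapositive (falsification) phrasing is a harmless presentational variant of the paper's direct validity-by-contradiction argument, and your row-by-row accounting for the three premisses of that rule matches the paper's conclusions exactly.
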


\begin{proof}
This is an analogue of Theorem \ref{Theorem1} for any considered logic {\bf L} which implies the adequacy of BSC-L. We provide only one example for illustration. Consider the rule $(\mathord{\Rightarrow}\mathord{\rightarrow_S'\mid})$. Suppose that $ {\bf L}\models \varphi,\Gamma\Rightarrow\Delta,\psi\mid\Pi\Rightarrow\Sigma$, $ {\bf L}\models \Gamma\Rightarrow\Delta\mid\varphi,\Pi\Rightarrow\Sigma,\psi$, $ {\bf L}\models \Gamma\Rightarrow\Delta,\psi\mid\psi,\Pi\Rightarrow\Sigma,\varphi$, while $ {\bf L}\not\models \Gamma\Rightarrow\Delta,\varphi\rightarrow\psi\mid\Pi\Rightarrow\Sigma$. Then: 

(1) for each $ h $ there is a $ \chi $ such that ($ \chi\in\Gamma $ and $ h(\chi)\not=1 $) or ($ \chi\in\Delta $ and $ h(\chi)=1 $) or ($ \chi\in\Pi $ and $ h(\chi)=0$) or ($ \chi\in\Sigma $ and $ h(\chi)\not=0 $) or ($ \chi=\varphi $ and $ h(\varphi)\not=1 $) or ($ \chi=\psi $ and $ h(\psi)=1 $); 

(2) for each $ h $ there is an $ \omega $ such that ($ \omega\in\Gamma $ and $ h(\omega)\not=1 $) or ($ \omega\in\Delta $ and $ h(\omega)=1 $) or ($ \omega\in\Pi $ and $ h(\omega)=0$) or ($ \omega\in\Sigma $ and $ h(\omega)\not=0 $) or ($ \omega=\varphi $ and $ h(\varphi)=0 $) or ($ \omega=\psi $ and $ h(\psi)\not=0 $); 

(3) for each $ h $ there is a $ \tau $ such that ($ \tau\in\Gamma $ and $ h(\tau)\not=1 $) or ($ \tau\in\Delta $ and $ h(\tau)=1 $) or ($ \tau\in\Pi $ and $ h(\tau)=0$) or ($ \tau\in\Sigma $ and $ h(\tau)\not=0 $) or ($ \tau=\varphi $ and $ h(\varphi)\not=0 $) or ($ \tau=\psi $ and ($ h(\psi)=1 $ or $ h(\psi)=0 $)); 

(4) for some $ h $, it holds that, for each $ \gamma\in\Gamma $, $h(\gamma)=1$, for each $ \delta\in\Delta $, $ h(\delta)\not=1 $, for each $ \pi\in\Pi $, $ h(\pi)\not=0 $, for each $ \sigma\in\Sigma $, $ h(\sigma)=0 $, and $ h(\varphi\rightarrow\psi)\not=1 $.

Therefore, (5) $ h(\varphi)\not=1 $ or $ h(\psi)=1 $, (6) $ h(\varphi)=0 $ or $ h(\psi)\not=0 $, (7) $ h(\varphi)\not=0 $ or ($ h(\psi)=1 $ or $ h(\psi)=0 $), i.e. $ h(\varphi)\not=0 $ or $ h(\psi)\not=\mathrm{u} $, and (8) $ h(\varphi\rightarrow\psi)\not=1 $. Suppose that (9) $ h(\varphi\rightarrow\psi)=\mathrm{u} $. Then either (10) $ h(\varphi)=1 $ and $ h(\psi)=\mathrm{u} $ or (11) $ h(\varphi)=0 $ and $ h(\psi)=\mathrm{u} $. Hence, (12) $ h(\psi)=\mathrm{u} $ and ($ h(\varphi)=1 $ or $ h(\varphi)=0 $). From (5), (7), and (12) we obtain (13) $ h(\varphi)\not=1 $ and $ h(\varphi)\not=0 $. Contradiction. Suppose that (14) $ h(\varphi\rightarrow\psi)=0$. Then either (15) $ h(\varphi)=1 $ and $ h(\psi)=0 $ or (16) $ h(\varphi)=\mathrm{u} $ and $ h(\psi)=0 $. Thus, (17) $ h(\psi)=0 $ and ($ h(\varphi)=1 $ or $ h(\varphi)=\mathrm{u} $), i.e. $ h(\psi)=0 $ and $ h(\varphi)\not=0 $. From (6) and (17) we get $ h(\varphi)=0 $. Contradiction. Consequently, $ {\bf L}\models \Gamma\Rightarrow\Delta,\varphi\rightarrow\psi\mid\Pi\Rightarrow\Sigma$.

Suppose that $ {\bf L}\models \Gamma\Rightarrow\Delta,\varphi\rightarrow\psi\mid\Pi\Rightarrow\Sigma$, while $ {\bf L}\not\models \varphi,\Gamma\Rightarrow\Delta,\psi\mid\Pi\Rightarrow\Sigma$ or $ {\bf L}\not\models \Gamma\Rightarrow\Delta\mid\varphi,\Pi\Rightarrow\Sigma,\psi$ or $ {\bf L}\not\models \Gamma\Rightarrow\Delta,\psi\mid\psi,\Pi\Rightarrow\Sigma,\varphi$. Let ($ \ast $) stands for the sentence `for each $ \gamma\in\Gamma $, $h(\gamma)=1$, for each $ \delta\in\Delta $, $ h(\delta)\not=1 $, for each $ \pi\in\Pi $, $ h(\pi)\not=0 $, for each $ \sigma\in\Sigma $, $ h(\sigma)=0 $'. Then:

(1) for each $ h $ there is a $ \chi $ such that ($ \chi\in\Gamma $ and $ h(\chi)\not=1 $) or ($ \chi\in\Delta $ and $ h(\chi)=1 $) or ($ \chi\in\Pi $ and $ h(\chi)=0$) or ($ \chi\in\Sigma $ and $ h(\chi)\not=0 $) or ($ \chi=\varphi\rightarrow\psi $ and $ h(\varphi\rightarrow\psi)=1 $);

(2a) for some $ h $, ($ \ast $), $ h(\varphi)=1$, and $ h(\psi)\not=1$; or 

(2b) for some $ h $, ($ \ast $), $ h(\varphi)\not=0$, and $ h(\psi)=0$; or 

(2c) for some $ h $, ($ \ast $), $ h(\varphi)=0$, and $ h(\psi)=\mathrm{u}$.

It easy to check that

(3a) if $ h(\varphi)=1$ and $ h(\psi)\not=1$, then $ h(\varphi\rightarrow\psi)\not=1 $;

(3b) if $ h(\varphi)\not=0$ and $ h(\psi)=0$, then $ h(\varphi\rightarrow\psi)=0$;

(3c) if $ h(\varphi)=0$ and $ h(\psi)=\mathrm{u}$, then $ h(\varphi\rightarrow\psi)=\mathrm{u}$.

Thus, we obtain a contradiction. Hence, $ {\bf L}\models \varphi,\Gamma\Rightarrow\Delta,\psi\mid\Pi\Rightarrow\Sigma$, $ {\bf L}\models \Gamma\Rightarrow\Delta\mid\varphi,\Pi\Rightarrow\Sigma,\psi$, and $ {\bf L}\models \Gamma\Rightarrow\Delta,\psi\mid\psi,\Pi\Rightarrow\Sigma,\varphi$. 
\end{proof}

\section{Comments on the Applicability of Bisequents}\label{Comments}

In this section we try to substantiate our claim concerning the uniformity of BSC with respect to three-valued logics. First of all we illustrate the method lying behind the construction of rules in BSC. Then we show that it allows us to obtain a finite characterisation of arbitrary three-valued operations (even those which are not axiomatisable in the standard sense), and that we can also express in BSC different notions of the consequence relation, considered in the works devoted to paraconsistent logics.

The uniformity of BSC follows from the fact that the rules are
not computed on the basis of any normal (disjunctive or conjunctive) form,
like in other approaches, but on the basis of geometrical insights connected with the tabular representation of the
respective connective. Geometrical insights are appropriate here: to establish the
premisses for the rule with the principal formula in one of the four positions in
a bisequent, we just examine its tabular representation. For
example, if the indicated values of the arguments form a rectangle, one premiss is
enough whereas in the case of more complex shapes two or three or even four premisses are
required.
Since the process of construction of rules on the basis of tables is not deterministic
we do not propose any algorithm to that end. More detailed general remarks concerning the process of constructing rules can be found in \cite{Indrzejczak3}. In the following we illustrate 
the process of construction with the example of Soboci\'nski's implication $ \rightarrow_S $ which shows several typical strategies of how to encode the content of matrices in BSC rules.

Consider first the simplest rule:

\begin{flushleft}
	$(\mathord{\Rightarrow}\mathord{\rightarrow_S}\mid)$ \, $\dfrac{\Gamma
		\Rightarrow \Delta, \psi \mid \varphi, \Pi\Rightarrow\Sigma}{\Gamma\Rightarrow\Delta, \varphi\rightarrow\psi \mid \Pi\Rightarrow\Sigma}$
\end{flushleft} 

The occurrence of $\varphi\rightarrow\psi$ happens in the succedent of the 1-sequent of the conclusion of the rule $(\mathord{\Rightarrow}\mathord{\rightarrow_S}\mid)$.
The corresponding partial matrix is as follows:

\1

\begin{tabular}{|c|ccc|}
	\hline
	$\underline{\rightarrow_S} $ & 1 & u & 0 \\
	\hline
	1 &  & 0 & 0 \\
	u &  & u & 0 \\
	0 &  &  &  \\
	\hline
\end{tabular} 

\1

We have $\varphi\rightarrow\psi$ is 0 or $u$ iff ($\varphi$ is 1 or $u$) \textit{and} ($\psi$ is $u$ or 0).  The formula $ \psi $ possesses the same semantic values as $\varphi\rightarrow\psi$ hence is also present in the succedent of the 1-sequent, but in the premiss. The formula $ \varphi $ (true or undefined) occurs in the antecedent of the 2-sequent of the premiss.

Let us consider the following rule:

\begin{flushleft}
	$(\mathord{\rightarrow_S}\mathord{\Rightarrow}\mid)$ \, $\dfrac{\Gamma
		\Rightarrow \Delta \mid \Pi\Rightarrow\Sigma, \varphi \qquad\psi, \Gamma\Rightarrow\Delta \mid
		\Pi \Rightarrow \Sigma } {\varphi\rightarrow\psi, \Gamma
		\Rightarrow \Delta \mid \Pi\Rightarrow\Sigma}$
\end{flushleft} 

The occurrence of $\varphi\rightarrow\psi$ happens in the antecedent of the 1-sequent of the conclusion of the rule $(\mathord{\rightarrow_S}\mathord{\Rightarrow}\mid)$.
The corresponding partial matrix is as follows:

\1

\begin{tabular}{|c|ccc|}
	\hline
	$\underline{\rightarrow_S} $ & 1 & u & 0 \\
	\hline
	1 & 1 &  &  \\
	u & 1 &  &  \\
	0 & 1 & 1 & 1 \\
	\hline
\end{tabular}

\1

We have $\varphi\rightarrow\psi$ is 1 iff $\varphi$ is 0 \textit{or} $\psi$ is 1.  The formula $ \psi $ possesses the same semantic values as $\varphi\rightarrow\psi$. It is also present in the antecedent of the 1-sequent, in the right premiss. The formula $ \varphi $ is 0, so it occurs in the succedent of the 2-sequent of the left premiss. Notice that we have two premisses due to the `\textit{or}' in the semantic condition.

Let us consider the following rule:

\begin{flushleft}
	$(\mathord{\mid\Rightarrow}\mathord{\rightarrow_S})$ \ $\dfrac{\varphi, \Gamma
		\Rightarrow \Delta, \psi \mid \Pi\Rightarrow\Sigma \qquad \Gamma\Rightarrow\Delta \mid \varphi, \Pi\Rightarrow\Sigma, \psi}{\Gamma\Rightarrow\Delta \mid \Pi\Rightarrow\Sigma, \varphi\rightarrow\psi}$
\end{flushleft}  

The occurrence of $\varphi\rightarrow\psi$ happens in the succedent of the 2-sequent of the conclusion of the rule $(\mathord{\mid\Rightarrow}\mathord{\rightarrow_S})$.
The corresponding partial matrix is as follows:

\1

\begin{tabular}{|c|ccc|}
	\hline
	$\underline{\rightarrow_S} $ & 1 & u & 0 \\
	\hline
	1 &  & 0 & 0 \\
	u &  &  & 0 \\
	0 &  &  &  \\
	\hline
\end{tabular} 

\1

We have $\varphi\rightarrow\psi$ is 0 iff [$\varphi$ is 1 \textit{and} ($\psi$ is $u$ or 0)] \textit{or} [$\psi$ is 0 \textit{and} ($\varphi$ is 1 or $u$)].  We have two premisses due to the `\textit{or}' in the semantic condition. The left one is determined by the condition `$\varphi$ is 1 \textit{and} ($\psi$ is $u$ or 0)'. So $ \varphi $ occurs in the antecedent of the 1-sequent of the left premiss and $ \psi $ occurs in its succedent. The right premiss is determined by the condition `$\psi$ is 0 \textit{and} ($\varphi$ is 1 or $u$)'. The occurrences of $ \varphi $ and $ \psi$ are similar to those in the left premiss, except that they appear in the 2-sequent, not in the 1-sequent.

Let us consider the last rule:

\begin{flushleft}
	{\small $(\mathord{\mid \rightarrow_S}\mathord{\Rightarrow})$ \ $\dfrac{\Gamma
			\Rightarrow \Delta, \varphi \mid \psi, \Pi\Rightarrow\Sigma \quad\psi, \Gamma\Rightarrow\Delta \mid
			\Pi \Rightarrow \Sigma \quad \Gamma\Rightarrow\Delta \mid \Pi\Rightarrow\Sigma, \varphi} {\Gamma
			\Rightarrow \Delta \mid \varphi\rightarrow\psi, \Pi\Rightarrow\Sigma}$}
\end{flushleft} 

The corresponding partial matrix is as follows:

\1

\begin{tabular}{|c|ccc|}
	\hline
	$\underline{\rightarrow_S} $ & 1 & u & 0 \\
	\hline
	1 & 1 &  &  \\
	u & 1 & u &  \\
	0 & 1 & 1 & 1 \\
	\hline
\end{tabular} 

\1

We have $\varphi\rightarrow\psi$ is 1 or $u$ iff: $\varphi$ is 0 \textit{or} $\psi$ is \textit{1} \textit{or} [($\varphi$ is $u$ or 0) \textit{and} ($\psi$ is 1 or $u$)]. The condition `$\varphi$ is 0' is expressed by the rightmost premiss, where $\varphi$ occurs in the succedent of the 2-sequent.  The condition `$\psi$ is 1' is expressed by the middle premiss, where $\psi$ occurs in the antecedent of the 1-sequent. The condition `($\varphi$ is $u$ or 0) \textit{and} ($\psi$ is 1 or $u$)' is expressed by the leftmost premiss, where $\varphi$ occurs in the succedent of the 1-sequent and $\psi$ occurs in the antecedent of the 2-sequent. In particular, the leftmost premiss corresponds to the following matrix:

\1

\begin{tabular}{|c|ccc|}
	\hline
	$\underline{\rightarrow_S} $ & 1 & u & 0 \\
	\hline
	1 &  &  &  \\
	u & 1 & u &  \\
	0 & 1 & 1 &  \\
	\hline
\end{tabular}

\1

In fact we can consider in this way arbitrary three-valued operations. Let us take into account two remarkable connectives due to Pa\l{}asi\'{n}ska \cite{Palasinska} (originally, she gave them numbers 7 and 8). 

\begin{center}
	\begin{tabular}{|c|ccc|}
		\hline
		$\underline{\circ_1} $ & 1 & u & 0 \\
		\hline
		1 & 1 & 1 & u \\
		u & 1 & 1 & 1 \\
		0 & 1 & 1 & u \\
		\hline
	\end{tabular}
	\begin{tabular}{|c|ccc|}
		\hline
		$\underline{\circ_2} $ & 1 & u & 0 \\
		\hline
		1 & 1 & 1 & u \\
		u & 1 & 1 & 1 \\
		0 & 1 & 1 & 1 \\
		\hline
	\end{tabular}
\end{center}

They do not have any corresponding expressions in natural language, but their importance lies in a different field: as follows from Pa\l{}asi\'{n}ska's research \cite{Palasinska}, the 1-matrices with $ \circ_1 $ or $\circ_2$ as their only connective are not finitely axiomatizable. Pa\l{}asi\'{n}ska had in mind rules of the following shape (notice that what she calls \textit{terms} we call \textit{formulae}): 
``By a \textit{standard inference rule}, or simply a \textit{rule}, we mean a pair $ \langle X,t\rangle$, where $ X $ is a finite set of terms and $ t $ is a term.'' \cite[p. 362]{Palasinska}. However, we are not dealing with standard inference rules but with bisequent rules. It appears that  what is impossible in the standard framework becomes possible in our bisequent approach, which shows that it is indeed a uniform and general method.

To characterise the first operator we need three rules:

\begin{flushleft}
	$(\mathord{\circ_1}\mathord{\Rightarrow}\mid)$ \, $\dfrac{\Gamma
		\Rightarrow \Delta, \varphi \mid \varphi, \Pi\Rightarrow\Sigma \qquad \Gamma\Rightarrow\Delta \mid\psi,
		\Pi \Rightarrow \Sigma } {\varphi\circ_1\psi, \Gamma
		\Rightarrow \Delta \mid \Pi\Rightarrow\Sigma}$
\end{flushleft} 

\begin{flushleft}
	$(\mathord{\Rightarrow}\mathord{\circ_1}\mid)$ \, $\dfrac{\varphi, \Gamma
		\Rightarrow \Delta \mid \Pi\Rightarrow\Sigma, \psi \qquad\Gamma\Rightarrow\Delta \mid\Pi \Rightarrow \Sigma, \varphi, \psi} {\Gamma
		\Rightarrow \Delta, \varphi\circ_1\psi \mid \Pi\Rightarrow\Sigma}$
\end{flushleft} 

\begin{flushleft}
	$(\mathord{\mid\mathord{\circ_1}\Rightarrow})$ \, $\dfrac{\varphi, \Gamma
		\Rightarrow \Delta \mid \Pi\Rightarrow\Sigma \qquad\Gamma\Rightarrow\Delta, \varphi \mid\Pi \Rightarrow \Sigma} {\Gamma
		\Rightarrow \Delta\mid \varphi\circ_1\psi, \Pi\Rightarrow\Sigma}$
\end{flushleft}

and the additional axiom schema: $\Gamma\Rightarrow\Delta\mid\Pi\Rightarrow\Sigma, \varphi\circ_1\psi$

\1

Note that we need this axiom, instead of the rule, since the formula $\varphi\circ_1\psi$ is never false. It is a general principle that if we characterise in BSC an operation with no output 0 we must add such kind of axiom schema. Dually, in case of the operation which never gives an output 1, we have to add an axiom schema with an arbitrary formula made by this operation in the antecedent of a 1-sequent.  Otherwise the formalisation of this operation is incomplete.

Note also that the rule $(\mathord{\mid\mathord{\circ_1}\Rightarrow})$ is somewhat arbitrary. To obtain the same effect we could use instead:

\begin{flushleft}
	$(\mathord{\mid\mathord{\circ_1}\Rightarrow})$ \, $\dfrac{\Gamma
		\Rightarrow \Delta \mid \varphi, \Pi\Rightarrow\Sigma \qquad\Gamma\Rightarrow\Delta\mid\Pi \Rightarrow \Sigma, \varphi} {\Gamma
		\Rightarrow \Delta\mid \varphi\circ_1\psi, \Pi\Rightarrow\Sigma}$
\end{flushleft} 

or one of these two rules but with $\psi$ instead of $\varphi$ as a side formula. Each variant covers all positions in the table which agree with the fact that the principal formula is never false, hence always true or undefined. The same effect can be obtained in totally different (and significantly more complex) way by using four-premiss rules with specific combinations of $\varphi, \psi$ in each premiss. We omit the details but it should be mentioned to emphasize the sense of our claim that the construction of rules on the basis of tabular representation of operations is not deterministic. The more bizarre operations we consider, the more options are possible, but simplicity should be always the important factor in choosing one of them. BSC characterisation of $\circ_2$ needs only simplification of the first two rules for $\circ_1$; in $(\mathord{\circ_2}\mathord{\Rightarrow}\mid)$ we have to delete the second occurrence of $\varphi$ in the first premiss, in $(\mathord{\Rightarrow}\mathord{\circ_2}\mid)$ we have to delete the second premiss. The remaining rule and the axiom are similar to those for $\circ_1$.

\1

We would like to finish this section with the consideration of
logics determined by different notions of consequence relations. Two relations considered in the text express informally the situation where either truth is preserved or non-falsity is preserved. But two other possibilities are open as well: $\Gamma\Rightarrow\,\mid\, \Rightarrow\psi$ corresponds to the notion of no-counterexample consequence (see e.g. Lehmann \cite{Lehmann}, Paoli \cite{Paoli}), whereas $\Rightarrow \varphi \mid  \Gamma\Rightarrow$ corresponds to the liberal consequence which leads from non-falsity to truth. So the only thing we have to change is the definition of proof in the part concerned with the root-sequent. 
It is an additional advantage of BSC that having rules established for the operations in the context of standard notion of consequence, we need only to change slightly the definition of proof while still keeping the rules and axioms intact.

\section{Syntactic Proof of Admissibility of Cut and other Structural Rules}\label{Syntactic}

The most suitable strategy for providing a constructive proof of cut admissibility for BSC is the one applied for the standard sequent calculus G3 in Negri and von Plato \cite{neg:str01}. Briefly, we must prove three things: (a) that atomic axiomatic sequents are sufficient for proving general cases; (b) that the admissibility of weakening, (syntactical) invertibility of all rules, and admissibility of contraction hold; and (c) that on this basis the simultaneous admissibility of both cut rules is provable by double induction on the height of proofs of their premisses and the complexity of cut-formula. Proving (a) in this context is necessary not only for showing the height-preserving invertibility of rules but also to simplify the proof of (c). 
The proof of part (b) is standard; the only specific and difficult cases are in proving some subcases of (a) and in proving some subcases of (c) by induction on the complexity of cut-formulae. Let us now present this proof in details.

\begin{lemma}\label{Axiomatic}
	Axiomatic sequents may be replaced with atomic axioms.
\end{lemma}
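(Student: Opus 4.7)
The plan is to proceed by induction on the complexity of the common formula $\varphi$ witnessing the axiomatic character of the bisequent, i.e.\ the formula $\varphi$ lying in $\Gamma\cap\Delta$, or in $\Pi\cap\Sigma$, or in $\Gamma\cap\Sigma$. The base case is immediate: if $\varphi$ is atomic, the bisequent already is an atomic axiom of the required kind.

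For the inductive step I fix $\varphi$ compound and split into cases by the main connective of $\varphi$ and by the axiomatic position. The general strategy is to apply, reading bottom-up, the appropriate introduction rule to each of the two marked occurrences of $\varphi$, thereby replacing both by their immediate subformulae. The choice of rule is dictated by the position of each occurrence (left or right of the 1-sequent or the 2-sequent). The resulting premisses then contain a proper subformula of $\varphi$ in one of the three axiomatic intersections, and so are derivable from atomic axioms by the induction hypothesis.

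To illustrate: if $\varphi=\psi\wedge\chi\in\Gamma\cap\Delta$, applying $(\mathord{\wedge}\mathord{\Rightarrow\mid})$ to the antecedent copy and $(\mathord{\Rightarrow}\mathord{\wedge\mid})$ to the succedent copy yields two premisses containing $\psi\in\Gamma\cap\Delta$ and $\chi\in\Gamma\cap\Delta$ respectively. If $\varphi=\neg\psi\in\Gamma\cap\Delta$, the rules $(\mathord{\neg}\mathord{\Rightarrow\mid})$ and $(\mathord{\Rightarrow}\mathord{\neg\mid})$ transport the side formula across the bisequent boundary, leaving a single premiss $\Gamma\Rightarrow\Delta\mid\psi,\Pi\Rightarrow\Sigma,\psi$ with $\psi\in\Pi\cap\Sigma$. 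The diagonal case $\varphi\in\Gamma\cap\Sigma$ is analogous: the two rule applications now act in different components of the bisequent, yet every premiss still ends up with a subformula of $\varphi$ in some axiomatic intersection.

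The main obstacle is the combinatorial bookkeeping: three axiomatic positions times all the connectives and negations treated in sections \ref{StrongKleene} and \ref{OtherLogics}, aggravated by the fact that some rules have up to three premisses and shuffle formulae between the 1- and 2-sequents. The most delicate subcases will be the diagonal intersection combined with the multi-premiss weak Kleene rules and with Soboci\'nski's $\rightarrow_S'$, as well as the rules for the Post-style negations $\neg_P,\neg_{DP}$ whose premisses already duplicate the side formula. That the induction must always go through is guaranteed a priori by the invertibility part of Theorem \ref{Theorem1} and its analogues in section \ref{OtherLogics}: an axiomatic bisequent is valid, and validity is preserved upwards through every rule, so every backward rule application produces premisses which are themselves axiomatic (in the generalised sense) and of strictly lower complexity, exactly as the induction hypothesis requires.
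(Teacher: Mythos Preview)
Your overall strategy---induction on the complexity of the witnessing formula, decomposing both displayed occurrences by the appropriate introduction rules read upwards---is exactly what the paper does, and your illustrative cases are handled correctly.

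There is, however, a real gap in the final paragraph. You claim that the induction ``must always go through'' because semantic invertibility (Theorem~\ref{Theorem1}) guarantees the premisses are valid, and hence ``axiomatic (in the generalised sense)''. That inference is unsound: validity does \emph{not} imply axiomaticity. A bisequent can be valid without any formula lying in $\Gamma\cap\Delta$, $\Pi\cap\Sigma$ or $\Gamma\cap\Sigma$; indeed that is the generic situation for non-atomic valid bisequents. So the semantic argument only tells you the premisses are valid, which is not what your induction hypothesis requires---it requires that each premiss again have a witnessing subformula in one of the three axiomatic intersections. This must be checked by direct inspection of the rule schemata, case by case, exactly the ``combinatorial bookkeeping'' you were hoping to avoid. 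The paper does precisely this: for each connective and each of the three axiom types it exhibits the derivation and verifies that every leaf is axiomatic with a proper subformula as witness (the \L ukasiewicz implication example in the paper's proof is representative). There is no semantic shortcut here; you should drop the appeal to Theorem~\ref{Theorem1} and instead state that the verification is routine once the right pair of rules is chosen, perhaps working out one of the harder cases (say $\rightarrow_S'$ in the diagonal position, or a weak-Kleene three-premiss rule) to convince the reader.
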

\begin{proof}
	By induction on the complexity of active formula. As an example, let us prove this lemma for \L ukasiewicz's implication. We have three cases, by the number of types of axioms.  Consider the first case:
\begin{prooftree}
\EnableBpAbbreviations
\AXC{$ \varphi,\Gamma
	\Rightarrow \Delta, \varphi \mid \Pi\Rightarrow\Sigma,\psi $}	
\AXC{$ \varphi,\Gamma\Rightarrow\Delta \mid \psi,
	\Pi \Rightarrow \Sigma,\psi $}
\RL{$(\mathord{\mid\rightarrow}\mathord{\Rightarrow})$}
\BIC{$ \varphi,\Gamma
	\Rightarrow \Delta \mid \varphi\rightarrow\psi, \Pi\Rightarrow\Sigma,\psi $}
\RL{$(\mathord{\mid\Rightarrow}\mathord{\rightarrow})$}
\UIC{$ \Gamma
	\Rightarrow \Delta \mid \varphi\rightarrow\psi, \Pi\Rightarrow\Sigma,\varphi\rightarrow\psi $}
\end{prooftree}

Consider the second case. We apply  $(\mathord{\rightarrow_L}\mathord{\Rightarrow}\mid)$ twice as follows.

\begin{center}
{\footnotesize 	$ \dfrac{\varphi,\Gamma
		\Rightarrow \Delta, \varphi,\psi \mid \psi, \Pi\Rightarrow\Sigma \qquad \varphi, \psi, \Gamma\Rightarrow\Delta,\psi \mid
		\Pi \Rightarrow \Sigma \qquad \varphi,\Gamma\Rightarrow\Delta,\psi \mid \Pi\Rightarrow\Sigma, \varphi}{\varphi,\varphi\rightarrow\psi, \Gamma
		\Rightarrow \Delta,\psi \mid \Pi\Rightarrow\Sigma} $}
\end{center}
\begin{center}
	{\footnotesize 	$ \dfrac{\Gamma
			\Rightarrow \Delta, \varphi \mid \varphi,\psi, \Pi\Rightarrow\Sigma,\psi \qquad  \psi, \Gamma\Rightarrow\Delta \mid
			\varphi,\Pi \Rightarrow \Sigma,\psi \qquad \Gamma\Rightarrow\Delta \mid \varphi,\Pi\Rightarrow\Sigma, \varphi,\psi}{\varphi\rightarrow\psi, \Gamma
			\Rightarrow \Delta \mid \varphi,\Pi\Rightarrow\Sigma,\psi} $}
\end{center}

Now we apply $(\mathord{\Rightarrow}\mathord{\rightarrow_L}\mid)$ as follows.

\begin{center}
	$ \dfrac{\varphi,\varphi\rightarrow\psi, \Gamma
		\Rightarrow \Delta,\psi \mid \Pi\Rightarrow\Sigma \qquad \varphi\rightarrow\psi, \Gamma
		\Rightarrow \Delta \mid \varphi,\Pi\Rightarrow\Sigma,\psi}{\varphi\rightarrow\psi, \Gamma
		\Rightarrow \Delta,\varphi\rightarrow\psi \mid \Pi\Rightarrow\Sigma} $
\end{center}	

Consider the third case. We apply the rules $(\mathord{\rightarrow_L}\mathord{\Rightarrow}\mid)$ and $(\mathord{\mid\Rightarrow}\mathord{\rightarrow})$:
{\footnotesize \begin{prooftree}
	\EnableBpAbbreviations
\AXC{$\varphi,\Gamma
	\Rightarrow \Delta, \varphi \mid \psi, \Pi\Rightarrow\Sigma,\psi \qquad \varphi, \psi, \Gamma\Rightarrow\Delta \mid
	\Pi \Rightarrow \Sigma,\psi \qquad \varphi,\Gamma\Rightarrow\Delta \mid \Pi\Rightarrow\Sigma, \varphi,\psi $}
\UIC{$ \varphi,\varphi\rightarrow\psi, \Gamma
	\Rightarrow \Delta \mid \Pi\Rightarrow\Sigma,\psi $}
\UIC{$ \varphi\rightarrow\psi, \Gamma
	\Rightarrow \Delta \mid \Pi\Rightarrow\Sigma,\varphi\rightarrow\psi $}
\end{prooftree}}
	
\end{proof}

Lemma \ref{Axiomatic} allows us to consider in the following only proofs with atomic axioms.

We write $ \vdash_n\Gamma
\Rightarrow \Delta \mid \Pi\Rightarrow\Sigma$  iff a proof of $ \Gamma
\Rightarrow \Delta \mid \Pi\Rightarrow\Sigma $ is of height at most $ n $. We write $ \vdash_{<n}\Gamma
\Rightarrow \Delta \mid \Pi\Rightarrow\Sigma$  iff $ \Gamma
\Rightarrow \Delta \mid \Pi\Rightarrow\Sigma $ is provable with a proof of height $ < n $. 
Consider the following versions of weakening which can be formulated in a bisequent framework. 

\1

\noindent\begin{tabular}{ll}	
	$(\mathord{W}\mathord{\Rightarrow\mid})$ \ $\dfrac{\Gamma
		\Rightarrow \Delta \mid S}{\varphi, \Gamma \Rightarrow \Delta \mid S}$ &
	$(\mathord{\Rightarrow}\mathord{W\mid})$ \ $\dfrac{\Gamma
		\Rightarrow \Delta \mid S}{\Gamma
		\Rightarrow \Delta, \varphi \mid S}$\\[16pt]
	
	$(\mathord{\mid W}\mathord{\Rightarrow})$ \ $\dfrac{S \mid \Pi\Rightarrow\Sigma}{S\mid \varphi, \Pi\Rightarrow\Sigma}$ &
	$(\mathord{\mid\Rightarrow}\mathord{W})$ \ $\dfrac{S \mid \Pi\Rightarrow\Sigma}{S \mid \Pi\Rightarrow\Sigma, \varphi}$\\[16pt]	
\end{tabular}
\begin{theorem}[Height-preserving admissibility of Weakening in BSC-L]\label{Weakening}
Let $ \bf L$ be one of the logics considered in the paper. If BSK-L $ \vdash_n \Gamma\Rightarrow\Delta\mid\Pi\Rightarrow\Sigma$, then BSK-L $ \vdash_n \Gamma^\prime\Rightarrow\Delta^\prime\mid\Pi^\prime\Rightarrow\Sigma^\prime$, where $ \Gamma\subseteq\Gamma^\prime $, $ \Delta\subseteq\Delta^\prime $, $ \Pi\subseteq\Pi^\prime $, and $ \Sigma\subseteq\Sigma^\prime $.	
\end{theorem}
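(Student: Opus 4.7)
The plan is a standard Negri--von~Plato style induction on the height $n$ of the derivation of $\Gamma\Rightarrow\Delta\mid\Pi\Rightarrow\Sigma$, carried out uniformly for all four flavours of weakening at once (so the inductive hypothesis states: if $\vdash_n \Gamma\Rightarrow\Delta\mid\Pi\Rightarrow\Sigma$ then $\vdash_n \Gamma',\Delta',\Pi',\Sigma'$ for any enlargement of the four multisets). By Lemma~\ref{Axiomatic} it suffices to assume that the proof uses only atomic axioms, which simplifies the base case.

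For the base case $n=0$, the bisequent is either an atomic axiom of one of the three forms ($\Gamma\cap\Delta\neq\emptyset$, $\Gamma\cap\Sigma\neq\emptyset$ or $\Pi\cap\Sigma\neq\emptyset$) or one of the constant axioms for $\top,\bot,U$. In every case an enlargement of $\Gamma,\Delta,\Pi,\Sigma$ preserves the relevant intersection (multiset inclusion only adds to the intersections), and likewise preserves the displayed occurrence of any constant, so the weakened bisequent is still axiomatic of height $0$.

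For the inductive step, suppose the last rule applied is some $R$. The key structural observation, already noted after the presentation of the rules in Section~\ref{StrongKleene}, is that \emph{every} rule of every BSC-L given in this paper is context-independent: the side context in each of the four positions is just an unconstrained metavariable. Hence for any target enlargement $(\Gamma',\Delta',\Pi',\Sigma')$ of $(\Gamma,\Delta,\Pi,\Sigma)$, we apply the induction hypothesis to each premiss of $R$ after distributing the additional formulas to the same positions they occupy in the conclusion; each premiss is then provable in strictly smaller height, and re-applying $R$ yields the weakened conclusion at height $n$. This step is uniform across unary, binary, two-premiss and three-premiss rules and across all the logics of Sections~\ref{StrongKleene}--\ref{Comments}, because the principal-formula schema of each rule is insensitive to the side context.

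The only point that requires a moment of care, and which I expect to be the single mild obstacle, is that several rules (e.g.\ $(\mathord{\mid\Rightarrow}\mathord{\neg_P})$, $(\mathord{\neg_{DP}}\mathord{\Rightarrow}\mid)$, the three-premiss rules for weak and McCarthy connectives, and $(\mathord{\Rightarrow}\mathord{\rightarrow_S'}\mid)$) have \emph{side} formulas sitting simultaneously in the 1- and 2-sequent components of their premises. One has to check that when we are weakening by a formula placed in, say, the antecedent of the 1-sequent of the conclusion, we add it to the antecedent of the 1-sequent of \emph{every} premiss, and similarly in the other three cases; since this is exactly how $\Gamma,\Delta,\Pi,\Sigma$ already appear in every premiss schema, the reapplication of the rule goes through verbatim. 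No rule moves the weakening formula across components or duplicates it, so height is preserved.
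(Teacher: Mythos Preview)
Your proposal is correct and takes essentially the same approach as the paper: induction on the height of the derivation, with the base case handled by closure of axioms under enlargement and the inductive step handled by applying the induction hypothesis to each premiss and re-applying the (context-independent) rule. The paper's proof is terser---it simply gives two illustrative cases $(\Rightarrow\rightarrow\mid)$ and $(\mid\rightarrow\Rightarrow)$---while you spell out more explicitly the uniformity across rule shapes and flag the multi-component side-formula phenomenon, but the underlying argument is identical.
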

\begin{proof}
	By induction on the height of the premiss. In the basis, if $ \Gamma\Rightarrow\Delta\mid\Pi\Rightarrow\Sigma $ is
	axiomatic, then $ \vdash_0\Gamma\Rightarrow\Delta\mid\Pi\Rightarrow\Sigma $ implies $ \vdash_0 \Gamma^\prime\Rightarrow\Delta^\prime\mid\Pi^\prime\Rightarrow\Sigma^\prime$. In the
	inductive step, we have to consider the cases generated by the logical rules. We examine some examples.
	
Case $ (\Rightarrow\rightarrow\mid) $; $ \Gamma\Rightarrow\Delta\mid\Pi\Rightarrow\Sigma:=\Gamma\Rightarrow\Theta,\varphi\rightarrow\psi\mid\Pi\Rightarrow\Sigma $. The premiss is $ \Gamma\Rightarrow\Theta,\psi\mid\varphi,\Pi\Rightarrow\Sigma $ with a proof having height $ < n $. By the induction hypothesis, BSC-L $ \vdash_{<n}\Gamma^\prime\Rightarrow\Theta^\prime,\psi\mid\varphi,\Pi^\prime\Rightarrow\Sigma^\prime $, where $ \Theta^\prime=\Delta^\prime\setminus\{\varphi\rightarrow\psi\} $. Hence, by $ (\Rightarrow\rightarrow\mid) $, BSC-L $ \vdash_{n}\Gamma^\prime\Rightarrow\Delta^\prime\mid\Pi^\prime\Rightarrow\Sigma^\prime $.

Case $ (\mid\rightarrow\Rightarrow) $; $ \Gamma\Rightarrow\Delta\mid\Pi\Rightarrow\Sigma:=\Gamma\Rightarrow\Delta\mid\varphi\rightarrow\psi,\Theta\Rightarrow\Sigma$. As for the premisses, we have BSC-L $ \vdash_{<n}\Gamma\Rightarrow\Delta,\varphi\mid\Theta\Rightarrow\Sigma$ as well as BSC-L  $ \vdash_{<n}\Gamma\Rightarrow\Delta\mid\psi,\Theta\Rightarrow\Sigma$. By the induction hypothesis, it holds that BSC-L $ \vdash_{<n}\Gamma^\prime\Rightarrow\Delta^\prime,\varphi\mid\Theta^\prime\Rightarrow\Sigma^\prime$ and BSC-L  $ \vdash_{<n}\Gamma^\prime\Rightarrow\Delta^\prime\mid\psi,\Theta^\prime\Rightarrow\Sigma^\prime$, where $ \Theta^\prime=\Pi^\prime\setminus\{\varphi\rightarrow\psi\} $. Thus, by $ (\mid\rightarrow\Rightarrow) $, BSC-L $ \vdash_{n}\Gamma^\prime\Rightarrow\Delta^\prime\mid\Pi^\prime\Rightarrow\Sigma^\prime $.
\end{proof}

\begin{theorem}[Height-preserving invertibility]\label{Invertability}
For any instance of the application
of any logical rule, if the conclusion has a proof of height $ n $, then the premisses have
proofs of height $\leqslant n $.	
\end{theorem}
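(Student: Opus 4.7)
The plan is to induct on the height $n$ of the derivation of the conclusion, handling each logical rule separately. By Lemma \ref{Axiomatic} we may assume axioms to be atomic, which makes the base case clean: if the conclusion of a rule $R$ has a proof of height $0$, then its axiomatic character is witnessed by a pair of atomic formulas sitting in the passive contexts, and those contexts are preserved in every premiss of $R$ (possibly enriched with strictly smaller side formulas). Hence each premiss of $R$ is itself an atomic axiom, provable of height $0$.

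For the inductive step I fix a conclusion $C$ of $R$ with derivation of height $n$ ending in some last rule $R'$, and I seek derivations of the premisses of $R$ applied to $C$. If $R'=R$ acts on the same occurrence of the principal formula, the desired premisses already appear at height $n-1$. Otherwise the principal formula of $R$ is passive in $R'$ and therefore occurs in the same position in every premiss $Q_j$ of $R'$; applying the induction hypothesis to each $Q_j$ yields derivations of the corresponding $R$-premisses of height $<n$, and one application of $R'$ at the bottom—together with height-preserving weakening from Theorem \ref{Weakening} to align contexts when the premisses of $R$ introduce additional side formulas—delivers each premiss of $R$ at height $\leq n$. Because all BSC rules are context-independent and satisfy the subformula property, no structural obstruction to the permutation of $R'$ with $R$ arises.

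The main obstacle lies in rules whose premisses carry repeated occurrences of a side formula across the two component sequents—in particular $(\mathord{\mid\Rightarrow}\mathord{\neg_P})$ and $(\mathord{\neg_{DP}}\mathord{\Rightarrow\mid})$ for the Post-style cyclic negations, the three-premiss rules for the weak Kleene, McCarthy and Komendantskaya connectives, and the rules for Soboci\'nski's $\rightarrow_S'$. For these, permuting $R'$ below $R$ requires invoking the induction hypothesis several times on the same $Q_j$ so as to reproduce the duplicated side-formula occurrences, after which weakening realigns the contexts. A subtler point is that when $R$ has several premisses sharing a context, the invocations of the induction hypothesis must yield matching contexts before $R'$ can be reapplied; context-independence of the calculus guarantees this, but it has to be checked case by case. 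The verification is routine rule by rule yet tedious, and is the only genuinely non-obvious component of the argument.
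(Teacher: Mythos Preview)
Your proposal is correct and follows essentially the same strategy as the paper: induction on the height, with the base case handled via atomic axioms and the inductive step split according to whether the displayed principal formula coincides with the principal formula of the last rule. One minor point: the appeal to height-preserving weakening and the worry about ``reproducing duplicated side-formula occurrences'' are unnecessary---because all BSC rules are context-independent, inverting $R$ on each premiss $Q_j$ of $R'$ already yields exactly the premisses of $R'$ applied to $P_i$, with contexts matching automatically, so a single reapplication of $R'$ suffices without any realignment.
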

\begin{proof}
By induction on the height.	As an example, consider the case $ (\Rightarrow\rightarrow\mid) $. In the basis, we deal with an axiom $ p,\Gamma\Rightarrow\Delta,p,\varphi\rightarrow\psi\mid\Pi\Rightarrow\Sigma $. Then $ p,\Gamma\Rightarrow\Delta,p,\psi\mid\varphi,\Pi\Rightarrow\Sigma $ is an axiom as well. Consider the induction step. We have  $\vdash_n\Gamma\Rightarrow\Delta,\varphi\rightarrow\psi\mid\Pi\Rightarrow\Sigma $. There are two subcases depending on whether the formula $ \varphi\rightarrow\psi $ is principal. Suppose that it is principal. Then $\vdash_{n-1}\Gamma\Rightarrow\Delta,\psi\mid\varphi,\Pi\Rightarrow\Sigma $. Suppose that it is not principal. Then we have to consider the last rule application. Since all our rules are context independent, we may demonstrate it schematically. Suppose that the last rule is $ (\neg \Rightarrow\mid) $. Then we have: 
\begin{center}
	$ \dfrac{\Gamma\Rightarrow\Delta,\varphi\rightarrow\psi\mid\Pi\Rightarrow\Sigma,\omega}{\neg\omega,\Gamma\Rightarrow\Delta,\varphi\rightarrow\psi\mid\Pi\Rightarrow\Sigma} $
\end{center}

By the inductive hypothesis, $ \Gamma\Rightarrow\Delta,\psi\mid\varphi,\Pi\Rightarrow\Sigma,\omega $ is provable with lower height. By the same rule we get $ \neg\omega,\Gamma\Rightarrow\Delta,\psi\mid\varphi,\Pi\Rightarrow\Sigma $.

Suppose that the last rule is $(\mid\neg_P\Rightarrow)$. Then we have: 
\begin{center}
	$ \dfrac{\Gamma\Rightarrow\Delta,\varphi\rightarrow\psi\mid\Pi\Rightarrow\Sigma,\omega\qquad\omega,\Gamma\Rightarrow\Delta,\varphi\rightarrow\psi\mid\Pi\Rightarrow\Sigma}{\Gamma\Rightarrow\Delta,\varphi\rightarrow\psi\mid\neg\omega,\Pi\Rightarrow\Sigma} $
\end{center}

By the inductive hypothesis, $ \Gamma\Rightarrow\Delta,\psi\mid\varphi,\Pi\Rightarrow\Sigma,\omega $ and $ \omega,\Gamma\Rightarrow\Delta,\psi\mid\varphi,\Pi\Rightarrow\Sigma $ are provable with lower height. By the same rule we get $ \Gamma\Rightarrow\Delta,\psi\mid\varphi,\neg\omega,\Pi\Rightarrow\Sigma $. The other cases are proved similarly.
\end{proof}

Consider the following versions of contraction which can be formulated in a bisequent framework. 

\noindent\begin{tabular}{ll}	
	$(\mathord{C}\mathord{\Rightarrow\mid})$ \ $\dfrac{\varphi,\varphi,\Gamma
		\Rightarrow \Delta \mid S}{\varphi, \Gamma \Rightarrow \Delta \mid S}$ &
	$(\mathord{\Rightarrow}\mathord{C\mid})$ \ $\dfrac{\Gamma
		\Rightarrow \Delta, \varphi, \varphi \mid S}{\Gamma
		\Rightarrow \Delta, \varphi \mid S}$\\[16pt]
	
	$(\mathord{\mid C}\mathord{\Rightarrow})$ \ $\dfrac{S \mid\varphi,\varphi, \Pi\Rightarrow\Sigma}{S\mid \varphi, \Pi\Rightarrow\Sigma}$ &
	$(\mathord{\mid\Rightarrow}\mathord{C})$ \ $\dfrac{S \mid \Pi\Rightarrow\Sigma, \varphi, \varphi}{S \mid \Pi\Rightarrow\Sigma, \varphi}$\\[16pt]	
\end{tabular}
\begin{theorem}[Height-preserving admissibility of contraction]\label{Contraction}
	$  $
\begin{itemize}\itemsep=0pt
\item	BSC-L $ \vdash_n\varphi,\varphi,\Gamma
\Rightarrow \Delta \mid S $ implies BSC-L $ \vdash_n\varphi,\Gamma
\Rightarrow \Delta \mid S $;
\item BSC-L $ \vdash_n\Gamma
\Rightarrow \Delta,\varphi,\varphi \mid S $ implies BSC-L $ \vdash_n\Gamma
\Rightarrow \Delta,\varphi \mid S $;
\item BSC-L $\vdash_nS \mid\varphi,\varphi, \Pi\Rightarrow\Sigma $ implies BSC-L $\vdash_nS \mid\varphi, \Pi\Rightarrow\Sigma $;
\item BSC-L $\vdash_nS \mid \Pi\Rightarrow\Sigma,\varphi,\varphi $ implies BSC-L $\vdash_nS \mid\Pi\Rightarrow\Sigma,\varphi $.
\end{itemize}
\end{theorem}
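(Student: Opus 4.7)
The plan is to proceed by simultaneous induction on the derivation height $n$ for all four contraction statements at once. The reason for taking them together rather than one at a time is specific to BSC: in many rules of BSC-K$_3$ (for $\neg$, $\rightarrow$, and all the weak-Kleene rules) the principal formula sits in one component of the bisequent while its active subformulas migrate to the other component. Consequently, once we invert the last rule in order to handle a principal-formula contraction, the residual contractions we need from the induction hypothesis land in different positions (1-antecedent, 1-succedent, 2-antecedent, 2-succedent), and only a simultaneous statement across all four cases closes the induction.

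For the base case $n=0$, I invoke Lemma \ref{Axiomatic} to assume the derivation is an atomic axiom, witnessed by nonempty $\Gamma\cap\Sigma$, $\Gamma\cap\Delta$, or $\Pi\cap\Sigma$. Since the multiset intersection conditions are insensitive to deleting one copy of a doubled formula — the second copy still witnesses the intersection — the contracted bisequent remains axiomatic at height $0$. For the inductive step, I split on whether the duplicated formula $\varphi$ is principal in the last rule $R$. If $\varphi$ is not principal, then both copies appear as side context in every premise; the induction hypothesis contracts each premise at height $<n$, and context-independence of $R$ lets me reapply it, preserving height.

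The interesting case is when one copy of $\varphi$ is principal in $R$. Then the premise(s) of $R$ contain only the remaining (non-principal) copy of $\varphi$ together with the active subformulas of $\varphi$ (possibly in the other component of the bisequent). I apply Theorem \ref{Invertability} to each premise, inverting with respect to that surviving copy of $\varphi$; this yields derivations of the fully decomposed premises of height $<n$, in which the active subformulas may now appear twice (once from the inversion and once from the original rule application). I then apply the induction hypothesis — possibly several times and in several of the four positions — to contract those duplications, and finally reapply $R$ to recover a single principal $\varphi$ in the desired position. For a concrete illustration: if $\neg\varphi,\neg\varphi,\Gamma\Rightarrow\Delta\mid\Pi\Rightarrow\Sigma$ is derived by $(\neg\Rightarrow\mid)$ applied to one copy, the premise is $\neg\varphi,\Gamma\Rightarrow\Delta\mid\Pi\Rightarrow\Sigma,\varphi$; inverting the same rule on the remaining $\neg\varphi$ gives $\Gamma\Rightarrow\Delta\mid\Pi\Rightarrow\Sigma,\varphi,\varphi$, to which I apply the $(\mid\Rightarrow C)$ instance of the induction hypothesis, then reapply $(\neg\Rightarrow\mid)$.

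The main obstacle I anticipate is the bookkeeping for rules whose premises already display two occurrences of the same active formula, such as $(\mid\Rightarrow\neg_P)$, $(\neg_{DP}\Rightarrow\mid)$, and in particular the three-premiss rules for $\wedge_w,\vee_w,\rightarrow_w$, $\wedge_L,\vee_L,\rightarrow_L$, and Soboci\'nski's connectives. In these cases the invertibility step produces up to four occurrences of an active subformula spread across both components of the bisequent, and the induction hypothesis must be invoked repeatedly and in more than one of the four positions before $R$ can be reapplied. Once one verifies — rule by rule, uniformly in the shape of the premises — that every such post-inversion bisequent has strictly smaller height and that only contractions covered by the IH are required, the induction goes through. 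The other cases (the easy structural rules and the purely one-premiss rules for modalities, $\neg_H$, $\neg_B$, Sette's and Carnielli's connectives) are routine variants of the illustrative case above.
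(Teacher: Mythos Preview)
Your proposal is correct and follows essentially the same approach as the paper: induction on the height of the derivation, splitting on whether the contracted formula is principal, using height-preserving invertibility (Theorem~\ref{Invertability}) in the principal case to expose duplicated subformulas, then applying the induction hypothesis and reapplying the rule. The paper illustrates the principal case with the three-premiss rule $(\vee_w\Rightarrow\mid)$ rather than your negation example, but the mechanics are identical; your explicit emphasis on the \emph{simultaneous} induction over all four positions is a helpful clarification that the paper leaves implicit.
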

\begin{proof}
By induction on the height of proof of $ \varphi,\varphi,\Gamma
\Rightarrow \Delta \mid S $	(the other cases are considered similarly). In the basis, we have that $ \vdash_0\varphi,\varphi,\Gamma
\Rightarrow \Delta \mid S $ and that it is an axiom. Hence, $ \vdash_0\varphi,\Gamma
\Rightarrow \Delta \mid S $ and it is an axiom as well.

As for the induction step, we need to distinguish the cases depending on whether $ \varphi $ is principal. Suppose that $ \varphi $ is not principal. Then $ \varphi,\varphi,\Gamma
\Rightarrow \Delta \mid S $ is derived from $ \varphi,\varphi,\Gamma^\prime
\Rightarrow \Delta^\prime \mid S^\prime $ by a one-premiss
rule or additionally from $ \varphi,\varphi,\Gamma^{\prime\prime}
\Rightarrow \Delta^{\prime\prime} \mid S^{\prime\prime} $ by a two-premiss rule or also from $ \varphi,\varphi,\Gamma^{\prime\prime\prime}
\Rightarrow \Delta^{\prime\prime\prime} \mid S^{\prime\prime\prime} $ by a three-premiss rule. By the induction hypothesis, $\vdash_{<n} \varphi,\Gamma^\prime
\Rightarrow \Delta^\prime \mid S^\prime $, $\vdash_{<n} \varphi,\Gamma^{\prime\prime}
\Rightarrow \Delta^{\prime\prime} \mid S^{\prime\prime} $ and $\vdash_{<n} \varphi,\Gamma^{\prime\prime\prime}
\Rightarrow \Delta^{\prime\prime\prime} \mid S^{\prime\prime\prime} $. The application of the same rules yields $\vdash_{<n} \varphi,\Gamma
\Rightarrow \Delta \mid S $.

Suppose that $ \varphi $ is principal. Assume that $ \varphi $ is $ \chi\vee\omega $ and is obtained by the rule $ (\vee_w\Rightarrow\mid) $. Then the sequent in question has the form $$ \chi\vee\omega,\chi\vee\omega,\Gamma\Rightarrow\Delta\mid \Pi\Rightarrow\Sigma $$ and is obtained from the following premisses, each of them having a proof of the height $<n$: 
\begin{enumerate}\itemsep=0pt
\item $  \chi,\omega,\chi\vee\omega,\Gamma\Rightarrow\Delta\mid \Pi\Rightarrow\Sigma $
\item $ \chi,\chi\vee\omega,\Gamma\Rightarrow\Delta\mid \Pi\Rightarrow\Sigma,\omega $
\item $ \omega,\chi\vee\omega,\Gamma\Rightarrow\Delta\mid \Pi\Rightarrow\Sigma,\chi $.
\end{enumerate}

By Theorem \ref{Invertability}, \begin{enumerate}\itemsep=0pt
\item	\begin{enumerate}\itemsep=0pt
\item	$\vdash_{<n-1}  \chi,\omega,\chi,\omega,\Gamma\Rightarrow\Delta\mid \Pi\Rightarrow\Sigma $,
\item $\vdash_{<n-1}  \chi,\omega,\chi,\Gamma\Rightarrow\Delta\mid \Pi\Rightarrow\Sigma,\omega $, 
\item $\vdash_{<n-1}  \chi,\omega,\omega,\Gamma\Rightarrow\Delta\mid \Pi\Rightarrow\Sigma,\chi $;
\end{enumerate}
\item \begin{enumerate}\itemsep=0pt
\item $\vdash_{<n-1} \chi,\chi,\omega,\Gamma\Rightarrow\Delta\mid \Pi\Rightarrow\Sigma,\omega $,
\item $\vdash_{<n-1} \chi,\chi,\Gamma\Rightarrow\Delta\mid \Pi\Rightarrow\Sigma,\omega,\omega $,
\item $\vdash_{<n-1} \chi,\omega,\Gamma\Rightarrow\Delta\mid \Pi\Rightarrow\Sigma,\omega,\chi $;
\end{enumerate}
\item \begin{enumerate}\itemsep=0pt
	\item $\vdash_{<n-1} \omega,\chi,\omega,\Gamma\Rightarrow\Delta\mid \Pi\Rightarrow\Sigma,\chi $,
	\item $\vdash_{<n-1} \omega,\chi,\Gamma\Rightarrow\Delta\mid \Pi\Rightarrow\Sigma,\chi,\omega $,
	\item $\vdash_{<n-1} \omega,\omega,\Gamma\Rightarrow\Delta\mid \Pi\Rightarrow\Sigma,\chi,\chi $.	
\end{enumerate}
\end{enumerate}

Consider the sequents 1. (a), 2. (b), and 3. (c). We apply the induction hypothesis to each of them twice and get 
\begin{itemize}\itemsep=0pt
\item $\vdash_{<n-1}  \chi,\omega,\Gamma\Rightarrow\Delta\mid \Pi\Rightarrow\Sigma $,
\item $\vdash_{<n-1} \chi,\Gamma\Rightarrow\Delta\mid \Pi\Rightarrow\Sigma,\omega $,
	\item $\vdash_{<n-1} \omega,\Gamma\Rightarrow\Delta\mid \Pi\Rightarrow\Sigma,\chi$.
\end{itemize}

Now it is time to apply the rule $ (\vee_w\Rightarrow\mid) $. As a result, we obtain that $\vdash_{n}  \chi\vee\omega,\Gamma\Rightarrow\Delta\mid \Pi\Rightarrow\Sigma $. The other cases are handled similarly.
\end{proof}

In a bisequent framework there are two versions of the cut rule  presented on page \pageref{Cut}\footnote{In fact one may consider also other forms of cut, see e.g. \cite{Fjellstad1}, but these can be easily shown to be derivable in the presence of the two variants of cut we consider.}. We prove the admissibility of both of them by Dragalin's method \cite{Dragalin} (see also  \cite{Indrzejczak2,TroelstraSchwichtenberg,neg:str01} for the presentation of this method; in particular, in \cite{Indrzejczak2}, this method is applied to structured SC which formalize some many-valued logics). 
\begin{theorem}[Admissibility of $ (Cut\mid) $ and $ (\mid Cut) $]
If BSC-L $ \vdash \Gamma^\prime\Rightarrow\Delta^\prime,\varphi\mid \Lambda^\prime\Rightarrow\Theta^\prime$ and BSC-L $ \vdash \varphi,\Pi^\prime\Rightarrow\Sigma^\prime \mid \Xi^\prime\Rightarrow\Omega^\prime$, then BSC-L $ \vdash \Gamma^\prime,\Pi^\prime\Rightarrow\Delta^\prime,\Sigma^\prime\mid \Lambda^\prime,\Xi^\prime\Rightarrow\Theta^\prime,\Omega^\prime$. If BSC-L $ \vdash \Gamma^\prime\Rightarrow\Delta^\prime\mid \Lambda^\prime\Rightarrow\Theta^\prime,\varphi$ and BSC-L $ \vdash \Pi^\prime\Rightarrow\Sigma^\prime \mid \varphi,\Xi^\prime\Rightarrow\Omega^\prime$, then BSC-L $ \vdash \Gamma^\prime,\Pi^\prime\Rightarrow\Delta^\prime,\Sigma^\prime\mid \Lambda^\prime,\Xi^\prime\Rightarrow\Theta^\prime,\Omega^\prime$.
\end{theorem}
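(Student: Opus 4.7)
The plan is to prove the admissibility of $(Cut\mid)$ and $(\mid Cut)$ \emph{simultaneously} by the Dragalin-style double induction: the outer induction is on the complexity of the cut-formula $\varphi$, and the inner induction is on the sum of heights of the proofs of the two premisses. Simultaneous treatment is essential because reducing a cut on a compound formula in one of the two rules typically produces cuts on subformulas using \emph{both} rules, as will become visible in the principal cases for negation and for the Post-style connectives. By Lemma \ref{Axiomatic} we may assume all axioms are atomic, and by Theorems \ref{Weakening}, \ref{Invertability}, \ref{Contraction} the structural rules of weakening and contraction are height-preserving admissible, as is invertibility of every logical rule.

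The base cases are where one of the two premisses is an atomic axiom. If the cut-formula $\varphi$ is not the displayed atomic formula of the axiom, the other premiss (suitably weakened) already yields the conclusion. If $\varphi$ is the displayed atomic formula of the axiom, then in each of the six forms of axioms one checks directly that the other premiss plus weakening produces the desired conclusion. For the inductive step, I distinguish the case where $\varphi$ is not principal in at least one of the two premisses from the case where $\varphi$ is principal in both. In the non-principal case one permutes the cut upward past the last logical rule of that premiss: the context-independence and invertibility of every rule in BSC-L guarantee that the premisses of that rule still contain the requisite occurrence of $\varphi$, and the inner induction on height applies. A minor nuance here is that when the last rule on the side containing $\varphi$ duplicates $\varphi$ in its premisses (e.g.\ $(\mathord{\mid\Rightarrow}\neg_P)$ or the three-premiss weak-Kleene rules), one obtains sequents with extra copies of $\varphi$ in the premisses, which are contracted away after the inductive cut using Theorem \ref{Contraction}.

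The principal case is the characteristic one and must be organised rule by rule. For the strong Kleene connectives the reduction is standard: a principal cut on $\varphi\wedge\psi$ or $\varphi\vee\psi$ reduces to cuts on $\varphi$ and $\psi$ via the corresponding G3-style rules inside each sequent. For negation the cross-sequent character of the rules forces a switch between the two cut rules: for instance a $(Cut\mid)$ with principal formula $\neg\varphi$ introduced on the left by $(\mathord{\Rightarrow}\neg\mid)$ and on the right by $(\mathord{\neg}\Rightarrow\mid)$ reduces to a single $(\mid Cut)$ on $\varphi$, whose premisses are exactly the premisses of those two negation rules. The implication rules are similar but require two cuts (one of each flavour) on the immediate subformulas, followed by contraction.

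The main obstacle, as expected, lies in the rules where a single premiss already displays the same formula twice, in two different sequents of the bisequent (namely $(\mathord{\mid\Rightarrow}\neg_P)$, $(\mathord{\neg_{DP}}\Rightarrow\mid)$, together with the three-premiss rules for weak Kleene, McCarthy, Komendantskaya and Soboci\'nski). There a principal cut with a two-premiss rule on the other side must be reduced using a combination of both $(Cut\mid)$ and $(\mid Cut)$ on the proper subformula, yielding a bisequent with duplicated contexts that contraction then collapses; both cuts are at strictly smaller cut-formula complexity, so the outer induction hypothesis covers them. A separate uniform clause must be written for each family of rules having this shape, but the pattern is always: combine one $(\mid Cut)$ and one $(Cut\mid)$ on the subformula, then apply contraction. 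Since all concrete logics of Section \ref{OtherLogics} use only rules of these forms, this exhausts the principal cases and establishes the simultaneous admissibility of both cuts.
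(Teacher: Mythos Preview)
Your proposal follows essentially the same Dragalin-style simultaneous double induction as the paper, and the treatment of the principal and non-principal cases is in line with the paper's. However, there is a genuine gap in your base-case handling.

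You claim that when one premiss is an atomic axiom and the cut-formula $\varphi$ is its active atom, ``the other premiss plus weakening produces the desired conclusion''. This fails for the cross-sequent axiom type (nonempty $\Gamma\cap\Sigma$). Concretely, take $(\mid Cut)$ with left premiss the axiom $\varphi,\Gamma\Rightarrow\Delta\mid\Lambda\Rightarrow\Theta,\varphi$ and right premiss $\Pi\Rightarrow\Sigma\mid\varphi,\Xi\Rightarrow\Omega$. Weakening the right premiss yields $\varphi,\Gamma,\Pi\Rightarrow\Delta,\Sigma\mid\varphi,\Lambda,\Xi\Rightarrow\Theta,\Omega$, which still carries an unwanted occurrence of $\varphi$ in the antecedent of the 2-sequent; there is nothing to contract it against. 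The paper (Case~1.1.3) resolves this by a separate sub-induction on the height of the non-axiomatic premiss: since $\varphi$ is atomic by Lemma~\ref{Axiomatic}, it cannot be principal in the last rule applied to the right premiss, so one permutes the cut upward and invokes the inner induction hypothesis (the base of this sub-induction being that the right premiss is itself an axiom, in which case the conclusion is axiomatic). The symmetric situation arises for $(Cut\mid)$ when the right premiss is this type of axiom. Your organisation, which discharges the axiom case entirely by weakening, needs this additional argument.

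A smaller point: you invoke invertibility in the non-principal case, but it is not needed there. Context-independence alone guarantees that a parametric occurrence of $\varphi$ is present unchanged in every premiss of the last rule, so the cut permutes upward directly; the paper's Case~3 does exactly this without appealing to Theorem~\ref{Invertability}.
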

\begin{proof} We carry the proof for both forms of cut simultaneously. The proof is by 
	double induction on the complexity of the cut-formula and on the
	sum of heights of proofs of both premisses. In the induction hypotheses we assume that the claim holds for
	both forms of cut. The proof is divided into three parts: 
\begin{enumerate}\itemsep=0pt
\item at least one premiss is axiomatic;
\item the cut-formula is principal in both premisses;
\item the cut-formula is not principal in at least one premiss.
\end{enumerate}

In cases 2 and 3 we provide only some characteristic examples for the sake of illustration.

\1

\textbf{Case 1.} Let the left premiss of the application of $ (Cut\mid) $ be axiomatic. Since there are three types of axioms we have three subcases. 

\textit{Case 1.1.} The cut-formula is active. \textit{Case 1.1.1.} We have the subsequent application of  $ (Cut\mid) $:
\begin{center}
	$\dfrac{\varphi,\Gamma 
	\Rightarrow  \Delta, \varphi \mid \Lambda\Rightarrow\Theta \qquad
	\varphi, \Pi  \Rightarrow  \Sigma \mid \Xi\Rightarrow\Omega} {\varphi,\Gamma, \Pi  \Rightarrow  \Delta, \Sigma \mid \Lambda, \Xi\Rightarrow\Theta, \Omega}$
\end{center}

By Theorem \ref{Weakening}, we obtain that since $ \varphi, \Pi  \Rightarrow  \Sigma \mid \Xi\Rightarrow\Omega $ is provable, 
 $ \varphi,\Gamma, \Pi  \Rightarrow  \Delta, \Sigma \mid \Lambda, \Xi\Rightarrow\Theta, \Omega $ is provable as well (without using $ (Cut\mid) $). In the case of $ (\mid Cut) $, the proof is analogous. The \textit{Case 1.1.2.} with $\Gamma 
 \Rightarrow  \Delta \mid \varphi, \Lambda\Rightarrow\Theta, \varphi$ is analogous. 

\textit{Case 1.1.3.} We have the subsequent application of  $ (\mid Cut) $:
\begin{center}
	$\dfrac{\varphi,\Gamma 
	\Rightarrow  \Delta \mid \Lambda\Rightarrow\Theta, \varphi \qquad
	 \Pi  \Rightarrow  \Sigma \mid \varphi,\Xi\Rightarrow\Omega} {\varphi,\Gamma, \Pi  \Rightarrow  \Delta, \Sigma \mid \Lambda, \Xi\Rightarrow\Theta, \Omega}$
\end{center}

In this case we need an additional induction on the height of the right premiss. In the basis there are two subcases. The first one: $ \Pi  \Rightarrow  \Sigma \mid \varphi,\Xi\Rightarrow\Omega $ is an axiom with $ \varphi $ being parametric. Then $ \varphi,\Gamma, \Pi  \Rightarrow  \Delta, \Sigma \mid \Lambda, \Xi\Rightarrow\Theta, \Omega $ is an axiom as well. The second one: $ \varphi\in\Omega $. Again we obtain that $ \varphi,\Gamma, \Pi  \Rightarrow  \Delta, \Sigma \mid \Lambda, \Xi\Rightarrow\Theta, \Omega $ is an axiom as well. As for the inductive step, note that $\varphi$ is atomic, due to lemma \ref{Axiomatic}, hence it cannot be a principal formula of $\Pi  \Rightarrow  \Sigma \mid \varphi,\Xi\Rightarrow\Omega$ obtained by some rule $r$. So we obtain the result by the induction hypothesis, $r$, and contractions, if necessary. The subcase with $ (Cut \mid) $ is dealt with in the same way.

\textit{Case 1.2.} the cut-formula is parametric. As an example, consider the case of $ (Cut\mid) $. There are three possibilities:
\begin{center}
	$\dfrac{\psi,\Gamma 
		\Rightarrow  \Delta,\psi, \varphi \mid \Lambda\Rightarrow\Theta \qquad
		\varphi, \Pi  \Rightarrow  \Sigma \mid \Xi\Rightarrow\Omega} {\psi,\Gamma, \Pi  \Rightarrow  \Delta,\psi, \Sigma \mid \Lambda, \Xi\Rightarrow\Theta, \Omega}$
\end{center}
\begin{center}
	$\dfrac{\psi,\Gamma 
		\Rightarrow  \Delta, \varphi \mid \Lambda\Rightarrow\Theta,\psi \qquad
		\varphi, \Pi  \Rightarrow  \Sigma \mid \Xi\Rightarrow\Omega} {\psi,\Gamma, \Pi  \Rightarrow  \Delta, \Sigma \mid \Lambda, \Xi\Rightarrow\Theta,\psi, \Omega}$
\end{center}
\begin{center}
	$\dfrac{\Gamma 
		\Rightarrow  \Delta, \varphi \mid \psi,\Lambda\Rightarrow\Theta,\psi \qquad
		\varphi, \Pi  \Rightarrow  \Sigma \mid \Xi\Rightarrow\Omega} {\Gamma, \Pi  \Rightarrow  \Delta, \Sigma \mid \psi,\Lambda, \Xi\Rightarrow\Theta,\psi, \Omega}$
\end{center}

In all these possibilities, the conclusion is already provable as
an axiom. If the right premiss is axiomatic, then the situation is analogous and is considered similarly.

\textbf{Case 2.} The cut-formula is principal in both premisses. 

\textit{Case 2.1.} The cut-formula is Kleene's negation.
\begin{prooftree}
	\EnableBpAbbreviations
	\AXC{$\Gamma\Rightarrow\Delta\mid\varphi,\Pi\Rightarrow\Sigma$}
	\RL{$ (\Rightarrow\neg\mid) $}
	\UIC{$\Gamma\Rightarrow\Delta,\neg\varphi\mid\Pi\Rightarrow\Sigma$}
	\AXC{$\Theta\Rightarrow\Lambda\mid\Xi\Rightarrow\Omega,\varphi$}
	\RL{$ (\neg\Rightarrow\mid) $}
	\UIC{$\neg\varphi,\Theta\Rightarrow\Lambda\mid\Xi\Rightarrow\Omega$}
	\RL{$ (Cut\mid) $}
	\BIC{$\Gamma,\Theta\Rightarrow\Delta,\Lambda\mid\Pi,\Xi\Rightarrow\Sigma,\Omega$}	
\end{prooftree}

We replace it with the following derivation, applying another version of cut, $ (\mid Cut) $, on the formula with lower complexity:
\begin{prooftree}
	\EnableBpAbbreviations
	\AXC{$\Theta\Rightarrow\Lambda\mid\Xi\Rightarrow\Omega,\varphi$}
	\AXC{$\Gamma\Rightarrow\Delta\mid\varphi,\Pi\Rightarrow\Sigma$}
	\RL{$ (\mid Cut) $}
	\BIC{$\Gamma,\Theta\Rightarrow\Delta,\Lambda\mid\Pi,\Xi\Rightarrow\Sigma,\Omega$}	
\end{prooftree}

Kleene's negation again, but this time with the application of $ (\mid Cut) $.
\begin{prooftree}
	\EnableBpAbbreviations
	\AXC{$\varphi,\Gamma\Rightarrow\Delta\mid\Pi\Rightarrow\Sigma$}
	\RL{$ (\mid\Rightarrow\neg) $}
	\UIC{$\Gamma\Rightarrow\Delta\mid\Pi\Rightarrow\Sigma,\neg\varphi$}
	\AXC{$\Theta\Rightarrow\Lambda,\varphi\mid\Xi\Rightarrow\Omega$}
	\RL{$ (\mid\neg\Rightarrow) $}
	\UIC{$\Theta\Rightarrow\Lambda\mid\neg\varphi,\Xi\Rightarrow\Omega$}
	\RL{$ (\mid Cut) $}
	\BIC{$\Gamma,\Theta\Rightarrow\Delta,\Lambda\mid\Pi,\Xi\Rightarrow\Sigma,\Omega$}	
\end{prooftree}

We replace it with:
\begin{prooftree}
	\EnableBpAbbreviations
	\AXC{$\Theta\Rightarrow\Lambda,\varphi\mid\Xi\Rightarrow\Omega$}
	\AXC{$\varphi,\Gamma\Rightarrow\Delta\mid\Pi\Rightarrow\Sigma$}
	\RL{$ (Cut\mid) $}
	\BIC{$\Gamma,\Theta\Rightarrow\Delta,\Lambda\mid\Pi,\Xi\Rightarrow\Sigma,\Omega$}	
\end{prooftree}

\textit{Case 2.2.} The cut-formula is \L ukasiewicz's implication. We apply $ (Cut\mid) $ as follows:
\begin{center}
	$ \dfrac{\Gamma\Rightarrow\Delta, \varphi\rightarrow\psi \mid \Pi\Rightarrow\Sigma \qquad \varphi\rightarrow\psi,\Theta\Rightarrow\Lambda \mid \Xi\Rightarrow\Omega}{\Gamma,\Theta\Rightarrow\Delta,\Lambda\mid\Pi,\Xi\Rightarrow\Sigma,\Omega} $
\end{center}
where the premisses are obtained by the rules $(\mathord{\Rightarrow}\mathord{\rightarrow_L}\mid)$ and $(\mathord{\rightarrow_L}\mathord{\Rightarrow}\mid)$, respectively:
\begin{center}
	$\dfrac{\varphi, \Gamma
		\Rightarrow \Delta, \psi \mid \Pi\Rightarrow\Sigma \qquad \Gamma\Rightarrow\Delta \mid \varphi, \Pi\Rightarrow\Sigma, \psi}{\Gamma\Rightarrow\Delta, \varphi\rightarrow\psi \mid \Pi\Rightarrow\Sigma}$
\end{center}
\begin{center}
	{\small 	$\dfrac{\Theta
			\Rightarrow \Lambda, \varphi \mid \psi, \Xi\Rightarrow\Omega \quad \psi, \Theta\Rightarrow\Lambda \mid
			\Xi \Rightarrow \Omega \quad \Theta\Rightarrow\Lambda \mid \Xi\Rightarrow\Omega, \varphi } {\varphi\rightarrow\psi, \Theta
			\Rightarrow \Lambda \mid \Xi\Rightarrow\Omega}$}
\end{center}

We transform this deduction as follows, applying $ (\mid Cut) $ and $ (Cut\mid ) $ to the formulae with lower complexity as well as several times using contraction rules (see Theorem \ref{Contraction}) which we indicate by double lines. We start with the following deduction:
\begin{prooftree}
	\EnableBpAbbreviations
	\AXC{$ \Theta\Rightarrow\Lambda, \varphi \mid \psi,\Xi\Rightarrow\Omega $}
	\AXC{$\varphi, \Gamma
		\Rightarrow \Delta, \psi \mid \Pi\Rightarrow\Sigma$}
	\AXC{$\psi, \Theta\Rightarrow\Lambda \mid
		\Xi \Rightarrow \Omega$}
	\BIC{$\varphi, \Gamma,\Theta
		\Rightarrow \Delta,\Lambda  \mid \Pi,\Xi\Rightarrow\Sigma,\Omega$}
	\BIC{$\Gamma,\Theta,\Theta
		\Rightarrow \Delta,\Lambda,\Lambda  \mid \psi,\Pi,\Xi,\Xi\Rightarrow\Sigma,\Omega,\Omega$}
	\doubleLine
	\UIC{$ \Gamma,\Theta
		\Rightarrow \Delta,\Lambda  \mid \psi,\Pi,\Xi\Rightarrow\Sigma,\Omega $}	
\end{prooftree}

Then we reason as follows:
\begin{prooftree}
	\EnableBpAbbreviations
	\AXC{$\Gamma\Rightarrow\Delta\mid \varphi, \Pi\Rightarrow\Sigma, \psi$}
	\AXC{$ \Gamma,\Theta
		\Rightarrow \Delta,\Lambda  \mid \psi,\Pi,\Xi\Rightarrow\Sigma,\Omega $}
	\BIC{$\Gamma,\Gamma,\Theta
		\Rightarrow \Delta,\Delta,\Lambda  \mid \varphi,\Pi,\Pi,\Xi\Rightarrow\Sigma,\Sigma,\Omega$}\doubleLine
	\UIC{$\Gamma,\Theta
		\Rightarrow \Delta,\Lambda  \mid \varphi,\Pi,\Xi\Rightarrow\Sigma,\Omega$}	
\end{prooftree}

Finally, the last step:
\begin{prooftree}
	\EnableBpAbbreviations
	\AXC{$\Theta\Rightarrow\Lambda \mid \Xi\Rightarrow\Omega, \varphi$}	
	\AXC{$\Gamma,\Theta
		\Rightarrow \Delta,\Lambda  \mid \varphi,\Pi,\Xi\Rightarrow\Sigma,\Omega$}
	\BIC{$\Gamma,\Theta,\Theta
		\Rightarrow \Delta,\Lambda,\Lambda  \mid \Pi,\Xi,\Xi\Rightarrow\Sigma,\Omega,\Omega$}\doubleLine
	\UIC{$\Gamma,\Theta
		\Rightarrow \Delta,\Lambda  \mid \Pi,\Xi\Rightarrow\Sigma,\Omega$}
\end{prooftree}

Let us consider the following application of $ (\mid Cut) $ to \L ukasiewicz's implication (with the use of $(\mathord{\mid\rightarrow}\mathord{\Rightarrow})$ and $(\mathord{\mid\Rightarrow}\mathord{\rightarrow})$):
 \begin{prooftree}
		\EnableBpAbbreviations
		\AXC{$ \varphi, \Gamma
			\Rightarrow \Delta \mid\Pi\Rightarrow\Sigma, \psi $}
		\UIC{$ \Gamma\Rightarrow\Delta \mid \Pi\Rightarrow\Sigma, \varphi\rightarrow\psi $}	
		\AXC{$\Theta
			\Rightarrow \Lambda, \varphi \mid \Xi\Rightarrow\Omega$}
		\AXC{$\Theta\Rightarrow\Lambda \mid \psi,
			\Xi \Rightarrow \Omega$}
		\BIC{$\Theta
			\Rightarrow \Lambda \mid \varphi\rightarrow\psi, \Xi\Rightarrow\Omega$}
		\BIC{$\Gamma,\Theta
			\Rightarrow \Delta,\Lambda \mid  \Pi,\Xi\Rightarrow\Sigma,\Omega$}
	\end{prooftree}
	
	We replace this deduction with the subsequent one:
 \begin{prooftree}
 	\EnableBpAbbreviations
 	\AXC{$\Theta
 		\Rightarrow \Lambda, \varphi \mid \Xi\Rightarrow\Omega$}
 	\AXC{$ \varphi, \Gamma
 		\Rightarrow \Delta \mid\Pi\Rightarrow\Sigma, \psi $}
 	\BIC{$\Gamma,\Theta
 		\Rightarrow \Delta,\Lambda \mid  \Pi,\Xi\Rightarrow\Sigma,\Omega,\psi $}	
	\AXC{$\Theta\Rightarrow\Lambda \mid \psi,
		\Xi \Rightarrow \Omega$}
	\BIC{$\Gamma,\Theta,\Theta
		\Rightarrow \Delta,\Lambda,\Lambda \mid  \Pi,\Xi,\Xi\Rightarrow\Sigma,\Omega,\Omega$}\doubleLine
	\UIC{$\Gamma,\Theta
		\Rightarrow \Delta,\Lambda \mid  \Pi,\Xi\Rightarrow\Sigma,\Omega$}
 \end{prooftree}

\textit{Case 2.3.} The cut-formula is McCarthy's conjunction. Consider the following applications of $(\mid \mathord{\Rightarrow}\mathord{\wedge_{mC}})$ and $(\mid \mathord{\wedge}\mathord{\Rightarrow_{mC}})$, respectively:
 
\begin{center}
\EnableBpAbbreviations
\AXC{$ \Gamma
	\Rightarrow \Delta \mid \Pi\Rightarrow\Sigma, \varphi $}
\AXC{$ \varphi, \Gamma\Rightarrow\Delta \mid \Pi\Rightarrow\Sigma, \psi $}
\BIC{$ \Gamma\Rightarrow\Delta \mid \Pi\Rightarrow\Sigma, \varphi\wedge\psi $}
\DisplayProof\end{center}

\begin{center}\EnableBpAbbreviations
\AXC{$ \Theta\Rightarrow \Lambda \mid \varphi, \psi, \Xi\Rightarrow\Omega $}	
\AXC{$ \Theta\Rightarrow\Lambda, \varphi \mid \varphi, \Xi\Rightarrow\Omega $}
\BIC{$ \Theta\Rightarrow\Lambda \mid \varphi\wedge\psi, \Xi\Rightarrow\Omega $}
\DisplayProof\end{center}

Then we apply $ (\mid Cut) $ as follows:
\begin{center}\EnableBpAbbreviations
\AXC{$ \Gamma\Rightarrow\Delta \mid \Pi\Rightarrow\Sigma, \varphi\wedge\psi $}
\AXC{$ \Theta\Rightarrow\Lambda \mid \varphi\wedge\psi, \Xi\Rightarrow\Omega $}
\BIC{$ \Gamma,\Theta\Rightarrow\Delta,\Lambda\mid\Pi,\Xi\Rightarrow\Sigma,\Omega $}
\DisplayProof
\end{center} 
 
We transform the deduction as follows:  
\begin{center}
{\small 	\EnableBpAbbreviations
\AXC{$ \Theta\Rightarrow\Lambda, \varphi \mid \varphi, \Xi\Rightarrow\Omega $}
	\AXC{$ \varphi, \Gamma\Rightarrow\Delta \mid \Pi\Rightarrow\Sigma, \psi $}
\AXC{$ \Theta\Rightarrow \Lambda \mid \varphi, \psi, \Xi\Rightarrow\Omega $}
	\BIC{$  \varphi,\Gamma,\Theta\Rightarrow\Delta,\Lambda \mid \varphi,\Pi,\Xi\Rightarrow\Sigma,\Omega $}
\BIC{$  \Gamma,\Theta,\Theta\Rightarrow\Delta,\Lambda,\Lambda \mid \varphi,\varphi,\Pi,\Xi,\Xi\Rightarrow\Sigma,\Omega,\Omega $}\doubleLine
\UIC{$  \Gamma,\Theta\Rightarrow\Delta,\Lambda \mid \varphi,\Pi,\Xi\Rightarrow\Sigma,\Omega$}	
	\DisplayProof}\end{center} 
 
 We continue this transformation as follows:
 \begin{prooftree}
\EnableBpAbbreviations
	\AXC{$ \Gamma
		\Rightarrow \Delta \mid \Pi\Rightarrow\Sigma, \varphi $}
\AXC{$ \Gamma,\Theta\Rightarrow\Delta,\Lambda \mid \varphi,\Pi,\Xi\Rightarrow\Sigma,\Omega $} 
\BIC{$\Gamma,\Gamma,\Theta\Rightarrow\Delta,\Delta,\Lambda \mid \Pi,\Pi,\Xi\Rightarrow\Sigma,\Sigma,\Omega$}\doubleLine
\UIC{$  \Gamma,\Theta\Rightarrow\Delta,\Lambda \mid \Pi,\Xi\Rightarrow\Sigma,\Omega$}	
 \end{prooftree}
 
 \textit{Case 2.4.} The cut-formula is \L ukasiewicz's disjunction, the case of $ (\mid Cut) $. Consider the following deduction:
 \begin{prooftree}
 \EnableBpAbbreviations
 \AXC{$ \Gamma
 	\Rightarrow \Delta \mid \Pi\Rightarrow\Sigma, \varphi\vee\psi $}
\AXC{$ \Theta\Rightarrow\Lambda \mid \varphi\vee\psi, \Xi\Rightarrow\Omega $}
\BIC{$ \Gamma,\Theta\Rightarrow\Delta,\Lambda \mid \Pi,\Xi\Rightarrow\Sigma,\Omega $} 
 \end{prooftree}
where the premisses are obtained in the following ways, by the rules $(\mathord{\mid\Rightarrow\vee_L})$ and $(\mathord{\mid\vee_L\Rightarrow})$, respectively:
\begin{center}
{\small 	$\dfrac{\Gamma
		\Rightarrow \Delta, \varphi, \psi \mid \Pi\Rightarrow\Sigma \qquad\varphi,  \Gamma\Rightarrow\Delta \mid
		\Pi \Rightarrow \Sigma, \psi \qquad \psi, \Gamma\Rightarrow\Delta \mid \Pi\Rightarrow\Sigma, \varphi } {\Gamma
		\Rightarrow \Delta \mid \Pi\Rightarrow\Sigma, \varphi\vee\psi}$}
\end{center}
\begin{center}
	$\dfrac{\varphi, \Theta
		\Rightarrow \Lambda \mid \psi, \Xi\Rightarrow\Omega \qquad \psi, \Theta\Rightarrow\Lambda\mid \varphi, \Xi\Rightarrow\Omega}{\Theta\Rightarrow\Lambda \mid \varphi\vee\psi, \Xi\Rightarrow\Omega}$
\end{center}
 
 We transform this deduction as follows: 
{\small \begin{prooftree}
	\EnableBpAbbreviations
\AXC{$ \Gamma
	\Rightarrow \Delta, \varphi, \psi \mid \Pi\Rightarrow\Sigma $}
	\AXC{$\psi, \Gamma\Rightarrow\Delta \mid \Pi\Rightarrow\Sigma, \varphi$}
	\AXC{$ \psi, \Theta\Rightarrow\Lambda\mid \varphi, \Xi\Rightarrow\Omega $}
	\BIC{$\psi,\psi, \Gamma,\Theta \Rightarrow\Delta,\Lambda\mid \Pi,\Xi\Rightarrow\Sigma,\Omega $}
\BIC{$ \Gamma,\Gamma,\Theta \Rightarrow\Delta,\Delta,\Lambda,\varphi\mid \Pi,\Pi,\Xi\Rightarrow\Sigma,\Sigma,\Omega $}\doubleLine
\UIC{$ \Gamma,\Theta \Rightarrow\Delta,\Lambda,\varphi\mid \Pi,\Xi\Rightarrow\Sigma,\Omega $}
\end{prooftree}}

In parallel, we build the following deduction:
\begin{prooftree}
\EnableBpAbbreviations	
\AXC{$ \varphi,  \Gamma\Rightarrow\Delta \mid
	\Pi \Rightarrow \Sigma, \psi $}
\AXC{$ \varphi, \Theta
	\Rightarrow \Lambda \mid \psi, \Xi\Rightarrow\Omega $}
\BIC{$ \varphi,\varphi,\Gamma,\Theta \Rightarrow\Delta,\Lambda\mid \Pi,\Xi\Rightarrow\Sigma,\Omega  $}
\UIC{$ \varphi,\Gamma,\Theta \Rightarrow\Delta,\Lambda\mid \Pi,\Xi\Rightarrow\Sigma,\Omega $}
\end{prooftree}
 
 Finally, we combine these two deductions into one:
 \begin{prooftree}
\EnableBpAbbreviations
\AXC{$ \Gamma,\Theta \Rightarrow\Delta,\Lambda,\varphi\mid \Pi,\Xi\Rightarrow\Sigma,\Omega $}
\AXC{$ \varphi,\Gamma,\Theta \Rightarrow\Delta,\Lambda\mid \Pi,\Xi\Rightarrow\Sigma,\Omega  $} 
\BIC{$ \Gamma,\Gamma,\Theta,\Theta \Rightarrow\Delta,\Delta,\Lambda,\Lambda\mid \Pi,\Pi,\Xi,\Xi\Rightarrow\Sigma,\Sigma,\Omega,\Omega $}\doubleLine
\UIC{$ \Gamma,\Theta \Rightarrow\Delta,\Lambda\mid \Pi,\Xi\Rightarrow\Sigma,\Omega  $}	
 \end{prooftree}
 
\textit{Case 2.5.} The cut-formula is weak Kleene conjunction, the case of $ (\mid Cut) $. Consider the following deduction:
\begin{prooftree}
\EnableBpAbbreviations
\AXC{$ \Gamma
	\Rightarrow \Delta \mid \Pi\Rightarrow\Sigma, \varphi\wedge\psi $}
\AXC{$ \Theta
	\Rightarrow \Lambda \mid \varphi\wedge\psi, \Xi\Rightarrow\Omega $}	
\BIC{$ \Gamma,\Theta \Rightarrow\Delta,\Lambda\mid \Pi,\Xi\Rightarrow\Sigma,\Omega $}
\end{prooftree}
where premisses are obtained by $(\mathord{\mid\Rightarrow}\mathord{\wedge_w})$ and $(\mathord{\mid\wedge_w}\mathord{\Rightarrow})$, respectively:
\begin{center}
{\small 	$\dfrac{\Gamma
		\Rightarrow \Delta \mid \Pi\Rightarrow\Sigma, \varphi, \psi \qquad \varphi, \Gamma\Rightarrow\Delta \mid \Pi \Rightarrow \Sigma, \psi \qquad \psi, \Gamma\Rightarrow\Delta \mid \Pi\Rightarrow\Sigma, \varphi} {\Gamma
		\Rightarrow \Delta \mid \Pi\Rightarrow\Sigma, \varphi\wedge\psi}$}
\end{center}
\begin{center}
	{\small $\dfrac{\Theta
			\Rightarrow \Lambda \mid \varphi, \psi, \Xi\Rightarrow\Omega \qquad \Theta\Rightarrow\Lambda, \varphi \mid \varphi,
			\Xi \Rightarrow \Omega \qquad \Theta\Rightarrow\Lambda, \psi \mid \psi, \Xi\Rightarrow\Omega} {\Theta
			\Rightarrow \Lambda \mid \varphi\wedge\psi, \Xi\Rightarrow\Omega}$}
\end{center}

We transform this deduction as follows:
{\small \begin{prooftree}
\EnableBpAbbreviations
\AXC{$ \Theta\Rightarrow\Lambda, \varphi \mid \varphi,
	\Xi \Rightarrow \Omega $}
\AXC{$ \varphi, \Gamma\Rightarrow\Delta \mid \Pi \Rightarrow \Sigma, \psi $}	
\BIC{$ \Gamma,\Theta \Rightarrow\Delta,\Lambda\mid \varphi,\Pi,\Xi\Rightarrow\Sigma,\Omega,\psi $}
\AXC{$ \Theta
	\Rightarrow \Lambda \mid \varphi, \psi, \Xi\Rightarrow\Omega $}
\BIC{$ \Gamma,\Theta,\Theta \Rightarrow\Delta,\Lambda,\Lambda\mid \varphi,\varphi,\Pi,\Xi,\Xi\Rightarrow\Sigma,\Omega,\Omega $}\doubleLine
\UIC{$ \Gamma,\Theta \Rightarrow\Delta,\Lambda\mid \varphi,\Pi,\Xi\Rightarrow\Sigma,\Omega $}
\end{prooftree}}

We continue this deduction as follows:
\begin{prooftree}
\EnableBpAbbreviations
\AXC{$ \Gamma
	\Rightarrow \Delta \mid \Pi\Rightarrow\Sigma, \varphi, \psi $}
\AXC{$ \Gamma,\Theta \Rightarrow\Delta,\Lambda\mid \varphi,\Pi,\Xi\Rightarrow\Sigma,\Omega $}
\BIC{$ \Gamma,\Gamma,\Theta \Rightarrow\Delta,\Delta,\Lambda\mid \Pi,\Pi,\Xi\Rightarrow\Sigma,\Sigma,\Omega,\psi $}	\doubleLine
\UIC{$ \Gamma,\Theta \Rightarrow\Delta,\Lambda\mid \Pi,\Xi\Rightarrow\Sigma,\Omega,\psi $}	
\end{prooftree}

In parallel, we build the following deduction:
{\small \begin{prooftree}
\EnableBpAbbreviations
\AXC{$ \Theta\Rightarrow\Lambda, \psi \mid \psi, \Xi\Rightarrow\Omega $}	
\AXC{$ \psi, \Gamma\Rightarrow\Delta \mid \Pi\Rightarrow\Sigma, \varphi $}
\BIC{$ \Gamma,\Theta \Rightarrow\Delta,\Lambda\mid \psi,\Pi,\Xi\Rightarrow\Sigma,\Omega,\varphi $}
\AXC{$ \Theta
	\Rightarrow \Lambda \mid \varphi, \psi, \Xi\Rightarrow\Omega $}
\BIC{$ \Gamma,\Theta,\Theta \Rightarrow\Delta,\Lambda,\Lambda\mid \psi,\psi,\Pi,\Xi,\Xi\Rightarrow\Sigma,\Omega,\Omega $}\doubleLine
\UIC{$ \Gamma,\Theta \Rightarrow\Delta,\Lambda\mid \psi,\Pi,\Xi\Rightarrow\Sigma,\Omega $}
\end{prooftree}}

Finally, we combine these two deductions:
\begin{prooftree}
\EnableBpAbbreviations
\AXC{$ \Gamma,\Theta \Rightarrow\Delta,\Lambda\mid \Pi,\Xi\Rightarrow\Sigma,\Omega,\psi $}
\AXC{$ \Gamma,\Theta \Rightarrow\Delta,\Lambda\mid \psi,\Pi,\Xi\Rightarrow\Sigma,\Omega $}
\BIC{$ \Gamma,\Gamma,\Theta,\Theta \Rightarrow\Delta,\Delta,\Lambda,\Lambda\mid \Pi,\Pi,\Xi,\Xi\Rightarrow\Sigma,\Sigma,\Omega,\Omega $}\doubleLine
\UIC{$ \Gamma,\Theta \Rightarrow\Delta,\Lambda\mid \Pi,\Xi\Rightarrow\Sigma,\Omega $}	
\end{prooftree}

 \textit{Case 2.6.} The cut-formula is Soboci\'{n}ski's implication $ \rightarrow_S^\prime $, the case of $ (Cut\mid) $. Consider the following deduction:
 \begin{prooftree}
\EnableBpAbbreviations
\AXC{$ \Gamma
	\Rightarrow \Delta, \varphi\rightarrow\psi \mid \Pi\Rightarrow\Sigma $}
\AXC{$ \varphi\rightarrow\psi, \Theta
	\Rightarrow \Lambda \mid \Xi\Rightarrow\Omega $} 
\BIC{$ \Gamma,\Theta \Rightarrow\Delta,\Lambda\mid \Pi,\Xi\Rightarrow\Sigma,\Omega $}	
 \end{prooftree}
where premisses are obtained by $(\mathord{\Rightarrow}\mathord{\rightarrow_S'\mid})$ and $(\mathord{\rightarrow_S'}\mathord{\Rightarrow\mid})$, respectively:
\begin{center}
	$\dfrac{\varphi, \Gamma
		\Rightarrow \Delta, \psi \mid \Pi\Rightarrow\Sigma \;\quad \Gamma\Rightarrow\Delta \mid\varphi,
		\Pi \Rightarrow \Sigma, \psi \;\quad \Gamma\Rightarrow\Delta, \psi \mid \psi, \Pi\Rightarrow\Sigma, \varphi } {\Gamma
		\Rightarrow \Delta, \varphi\rightarrow\psi \mid \Pi\Rightarrow\Sigma}$
\end{center}
\begin{center}
{\small 	$\dfrac{\psi, \Theta
		\Rightarrow \Lambda \mid \Xi\Rightarrow\Omega \qquad \Theta\Rightarrow\Lambda \mid \Xi \Rightarrow \Omega, \varphi, \psi \qquad \Theta\Rightarrow\Lambda, \varphi\mid \varphi, \psi, \Xi\Rightarrow\Omega} {\varphi\rightarrow\psi, \Theta
		\Rightarrow \Lambda \mid \Xi\Rightarrow\Omega}$}
\end{center}

We build a new deduction, using this material:
{\small \begin{prooftree}
	\EnableBpAbbreviations
\AXC{$ \Theta\Rightarrow\Lambda \mid \Xi \Rightarrow \Omega, \varphi, \psi $}
\AXC{$\Gamma\Rightarrow\Delta, \psi \mid \psi, \Pi\Rightarrow\Sigma, \varphi$}
\AXC{$ \psi, \Theta
	\Rightarrow \Lambda \mid \Xi\Rightarrow\Omega $}
\BIC{$ \Gamma,\Theta \Rightarrow\Delta,\Lambda\mid \psi,\Pi,\Xi\Rightarrow\Sigma,\Omega,\varphi $}
\BIC{$ \Gamma,\Theta,\Theta \Rightarrow\Delta,\Lambda,\Lambda\mid \Pi,\Xi,\Xi\Rightarrow\Sigma,\Omega,\Omega,\varphi,\varphi $}\doubleLine	
\UIC{$ \Gamma,\Theta \Rightarrow\Delta,\Lambda\mid \Pi,\Xi\Rightarrow\Sigma,\Omega,\varphi $}	
\end{prooftree}}
 
In parallel, we build yet another deduction:
{\small \begin{prooftree}
	\EnableBpAbbreviations
\AXC{$ \Theta\Rightarrow\Lambda, \varphi\mid \varphi, \psi, \Xi\Rightarrow\Omega $}
\AXC{$ \varphi, \Gamma
	\Rightarrow \Delta, \psi \mid \Pi\Rightarrow\Sigma $}
\BIC{$ \Gamma,\Theta \Rightarrow\Delta,\Lambda,\psi\mid \varphi, \psi,\Pi,\Xi\Rightarrow\Sigma,\Omega $}
\AXC{$ \psi, \Theta
	\Rightarrow \Lambda \mid \Xi\Rightarrow\Omega $}
\BIC{$ \Gamma,\Theta,\Theta \Rightarrow\Delta,\Lambda,\Lambda\mid \varphi, \psi,\Pi,\Xi,\Xi\Rightarrow\Sigma,\Omega,\Omega $}\doubleLine
\UIC{$ \Gamma,\Theta \Rightarrow\Delta,\Lambda\mid \varphi, \psi,\Pi,\Xi\Rightarrow\Sigma,\Omega $}	
\end{prooftree}} 

We continue this deduction as follows:
\begin{prooftree}
\EnableBpAbbreviations	
\AXC{$ \Gamma\Rightarrow\Delta \mid\varphi,
	\Pi \Rightarrow \Sigma, \psi $}
\AXC{$ \Gamma,\Theta \Rightarrow\Delta,\Lambda\mid \varphi, \psi,\Pi,\Xi\Rightarrow\Sigma,\Omega $}
\BIC{$\Gamma,\Gamma,\Theta \Rightarrow\Delta,\Delta,\Lambda\mid \varphi,\varphi, \Pi,\Pi,\Xi\Rightarrow\Sigma,\Sigma,\Omega  $}\doubleLine
\UIC{$\Gamma,\Theta \Rightarrow\Delta,\Lambda\mid \varphi, \Pi,\Xi\Rightarrow\Sigma,\Omega  $}
\end{prooftree}

Finally, we combine these two deductions:
\begin{prooftree}
\EnableBpAbbreviations
\AXC{$ \Gamma,\Theta \Rightarrow\Delta,\Lambda\mid \Pi,\Xi\Rightarrow\Sigma,\Omega,\varphi $}
\AXC{$\Gamma,\Theta \Rightarrow\Delta,\Lambda\mid \varphi, \Pi,\Xi\Rightarrow\Sigma,\Omega  $}	
\BIC{$\Gamma,\Gamma,\Theta,\Theta \Rightarrow\Delta,\Delta,\Lambda,\Lambda\mid \Pi,\Pi,\Xi,\Xi\Rightarrow\Sigma,\Sigma,\Omega,\Omega  $}\doubleLine
\UIC{$\Gamma,\Theta \Rightarrow\Delta,\Lambda\mid \Pi,\Xi\Rightarrow\Sigma,\Omega  $}
\end{prooftree} 
 
 \textbf{Case 3.} The cut-formula is not principal in at least one premiss. Let it be the
 right premiss.
 
\textit{Case 3.1.} The right premiss is deduced by $ (\neg\Rightarrow \mid) $. The rule $ (Cut\mid) $ is applied:

\begin{prooftree}
	\EnableBpAbbreviations
	\AXC{$\Gamma\Rightarrow\Delta,\varphi\mid\Pi\Rightarrow\Sigma$}
	\AXC{$\varphi,\Theta\Rightarrow\Lambda\mid\Xi\Rightarrow\Omega,\psi$}
	\RL{$ (\neg\Rightarrow\mid) $}
	\UIC{$\varphi,\neg\psi,\Theta\Rightarrow\Lambda\mid\Xi\Rightarrow\Omega$}
	\RL{$ (Cut\mid) $}
	\BIC{$\neg\psi,\Gamma,\Theta\Rightarrow\Delta,\Lambda\mid\Pi,\Xi\Rightarrow\Sigma,\Omega$}	
\end{prooftree}

We transform it as follows:
\begin{prooftree}
	\EnableBpAbbreviations
	\AXC{$\Gamma\Rightarrow\Delta,\varphi\mid\Pi\Rightarrow\Sigma$}
	\AXC{$\varphi,\Theta\Rightarrow\Lambda\mid\Xi\Rightarrow\Omega,\psi$}
	\RL{$ (Cut\mid) $}	
	\BIC{$\Gamma,\Theta\Rightarrow\Delta,\Lambda\mid\Pi,\Xi\Rightarrow\Sigma,\Omega,\psi$}
\RL{$ (\neg\Rightarrow\mid) $}
	\UIC{$\neg\psi,\Gamma,\Theta\Rightarrow\Delta,\Lambda\mid\Pi,\Xi\Rightarrow\Sigma,\Omega$}	
\end{prooftree}

The right premiss is deduced by $ (\mid\neg\Rightarrow) $. The rule $ (\mid Cut) $ is applied:
\begin{prooftree}
	\EnableBpAbbreviations
	\AXC{$\Gamma\Rightarrow\Delta\mid\Pi\Rightarrow\Sigma,\varphi$}
	\AXC{$\Theta\Rightarrow\Lambda,\psi\mid\varphi,\Xi\Rightarrow\Omega$}
	\RL{$ (\mid\neg\Rightarrow) $}
	\UIC{$\Theta\Rightarrow\Lambda\mid\varphi,\neg\psi,\Xi\Rightarrow\Omega$}
	\RL{$ (\mid Cut) $}
	\BIC{$\Gamma,\Theta\Rightarrow\Delta,\Lambda\mid\neg\psi,\Pi,\Xi\Rightarrow\Sigma,\Omega$}	
\end{prooftree}

We transform this derivation as follows:
\begin{prooftree}
	\EnableBpAbbreviations
	\AXC{$\Gamma\Rightarrow\Delta\mid\Pi\Rightarrow\Sigma,\varphi$}
	\AXC{$\Theta\Rightarrow\Lambda,\psi\mid\varphi,\Xi\Rightarrow\Omega$}
	\RL{$ (\mid Cut) $}
	\BIC{$\Gamma,\Theta\Rightarrow\Delta,\Lambda,\psi\mid\Pi,\Xi\Rightarrow\Sigma,\Omega$}		
	\RL{$ (\mid\neg\Rightarrow) $}
	\UIC{$\Gamma,\Theta\Rightarrow\Delta,\Lambda\mid\neg\psi,\Pi,\Xi\Rightarrow\Sigma,\Omega$}
\end{prooftree}

\textit{Case 3.2.} The right premiss is deduced by $(\mathord{\rightarrow_L}\mathord{\Rightarrow}\mid)$:
\begin{center}
{\small 	$ \dfrac{\varphi, \Theta
		\Rightarrow \Lambda, \chi \mid \psi, \Xi\Rightarrow\Omega \qquad \psi,\varphi, \Theta\Rightarrow\Lambda \mid
		\Xi \Rightarrow \Omega \qquad \varphi,\Theta\Rightarrow\Lambda \mid \Xi\Rightarrow\Omega, \chi}{\chi\rightarrow\psi,\varphi, \Theta
		\Rightarrow \Lambda \mid \Xi\Rightarrow\Omega} $}
\end{center}

Then the rule $ (Cut\mid) $ is applied:
\begin{center}
	$ \dfrac{\Gamma\Rightarrow\Delta,\varphi\mid\Pi\Rightarrow\Sigma \qquad \varphi,\chi\rightarrow\psi, \Theta
		\Rightarrow \Lambda \mid \Xi\Rightarrow\Omega}{\chi\rightarrow\psi,\Gamma, \Theta
		\Rightarrow \Delta,\Lambda \mid \Pi,\Xi\Rightarrow\Sigma,\Omega} $
\end{center}

We transform this deduction as follows, we start with three applications of $ (Cut\mid) $:
\begin{prooftree}
	\EnableBpAbbreviations
\AXC{$\Gamma\Rightarrow\Delta,\varphi\mid\Pi\Rightarrow\Sigma$}
\AXC{$\varphi, \Theta
	\Rightarrow \Lambda, \chi \mid \psi, \Xi\Rightarrow\Omega$}	
\BIC{$\Gamma, \Theta
	\Rightarrow \Delta,\Lambda,\chi \mid \psi,\Pi,\Xi\Rightarrow\Sigma,\Omega$}
\end{prooftree}

\begin{center}
\EnableBpAbbreviations
\AXC{$\Gamma\Rightarrow\Delta,\varphi\mid\Pi\Rightarrow\Sigma$}
\AXC{$\psi,\varphi,\Theta\Rightarrow\Lambda \mid
	\Xi \Rightarrow \Omega$}
\BIC{$\psi,\Gamma, \Theta
	\Rightarrow \Delta,\Lambda \mid \Pi,\Xi\Rightarrow\Sigma,\Omega$}
\DisplayProof\end{center}\begin{center}
\EnableBpAbbreviations
\AXC{$\Gamma\Rightarrow\Delta,\varphi\mid\Pi\Rightarrow\Sigma$}
\AXC{$\varphi,\Theta\Rightarrow\Lambda \mid \Xi\Rightarrow\Omega, \chi$}
\BIC{$\Gamma, \Theta
	\Rightarrow \Delta,\Lambda \mid \Pi,\Xi\Rightarrow\Sigma,\Omega, \chi$}
\DisplayProof
\end{center}

Finally, we apply $(\mathord{\rightarrow_L}\mathord{\Rightarrow}\mid)$ to the results of these applications of $ (Cut\mid) $ and get the required result.

The right premiss is deduced by $(\mathord{\rightarrow_L}\mathord{\Rightarrow}\mid)$:
 \begin{prooftree}
 	\EnableBpAbbreviations
 	\AXC{$ \Gamma\Rightarrow\Delta \mid \Pi\Rightarrow\Sigma, \varphi $}	
 	\AXC{$\Theta
 		\Rightarrow \Lambda, \chi \mid \varphi,\Xi\Rightarrow\Omega$}
 	\AXC{$\Theta\Rightarrow\Lambda \mid \psi,
 		\varphi,\Xi \Rightarrow \Omega$}
 	\BIC{$\Theta
 		\Rightarrow \Lambda \mid\varphi, \chi\rightarrow\psi, \Xi\Rightarrow\Omega$}
 	\BIC{$\Gamma,\Theta
 		\Rightarrow \Delta,\Lambda \mid  \chi\rightarrow\psi,\Pi,\Xi\Rightarrow\Sigma,\Omega$}
 \end{prooftree}
 
 We transform this deduction as follows. We start with two applications of $ (Cut\mid) $:
 \begin{center}
 	\EnableBpAbbreviations
 	\AXC{$ \Gamma\Rightarrow\Delta \mid \Pi\Rightarrow\Sigma, \varphi $}	
 	\AXC{$\Theta
 		\Rightarrow \Lambda, \chi \mid \varphi,\Xi\Rightarrow\Omega$}
\BIC{$\Gamma,\Theta
	\Rightarrow \Delta,\Lambda,\chi \mid  \Pi,\Xi\Rightarrow\Sigma,\Omega$} 
\DisplayProof \end{center} 
\begin{center}	\EnableBpAbbreviations
 	\AXC{$ \Gamma\Rightarrow\Delta \mid \Pi\Rightarrow\Sigma, \varphi $}	
 	\AXC{$\Theta\Rightarrow\Lambda \mid \psi,
 		\varphi,\Xi \Rightarrow \Omega$}
 	\BIC{$\Gamma,\Theta
 		\Rightarrow \Delta,\Lambda \mid  \psi,\Pi,\Xi\Rightarrow\Sigma,\Omega$}
\DisplayProof  		
 \end{center}
 
 Then we apply $(\mathord{\rightarrow_L}\mathord{\Rightarrow}\mid)$: 
  \begin{center}
  	\EnableBpAbbreviations
  	\AXC{$\Gamma,\Theta
  		\Rightarrow \Delta,\Lambda,\chi \mid  \Pi,\Xi\Rightarrow\Sigma,\Omega$} 
  	\AXC{$\Gamma,\Theta
  		\Rightarrow \Delta,\Lambda \mid  \psi,\Pi,\Xi\Rightarrow\Sigma,\Omega$}
  	\BIC{$\Gamma,\Theta
  		\Rightarrow \Delta,\Lambda \mid  \chi\rightarrow\psi,\Pi,\Xi\Rightarrow\Sigma,\Omega$}
  	\DisplayProof 	
  \end{center} 

\textit{Case 3.3.} The right premiss is deduced by  $(\mid \mathord{\wedge}\mathord{\Rightarrow_{mC}})$:

\begin{center}
{\small 	\EnableBpAbbreviations
	\AXC{$ \Gamma\Rightarrow\Delta \mid \Pi\Rightarrow\Sigma, \varphi$}
	\AXC{$ \Theta\Rightarrow \Lambda \mid \chi, \psi, \varphi,\Xi\Rightarrow\Omega $}	
	\AXC{$ \Theta\Rightarrow\Lambda, \chi \mid \chi, \varphi,\Xi\Rightarrow\Omega $}
	\BIC{$ \Theta\Rightarrow\Lambda \mid \chi\wedge\psi,\varphi, \Xi\Rightarrow\Omega $}
\BIC{$\Gamma,\Theta\Rightarrow\Delta,\Lambda \mid\chi\wedge\psi,  \Pi,\Xi\Rightarrow\Sigma,\Omega$}	
	\DisplayProof}\end{center}

Then we apply $ (\mid Cut) $ twice as follows:
\begin{center}\EnableBpAbbreviations
	\AXC{$ \Gamma\Rightarrow\Delta \mid \Pi\Rightarrow\Sigma, \varphi$}
	\AXC{$ \Theta\Rightarrow \Lambda \mid \chi, \psi, \varphi,\Xi\Rightarrow\Omega $}
\BIC{$\Gamma,\Theta\Rightarrow\Delta,\Lambda \mid\chi,\psi,  \Pi,\Xi\Rightarrow\Sigma,\Omega$}
	\DisplayProof
\end{center} 
\begin{center}\EnableBpAbbreviations
	\AXC{$ \Gamma\Rightarrow\Delta \mid \Pi\Rightarrow\Sigma, \varphi$}
	\AXC{$ \Theta\Rightarrow\Lambda, \chi \mid \chi, \varphi,\Xi\Rightarrow\Omega $}
\BIC{$\Gamma,\Theta\Rightarrow\Delta,\Lambda,\chi \mid\chi,\Pi,\Xi\Rightarrow\Sigma,\Omega$}
	\DisplayProof
\end{center} 

Finally, we apply $(\mid \mathord{\wedge}\mathord{\Rightarrow_{mC}})$ to the results of these applications of $ (\mid Cut) $ and get the required result. 
  
  The other cases are considered similarly.
\end{proof}

\section{Interpolation}\label{Interpolation}

We present a constructive proof of the interpolation theorem for some logics 
based on the strategy
proposed by Muskens and Wintein \cite{win:int17}. It was originally applied in the tableau setting for Belnap-Dunn four-valued logic, as well as for ${\bf K_3}$ and {\bf LP}. We applied it also to BSC formalisation of several four-valued quasi-relevant logics \cite{Indrzejczak3}. Here we demonstrate that it can be used
for showing that interpolation holds for some three-valued paracomplete and paraconsistent logics.
Let ${\bf L}\in\{{\bf I^1, I^2, P^1, P^2}\}$.

\begin{theorem}
	For any contingent formulae $\varphi, \psi$, if $\varphi\models_L\psi$, then we can construct an interpolant for ${\bf I^1, I^2}$ on the basis of
	proof-search trees for $\varphi\Rightarrow \mid \Rightarrow$ and $\Rightarrow\psi\mid\Rightarrow$ and an interpolant for ${\bf P^1, P^2}$ on the basis of
	proof-search trees for $\Rightarrow\mid\varphi\Rightarrow$ and $\Rightarrow\mid\Rightarrow\psi$ in suitable BSC-L.
\end{theorem}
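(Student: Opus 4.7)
The plan is to adapt the Muskens--Wintein tableau construction to the bisequent setting by reading interpolants off the open (non-axiomatic) leaves of complete proof-search trees. Given $\varphi\models_{\bf L}\psi$ with ${\bf L}\in\{{\bf I^1},{\bf I^2}\}$, I would first build two complete proof-search trees: $T_\varphi$ with root $\varphi\Rightarrow\mid\Rightarrow$, and $T_\psi$ with root $\Rightarrow\psi\mid\Rightarrow$. By Proposition~1 both can be extended so that every leaf is atomic. Since falsification is preserved upward through rule inversion (the analogue of Theorem~1 for the logic in question), each non-axiomatic leaf of $T_\varphi$ corresponds to some homomorphism $h$ with $h(\varphi)=1$, and each non-axiomatic leaf of $T_\psi$ corresponds to an $h$ with $h(\psi)\neq 1$. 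The assumption $\varphi\models_{\bf L}\psi$ then forbids any valuation simultaneously matching an open leaf of $T_\varphi$ and an open leaf of $T_\psi$.

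Next, to each open leaf $\ell$ of either tree I assign a partial description $v_\ell$ specifying, for each atom in $\ell$, whether it must have value $1$, $\neq 1$, $0$, or $\neq 0$, depending on which of the four components of the bisequent the atom occurs in. For every pair $(\ell_\varphi,\ell_\psi)$ of open leaves the two descriptions must disagree on some atom $p$: otherwise their union would extend to a falsifying $h$ violating $\varphi\models_{\bf L}\psi$. Since the atoms in $T_\varphi$ lie in the vocabulary of $\varphi$ and those in $T_\psi$ in the vocabulary of $\psi$, any clashing atom belongs to the shared vocabulary. Using the available connectives of ${\bf L}$ I build, for each open $\ell_\varphi$, a formula $I_{\ell_\varphi}$ over the shared vocabulary which is true under exactly those $h$ agreeing with $v_{\ell_\varphi}$ on shared atoms, and set $I := \bigvee_{\ell_\varphi} I_{\ell_\varphi}$.

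Verification of $\varphi\models_{\bf L} I$ and $I\models_{\bf L}\psi$ then runs as follows. If $h(\varphi)=1$, then $h$ falsifies the root of $T_\varphi$, hence some open leaf $\ell_\varphi$; thus $h$ agrees with $v_{\ell_\varphi}$ on all atoms of $\ell_\varphi$ and in particular on the shared ones, so $h(I_{\ell_\varphi})=1$ and $h(I)=1$. Conversely, if $h(I)=1$ then $h$ agrees with some $v_{\ell_\varphi}$ on shared atoms, so by the clash argument $h$ cannot agree with any open leaf of $T_\psi$; hence $h$ does not falsify the root of $T_\psi$ and therefore $h(\psi)=1$. The paraconsistent case ${\bf L}\in\{{\bf P^1},{\bf P^2}\}$ is entirely parallel after swapping the roles of the two bisequent components: one uses $T_\varphi$ with root $\Rightarrow\mid\varphi\Rightarrow$ (whose open leaves encode $h(\varphi)\neq 0$) and $T_\psi$ with root $\Rightarrow\mid\Rightarrow\psi$ (whose open leaves encode $h(\psi)=0$).

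The hard step is the \emph{definability} part of the construction: one must verify that the partial descriptions $v_{\ell_\varphi}$ restricted to shared atoms are actually expressible using only the restricted connective set of ${\bf L}$ (Sette--Carnielli conjunction and disjunction together with Heyting's or Kleene's negation and Bochvar's implication), since neither ${\bf I^1, I^2}$ nor ${\bf P^1, P^2}$ is functionally complete. This proceeds case by case on the truth tables, encoding each local test \textquotedblleft$p$ has value $v$\textquotedblright\ as a Boolean combination of $p$ and $\neg p$ usable in the given logic. The contingency hypotheses on $\varphi,\psi$ are invoked precisely here: they exclude the degenerate branches in which no shared atom would be available and the interpolant would have to reduce to a constant outside ${\bf L}$.
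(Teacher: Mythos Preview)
Your proposal follows the same Muskens--Wintein strategy as the paper: build complete proof-search trees for the two halves, observe that every combination of a non-axiomatic leaf from each side must be axiomatic (your clash argument is the paper's Claim~1), and read off an interpolant as a disjunction indexed by the open leaves of the $\varphi$-tree. The main difference is one of execution. The paper is fully constructive and syntactic: it writes down the explicit disjunct $\bigwedge\Gamma_i' \wedge \bigwedge\neg\Sigma_i'\wedge \neg(\bigvee\neg\Pi_i'\vee\bigvee\Delta_i')$ (with the primed sets defined as intersections of each component of the $\varphi$-leaf with the relevant components of the $\psi$-leaves), and then verifies both entailments by exhibiting actual BSC-$\mathbf{I^1}$ proofs of $\varphi\Rightarrow Int(\varphi,\psi)\mid{\Rightarrow}$ and $Int(\varphi,\psi)\Rightarrow\psi\mid{\Rightarrow}$. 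Your argument is semantic and stops short of this: you assert that a formula $I_{\ell_\varphi}$ ``true under exactly those $h$ agreeing with $v_{\ell_\varphi}$ on shared atoms'' can be built, but do not produce it.

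That deferral is precisely the substantive step. In $\mathbf{I^1}$ every compound built from $\wedge_C,\vee_C,\rightarrow_C,\neg_H$ takes only the values $0,1$, so the four local tests $p{=}1$, $p{\neq}1$, $p{=}0$, $p{\neq}0$ are indeed expressible, but one has to exploit the specific truth tables (e.g.\ the paper encodes ``all $\Delta_i'$-atoms are $\neq 1$ and all $\Pi_i'$-atoms are $\neq 0$'' in one move as $\neg(\bigvee\neg\Pi_i'\vee\bigvee\Delta_i')$, using that $\neg_H$ and $\vee_C$ collapse the middle value). Without writing this out, your sketch is not yet a proof. A minor correction: contingency is not invoked to avoid a ``no shared atom'' situation (the clash argument already forces a shared witness), but simply to guarantee that both trees have at least one non-axiomatic leaf, so that the disjunction is non-empty and the primed sets are non-trivial.
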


 \begin{proof}
 	We will demonstrate the case of BSC-I$^1$; the case of BSC-I$^2$ is identical and the cases of BSC-P$^1$ and BSC-P$^2$ are dual, so we only comment on them in the key points. Assume that $\varphi\models_{I^1}\psi$; hence by completeness we have a cut-free proof of $\varphi\Rightarrow\psi\mid\Rightarrow$ in BSC-I$^1$. Now produce complete proof-search trees for
$\varphi\Rightarrow\mid\Rightarrow$ and $\Rightarrow\psi\mid\Rightarrow$. Since $\varphi, \psi$ are contingent, they have some nonaxiomatic leaves. Let $\Gamma_1\Rightarrow\Delta_1\mid\Pi_1\Rightarrow\Sigma_1, \ldots, \Gamma_k\Rightarrow\Delta_k\mid\Pi_k\Rightarrow\Sigma_k$ be the list
of nonaxiomatic atomic leaves of the proof-search tree for $\varphi\Rightarrow\mid\Rightarrow$ and $\Theta_1\Rightarrow\Lambda_1\mid\Xi_1\Rightarrow\Omega_1, \ldots, \Theta_n\Rightarrow\Lambda_n\mid\Xi_n\Rightarrow\Omega_n$ such a list
taken from the proof-search tree for $\Rightarrow\psi\mid\Rightarrow$. It holds: 

\begin{claim}[1]
	For any $i\leq k$ and $j\leq n$, $\Gamma_i, \Theta_j\Rightarrow\Delta_i, \Lambda_j\mid\Pi_i, \Xi_j\Rightarrow\Sigma_i, \Omega_j$ is an axiomatic clause.
\end{claim}

To see this take a tree for $\varphi\Rightarrow\mid\Rightarrow$ and add $\psi$ to succedents of all 1-sequents in the tree. Due to context independence of all rules
it is a correct proof-search tree. Now to each leaf $\Gamma_i\Rightarrow\Delta_i, \psi\mid\Pi_i\Rightarrow\Sigma_i$ append a tree of $\Rightarrow\psi\mid\Rightarrow$ but with $\Gamma_i$ added to each antecedent
and $\Delta_i$ added to each succedent of 1-sequents, and similarly with $\Pi_i$ and $\Sigma_i$ in all 2-sequents. In the resulting proof-search tree we have leaves of the form $\Gamma_i, \Theta_j\Rightarrow\Delta_i, \Lambda_j\mid\Pi_i, \Xi_j\Rightarrow\Sigma_i, \Omega_j$
for all $i\leq k$ and $j\leq n$. If at least one of them is not axiomatic, then $\nvdash \varphi\Rightarrow\psi\mid\Rightarrow$. \qed

\1

Next for every $\Gamma_i\Rightarrow\Delta_i\mid\Pi_i\Rightarrow\Sigma_i, i\leq k$, define the following sets:

\1

$\Gamma_i' = \Gamma_i\cap (\bigcup\Lambda_j\cup\bigcup\Omega_j)$ for $j\leq n$

$\Delta_i' = \Delta_i\cap\bigcup\Theta_j$ for $j\leq n$

$\Pi_i' = \Pi_i\cap\bigcup\Omega_j$ for $j\leq n$

$\Sigma_i' = \Sigma_i\cap (\bigcup\Theta_j\cup\bigcup\Xi_j)$ for $j\leq n$

\1

Since every
$\Gamma_i, \Theta_j\Rightarrow\Delta_i, \Lambda_j\mid\Pi_i, \Xi_j\Rightarrow\Sigma_i, \Omega_j$ is axiomatic we are guaranteed that $\Gamma_i'\cup\Delta_i'\cup\Pi_i'\cup\Sigma_i'\neq \varnothing$.
Note also that $AT(\Gamma_i'\cup\Delta_i'\cup\Pi_i'\cup\Sigma_i') \subseteq AT(\varphi)\cap AT(\psi)$.

Now define an interpolant $Int(\varphi, \psi)$ for considered logics. For ${\bf I^1, I^2}$  it has the same form:

\1

\noindent $\bigwedge\Gamma_1' \wedge \bigwedge\neg\Sigma_1'\wedge \neg(\bigvee\neg\Pi_1'\vee\bigvee\Delta_1') \ \vee \ldots \vee \ \bigwedge\Gamma_k' \wedge \bigwedge\neg\Sigma_k'\wedge \neg(\bigvee\neg\Pi_k'\vee\bigvee\Delta_k')$, 

\1

\noindent where $\neg\Pi$ means the set of negations of all elements in $\Pi$.

\1

For ${\bf P^1, P^2}$ $Int(\varphi, \psi)$ is defined as:

\1

$\bigwedge\Pi_1' \wedge \bigwedge\neg\Delta_1'\wedge \neg(\bigvee\neg\Gamma_1'\vee\bigvee\Sigma_1') \ \vee \ldots \vee \ \bigwedge\Pi_k' \wedge \bigwedge\neg\Delta_k'\wedge \neg(\bigvee\neg\Gamma_k'\vee\bigvee\Sigma_k')$

\1

We can show that:

\begin{claim}[2]
	$Int(\varphi, \psi)$ is an interpolant for $\varphi\models_L\psi$. \qed
\end{claim}
\begin{proof}
As an example, we present the proof for BSC-I$^1$.  For the sake of proof let us recall that BSC-I$^1$ consists of the rules characterising $\wedge_C, \vee_C, \rightarrow_C$ and $\neg_H$. However, most of the rules necessary for conducting the proof are identical with respective rules from BSC-K$_3$, so the label $C$ in their names will be omitted in these cases for easier recognition where the specific rules (concretely  $(\mid\vee_C\Rightarrow)$ and $(\mid\Rightarrow\vee_C)$) are required.

Since for every $\bigwedge\Gamma_i'\wedge \bigwedge\neg\Sigma_i'\wedge \neg(\bigvee\neg\Pi_i'\vee\bigvee\Delta_i')$ all (negated) atoms are by definition taken from  $AT(\varphi)\cap AT(\psi)$ we must
only prove that BSC-I$^1$ $\vdash\varphi\Rightarrow Int(\varphi, \psi)\mid\Rightarrow$, and BSC-I$^1$ $\vdash Int(\varphi, \psi) \Rightarrow\psi\mid\Rightarrow$ (the same for BSC-I$^2$), and BSC-P$^1$ $\vdash\Rightarrow\mid\varphi\Rightarrow Int(\varphi, \psi)$ and BSC-P$^1$ $\vdash \Rightarrow \mid Int(\varphi, \psi) \Rightarrow\psi$ (and the same for BSC-P$^2$). 

Again take a complete proof-search tree for $\varphi\Rightarrow\mid\Rightarrow$ and add $Int(\varphi, \psi)$ to every succedent of 1-sequent. For every $\Gamma_i\Rightarrow\Delta_i, Int(\varphi, \psi)\mid\Pi_i\Rightarrow\Sigma_i$
apply $(\Rightarrow\vee\mid)$ to get $$\Gamma_i\Rightarrow\Delta_i, \bigwedge\Gamma_i'\wedge \bigwedge\neg\Sigma_i'\wedge \neg(\bigvee\neg\Pi_i'\vee\bigvee\Delta_i'), Int(\varphi, \psi)^{-i}\mid\Pi_i\Rightarrow\Sigma_i,$$ where $Int(\varphi, \psi)^{-i}$
is the rest of the disjunction (if any). Applying $(\Rightarrow\wedge\mid)$  we obtain three bisequents:

\1

(a) $\Gamma_i\Rightarrow\Delta_i, \bigwedge\Gamma_i', Int(\varphi, \psi)^{-i}\mid\Pi_i\Rightarrow\Sigma_i$

(b) $\Gamma_i\Rightarrow\Delta_i, \bigwedge\neg\Sigma_i', Int(\varphi, \psi)^{-i}\mid\Pi_i\Rightarrow\Sigma_i$

(c) $\Gamma_i\Rightarrow\Delta_i, \neg(\bigvee\neg\Pi_i'\vee\bigvee\Delta_i'), Int(\varphi, \psi)^{-i}\mid\Pi_i\Rightarrow\Sigma_i$.

\1

Systematically applying $(\Rightarrow\wedge\mid)$  to (a) we obtain 
$$\Gamma_i\Rightarrow\Delta_i, p, Int(\varphi, \psi)^{-i}\mid\Pi_i\Rightarrow\Sigma_i$$ for each $p\in \Gamma_i'$ and since $\Gamma_i'\subseteq\Gamma_i$ they are all axiomatic. Similarly with (b) but now we first obtain $\Gamma_i\Rightarrow\Delta_i, \neg p, Int(\varphi, \psi)^{-i}\mid\Pi_i\Rightarrow\Sigma_i$ for each $p\in \Sigma_i'$. After the application of $(\Rightarrow\neg\mid)$ we obtain $\Gamma_i\Rightarrow\Delta_i, Int(\varphi, \psi)^{-i}\mid p, \Pi_i\Rightarrow\Sigma_i$ which is axiomatic since $\Sigma_i'\subseteq\Sigma_i$.

For (c) we first apply $(\Rightarrow\neg\mid)$ and obtain $$\Gamma_i\Rightarrow\Delta_i, Int(\varphi, \psi)^{-i}\mid \bigvee\neg\Pi_i'\vee\bigvee\Delta_i', \Pi_i\Rightarrow\Sigma_i$$. By $(\mid\vee_C\Rightarrow)$ we obtain:

$$\bigvee\neg\Pi_i', \Gamma_i\Rightarrow\Delta_i, Int(\varphi, \psi)^{-i}\mid  \Pi_i\Rightarrow\Sigma_i$$ and 

$$\bigvee\Delta_i',\Gamma_i\Rightarrow\Delta_i, Int(\varphi, \psi)^{-i}\mid  \Pi_i\Rightarrow\Sigma_i$$

Systematic application of $(\vee\Rightarrow\mid)$ to the latter produces axiomatic bisequents $$p,\Gamma_i\Rightarrow\Delta_i, Int(\varphi, \psi)^{-i}\mid  \Pi_i\Rightarrow\Sigma_i$$ for each $p\in \Delta_i'$. Systematic application of $(\vee\Rightarrow\mid)$ to the former produces $\neg p,\Gamma_i\Rightarrow\Delta_i, Int(\varphi, \psi)^{-i}\mid  \Pi_i\Rightarrow\Sigma_i$ for each $p\in \Pi_i'$. After application of $(\neg\Rightarrow\mid)$ they also yield axiomatic sequents.
Hence we have a proof of $\varphi\Rightarrow Int(\varphi, \psi)\mid\Rightarrow$.

We have to do the same with a complete proof-search tree for $\Rightarrow\psi\mid\Rightarrow$ but now adding $Int(\varphi, \psi)$ to every antecedent of all 1-sequents in the tree. For every leaf 
$Int(\varphi, \psi), \Theta_j\Rightarrow\Lambda_j\mid\Xi_j\Rightarrow\Omega_j$  we apply $(\vee\Rightarrow|)$ to each disjunct of $Int(\varphi, \psi)$ until we get leaves: $\bigwedge\Gamma_1'\wedge \bigwedge\neg\Sigma_1'\wedge \neg(\bigvee\neg\Pi_1'\vee\bigvee\Delta_1'), \Theta_j\Rightarrow\Lambda_j\mid\Xi_j\Rightarrow\Omega_j$ \ldots 
$\bigwedge\Gamma_k'\wedge \bigwedge\neg\Sigma_k'\wedge \neg(\bigvee\neg\Pi_k'\vee\bigvee\Delta_k'), \Theta_j\Rightarrow\Lambda_j\mid\Xi_j\Rightarrow\Omega_j$.
To each such leaf we apply $(\wedge\Rightarrow\mid)$ obtaining bisequents of the form $\Gamma_i', \neg\Sigma_i', \neg(\bigvee\neg\Pi_i'\vee\bigvee\Delta_i'), \Theta_j\Rightarrow\Lambda_j\mid\Xi_j\Rightarrow\Omega_j$ for $i\leq k, j\leq n$.
In each case the application of $(\neg\Rightarrow\mid)$ yields
$\Gamma_i', \Theta_j\Rightarrow\Lambda_j\mid\Xi_j\Rightarrow\Omega_j, \Sigma_i', \bigvee\neg\Pi_i'\vee\bigvee\Delta_i'$. 
The application of $(\mid\Rightarrow\vee_C)$ to  $\bigvee\neg\Pi_i'\vee\bigvee\Delta_i' $ yields
$\Gamma_i', \Theta_j\Rightarrow\Lambda_j,  \bigvee\neg\Pi_i', \bigvee\Delta_i'\mid\Xi_j\Rightarrow\Omega_j, \Sigma_i'$. Systematic application of $(\Rightarrow\vee\mid)$ and $(\Rightarrow\neg\mid)$ gives leaves of the form
$\Gamma_i', \Theta_j\Rightarrow\Lambda_j,  \Delta_i'\mid\Pi_i', \Xi_j\Rightarrow\Omega_j, \Sigma_i'$. Since for every $i\leq k, j\leq n$, $\Gamma_i, \Theta_j\Rightarrow\Delta_i, \Lambda_j\mid\Pi_i, \Xi_j\Rightarrow\Sigma_i, \Omega_j$ is axiomatic these primed versions are 
axiomatic too. Assume the contrary, then it must be e.g. some $p\notin \Gamma_i'$ such that either $p\in\Gamma_i\cap\Lambda_j $ or $p\in \Gamma_i\cap \Omega_j$ (or for other pairs generating axioms). But it is impossible since by definition $\Gamma_i'$ must contain such $p$ (and the same for other cases of primed sets).

The proof for BSC-I$^2$ is identical since the only difference between these two logics is that ${\bf I^1}$ has Heyting's negation whereas in ${\bf I^2}$ it is Kleene's negation. But the two BSC rules for negation which are used in the proof are common to both negations.

The proof for ${\bf P^1, P^2}$ is dual to the above and uses the slightly different definition of $Int(\varphi, \psi)$ specified above. Again the two logics differ only with respect to negations, but the rules used in the proof are common to Bochvar's and Kleene's ones.	
\end{proof}

Eventually note that this proof may be applied also to other logics but in some cases it is convenient to extend their languages by addition of some other negations. This strategy of proving interpolation theorems for variety of quasi-relevant four-valued logics was examined in \cite{Indrzejczak3}. It can be successfully applied for three-valued logics as well; for example, interpolants for some logics can be defined as disjunctions of the following formulae:

\1

For ${\bf K_3}$ -- $\bigwedge\Gamma_i' \wedge \bigwedge\neg\Sigma_i'\wedge \bigwedge\neg_B\Delta_i'\wedge\bigwedge\neg_B\neg\Pi_i'$

For {\bf LP} -- $\bigwedge\Pi_i'\wedge \bigwedge\neg\Delta_i'\wedge \bigwedge\neg_H\Sigma_i'\wedge \bigwedge\neg_H\neg\Gamma_i'$

For ${\bf G_3}$ -- $\bigwedge\Gamma_i' \wedge \neg_H(\bigwedge\Pi_i'\rightarrow \bigvee \Sigma_i')\wedge\bigwedge\neg_B\Delta_i'$

For ${\bf G_3'}$ -- $\bigwedge\Pi_i'\wedge \bigwedge\neg_B\Delta_i'\wedge \bigwedge\neg_H\Sigma_i'\wedge \bigwedge\neg_B\neg_B\Gamma_i'$
 \end{proof}

\section{Conclusion}\label{Conclusion}

Bisequent calculi can be seen as one of the possible syntactical realizations of so called Suszko's thesis \cite{Suszko} in the treatment of many-valued logics. According to Suszko, every logic is two-valued in the sense that all values are divided into designated and non-designated, and this is reflected in the definition of consequence relation. In the case of bisequent calculi it is additionally made evident that two possible choices of designated values can be made. However, on a deeper level, a BSC is similar to other proposed formalizations mentioned in the Introduction. On one hand, bisequents resemble several labelled approaches where labels denote sets of values; a difference is that instead of labels a position of a formula in a bisequent is crucial, hence the method is strictly syntactical. On the other hand, 
there is a similarity with Avron's \cite{Avron} and Avron, Ben-Naim, and Konikowska's \cite{Avron07} sequent calculi with special rules defined for negated formulae; a difference is that BSC satisfies ordinary subformula property and purity conditions to the effect that in the schemata of rules only one (occurrence of a) connective is involved.  

The price is that instead of standard sequents we use a pair of them. There is one more general difference however. In the case of labelled calculi or Avron's SC we have the same input for 1- and 2-logics, whereas in BSC a different input for both classes of logics is defined; a 1- or a 2-sequent in a bisequent. A consequence of our choice is that for every pair of 1- and 2-logics with the same connectives (like e.g. {\bf K$_3$} and {\bf LP}) the rules and axioms are identical. In contrast, in other mentioned approaches for such pairs of related logics, the respective calculi must differ either with respect to some axioms (closure conditions in tableaux) or to rules. It seems that our solution where systems differ only with respect to the input is more natural and uniform.

Notice that the application of BSC may be extended in several ways. First, it may be applied also to several four-valued logics like {\bf B$_4$} proposed by Belnap \cite{Belnap77a,Belnap77b} and Dunn \cite{Dunn76}. Here in addition to 1, 0 we use $\bot $ for a gap (no value) and $\top $ for a glut (both). Designated values are 1 and $\top $.
The interpretation of bisequents in this setting is: $\Gamma\Rightarrow\Delta \mid \Pi\Rightarrow\Sigma$ is falsified by $h$ iff all elements of $\Gamma$ are either true or $\top$, all elements of $\Delta$ are either false or $\bot$, all elements of $\Pi$ are either true or $\bot$ and all elements of $\Sigma$ are either false or $\top$.  BSC for {\bf B$_4$} is like that for {\bf K$_3$} with two differences. Axioms are of the form $\Gamma\Rightarrow\Delta \mid \Pi\Rightarrow\Sigma$ with nonempty  $\Gamma\cap\Delta$ or $\Pi\cap\Sigma$ (the case of nonempty $\Gamma\cap\Sigma$ is not an axiom). 
We mentioned that the examination of BSC for a family of quasi-relevant four-valued logics built on the basis of {\bf B$_4$} has been already provided \cite{Indrzejczak3}. However, in contrast to the case of three-valued logics, in the field of four-valued ones we are not able to provide a fully uniform approach which covers logics having only one or three designated values.

Finally notice that the application of BSC may be extended to first-order languages.  In particular, we finish this paper with signalling a problem which is under investigation: the application of first-order BSC to formalisation of neutral free logics, and in particular to specific theories of definite descriptions based on Fregean ideas (see e.g. Lehmann \cite{Lehmann}, Stenlund \cite{Stenlund}). 
Sequent and tableau calculi for theories of definite descriptions built on positive and negative free logics have already been provided in \cite{Indrzejczak2020b,Indrzejczak2021c,IndZaw1}, but neutral free logic as the basis has not been investigated so far. Some proof systems for neutral free logics in the basic language with simple terms have only been examined in \cite{Lehmann} (tableaux) and \cite{PavlovicGratzl23} (generalised sequent calculus). On top of the latter we recently provided a cut-free formalisation of the basic free neutral theory of definite description \cite{IndPet}. The extension of this results to other non-trivial theories requires further study.

\bigskip

\end{document}